\DeclareFontFamily{U}{mathx}{}
\DeclareFontShape{U}{mathx}{m}{n}{<-> mathx10}{}
\DeclareSymbolFont{mathx}{U}{mathx}{m}{n}
\DeclareMathAccent{\widehat}{0}{mathx}{"70}
\DeclareMathAccent{\widecheck}{0}{mathx}{"71}
\DeclareMathOperator{\range}{range}
\newcommand{\lat}{\mathcal{R}}
\newcommand{\scl}{\mathcal{R}^{s}}
\newcommand{\n}{n_{\text{poly}}}
\newcommand{\sci}{\mathcal{A}^{s}}
\newcommand{\ctalo}{\tau_{\alpha_1}}
\newcommand{\ctalt}{\tau_{\alpha_2}}
\newcommand{\stalo}{\tau_{\alpha^{s}_{1}}}
\newcommand{\stalt}{\tau_{\alpha^{s}_{2}}}
\newcommand{\osum}{\sum_{R\alpha\in \mathcal{I}_{2,d}}}
\newcommand{\cml}[1]{{{\color{red}{[ML: #1]}}}}
\setlist[enumerate]{leftmargin=.5in}
\setlist[itemize]{leftmargin=.5in}
\crefname{hypothesis}{Hypothesis}{Hypotheses}
\title{Learning the local density of states of a bilayer moiré material in one dimension
\thanks{Submitted to the editors \today.
\funding{ML's and MH's research was supported in part by Simons Targeted Grant Award No. 896630.   DL's, AW's, and ML's research was supported in part by NSF DMREF Award No. 1922165. ML's research was supported in part by a Visiting Scholarship at the Simons Flatiron Institute. AW's research was supported in part by NSF grant DMS-2406981.}}}
\author{Diyi Liu\thanks{School of Mathematics, University of Minnesota—Twin Cities, Minneapolis, Minnesota 55455, USA
  (\email{liu00994@umn.edu}).}
 \and Alexander B. Watson\thanks{School of Mathematics, University of Minnesota—Twin Cities, Minneapolis, Minnesota 55455, USA
  (\email{abwatson@umn.edu}).}
 \and Michael Hott\thanks{School of Mathematics, University of Minnesota—Twin Cities, Minneapolis, Minnesota 55455, USA
  (\email{mhott@umn.edu}).}
\and Stephen Carr\thanks{Department of Physics, Brown University, Providence, Rhode Island 02912-1843, USA 
  (\email{stcarr.nj@gmail.com}).}
\and Mitchell Luskin\thanks{School of Mathematics, University of Minnesota—Twin Cities, Minneapolis, Minnesota 55455, USA
  (\email{luskin@umn.edu}).}
}
\newtheorem*{problem*}{Inverse problem}
\begin{document}

\maketitle

\begin{abstract} Recent work of three of the authors showed that the operator which maps the local density of states of a one-dimensional untwisted bilayer material to the local density of states of the same bilayer material at non-zero twist, known as the twist operator, can be learned by a neural network. In this work, we first provide a mathematical formulation of that work, making the relevant models and operator learning problem precise. We then prove that the operator learning problem is well-posed for a family of one-dimensional models. To do this, we first prove existence and regularity of the twist operator by solving an inverse problem. We then invoke the universal approximation theorem for operators to prove existence of a neural network capable of approximating the twist operator.
\end{abstract}

\begin{keywords}
  Electronic Structure, Tight Binding Model, Numerical Analysis, Operator Learning
\end{keywords}


\begin{MSCcodes}
68U10, 65J22,68Q32
\end{MSCcodes}

\section{Motivation and summary} \label{sec:mot_and_sum}

Since the discovery of correlated insulating phases and superconductivity in twisted bilayer graphene in 2018 \cite{Cao2018a,cao2018unconventional}, it has become clear that stackings of 2D materials with a relative interlayer twist provide an excellent platform for investigating quantum many-body electronic phases. The huge number of novel materials which can be realized in this way motivates the development of computational tools for efficiently predicting which 2D materials stackings, and which twist angles, are likely to support such phases.

The stacking-dependent local density of states (SD-LDOS), the local density of states defined on the space of stackings (configurations) rather than in real space, provides a useful probe of whether a given twisted bilayer is likely to support interesting quantum many-body electronic phases. This is because, although the SD-LDOS is computed entirely from a single-particle model, peaks in the SD-LDOS correspond to energies and stackings where electron-electron interactions have the potential to be enhanced. Peaks in the SD-LDOS can be viewed as the generalization of flat Bloch bands of effective moir\'e-scale continuum models to materials where no moir\'e-scale continuum model is available \cite{duality20}.


The basic challenge in predicting electronic properties such as the SD-LDOS of a twisted bilayer material is their lack of periodicity at generic (incommensurate) twist angles. Although powerful computational methods which leverage ergodicity of the twisted bilayer have been developed\cite{massatt2017electronic,massatt2017incommensurate,genkubo17,twistronics}, it remains much easier to compute the properties of untwisted stackings, which retain the periodicity of the monolayer. This is especially true for density functional theory approaches, to which approaches based on ergodicity have not yet been generalized (although see \cite{ZHOU201999,wang2024convergence,CANCES2013241}).

These considerations lead to a natural question: is it possible to predict electronic properties such as the SD-LDOS of a twisted bilayer material directly from the electronic properties of the same material at zero twist? If this were possible, it would allow for the discovery of new twisted bilayer materials likely to support interesting many-body phases relying only on relatively simple computations on untwisted bilayers. The same question can be posed more mathematically: do there exist operators which map properties of models of untwisted bilayer materials to those of the same bilayer materials at non-zero twist? Could these operators be numerically computed? In what follows, we will refer to such operators as twist operators.

Recent work of three of the authors \cite{Liu_2022} suggests that the answer to these questions may be yes. They considered a dataset consisting of pairs of images: on one hand, an image showing the numerically computed SD-LDOS for an untwisted one-dimensional bilayer material, and on the other, the numerically computed SD-LDOS for the same one-dimensional bilayer material, but twisted. Here, ``twisting'' a one-dimensional bilayer material refers to slightly shortening the lattice constant of one of the layers to create a lattice mismatch analogous to twisting a 2D bilayer. They then showed that it is possible to train a neural network on such a dataset so that the network accurately predicts the twisted SD-LDOS image corresponding to untwisted SD-LDOS images not in the training dataset. In other words, not only does a twist operator for the SD-LDOS exist, it is possible for a neural network to effectively learn it.  Our model and analysis can also be applied to 2D untwisted heterostructures \cite{TMDHubbard} directly where the lattice shortening models lattice mismatch.

The present work has two main goals. The first is to provide a mathematical formulation of \cite{Liu_2022}, where we describe the one-dimensional models considered in \cite{Liu_2022} in detail, introduce the twist operator for the SD-LDOS in a general setting, and then make the specific operator learning problem considered in \cite{Liu_2022} precise. The second goal is to provide insight into the results of \cite{Liu_2022} by initiating the rigorous theory of the twist operator for the SD-LDOS. More specifically, for a specific family of one-dimensional bilayer models, and when the SD-LDOS is approximated by a polynomial, we are able to prove existence of the twist operator relating untwisted SD-LDOS images to their twisted couterparts. The key step is to prove that the coefficients of the untwisted bilayer Hamiltonian can be recovered from untwisted SD-LDOS data. Moreover, we are able to show that the twist operator is continuous, and hence can be approximated by a neural network up to arbitrary accuracy by invoking the universal approximation theorem.

\subsection{Related work} \label{sec:rel}


The learning problem considered here and in \cite{Liu_2022} is an operator learning problem. Other operator learning problems have been considered in the mathematical literature; see, for example, \cite{batlle2024kernel,li2020fourier,li2020neural,boulle2023mathematical,lu2021learning}. Note that the task of learning the solution $u$ of a differential equation $\mathcal{L} u = f$ can be interpreted as learning the action of the operator $\mathcal{L}^{-1}$ \cite{khoo2021solving,hsieh2018learning}. 

The problem of learning the unknown Hamiltonian operator for a given quantum system from data has been considered in the physics literature, although we emphasize that these works do not develop rigorous mathematical theory justifying their methods. For example, using band structure data or other kinds of electronic observable data \cite{wang2021machine,schattauer2022machine,gu2023deeptb,trilearn23,khosravian2024hamiltonian}, or atomistic local environment data \cite{li2022deep,gong2023general,hegde2017machine,zhang2022equivariant}. 
The problem of predicting atomic reconstruction in materials using machine learning was considered in \cite{aditya2023wrinkles}. 
The problem of identifying materials likely to host flat bands using machine learning has also been considered in the physics literature; for example, see \cite{tritsaris2021computational,bhattacharya2023deep}. 

Some related mathematical work has been done in the setting of quantum information science, where the works \cite{ni2023quantum,li2023heisenberg, huang2023learning,mirani2024learning} have introduced rigorously efficient algorithms for learning unknown parameters of many-body Hamiltonians from dynamics data. It would be interesting to try to apply similar ideas to the inverse problem we consider here, but note that the problem is quite different, since here we consider spectral (density of states) data.

\subsection{Structure of this work} \label{sec:outline}

The structure of this work is as follows. In Section \ref{sec:models}, we introduce the class of discrete tight-binding models of one-dimensional bilayer materials from which the dataset of \cite{Liu_2022} was constructed. In Section \ref{sec:LDOS}, we define the stacking-dependent local density of states (SD-LDOS) and the concept of SD-LDOS images. In Section \ref{sec:twist_op}, we introduce the concept of the twist operator for SD-LDOS images considered in \cite{Liu_2022} and present our results proving existence and regularity of the twist operator under specific conditions. In Section \ref{sec:approximation_by_NN}, we apply the universal approximation theorem for operators to prove existence of a neural network approximating the twist operator. In Section \ref{sec:commen}, we discuss computing SD-LDOS images for commensurate twists, which is how the data in \cite{Liu_2022} were generated. We provide our Conclusion in Section \ref{sec:conc}.

\section{Mathematical formulation of the coupled chain model} \label{sec:models}
Consider two periodic atomic chains in parallel within a certain plane. 
For the coupled chain, the Bravais lattice for each of the two chains can be defined as
\begin{align*}
    \mathcal{R}_{i}&=\left \{ na_i : n \in \mathbb{Z}  \right \}, \quad i\in \{1,2 \},
\end{align*}
where $a_i$ is the lattice parameter for the $i$th chain. We assume $a_1=1$ and $a_2=1-\theta$ in the paper where $\theta<1$ is the lattice mismatch for the coupled chain. The atomic orbitals attached to the two chains are labelled by the indices,
\begin{align*}
    \Omega= (\mathcal{R}_1\times \mathcal{A}_1) \cup (\mathcal{R}_2\times \mathcal{A}_2),
\end{align*}
where $\mathcal{A}_i$ denotes the set of indices of atomic positions within each unit cell of chain $i$ and $m_i:=|\mathcal{A}_i|.$ Each atomic orbital is attached to an atom, and each atom is indexed by a unit cell and the shift in the
unit cell.  We define the unit cell for $i$-th chain as
\begin{align*}
    \Gamma_i&=\left \{  \beta a_i : \beta \in [0,1)  \right \}, \quad i\in \{1,2 \},
\end{align*}
where each atomic orbital labeled by $R_i\alpha_i \in \Omega$ for $R_i \in \mathcal{R}_i$ and $\alpha_i \in \mathcal{A}_i$ is located at $R_i+\tau_{\alpha_i}$ for shifts
$\tau_{\alpha_1} \in [0,1]$ and $  \tau_{\alpha_2} \in [0,1-\theta]$. 

\begin{figure}[ht]
    \centering\includegraphics[width=0.8\linewidth]{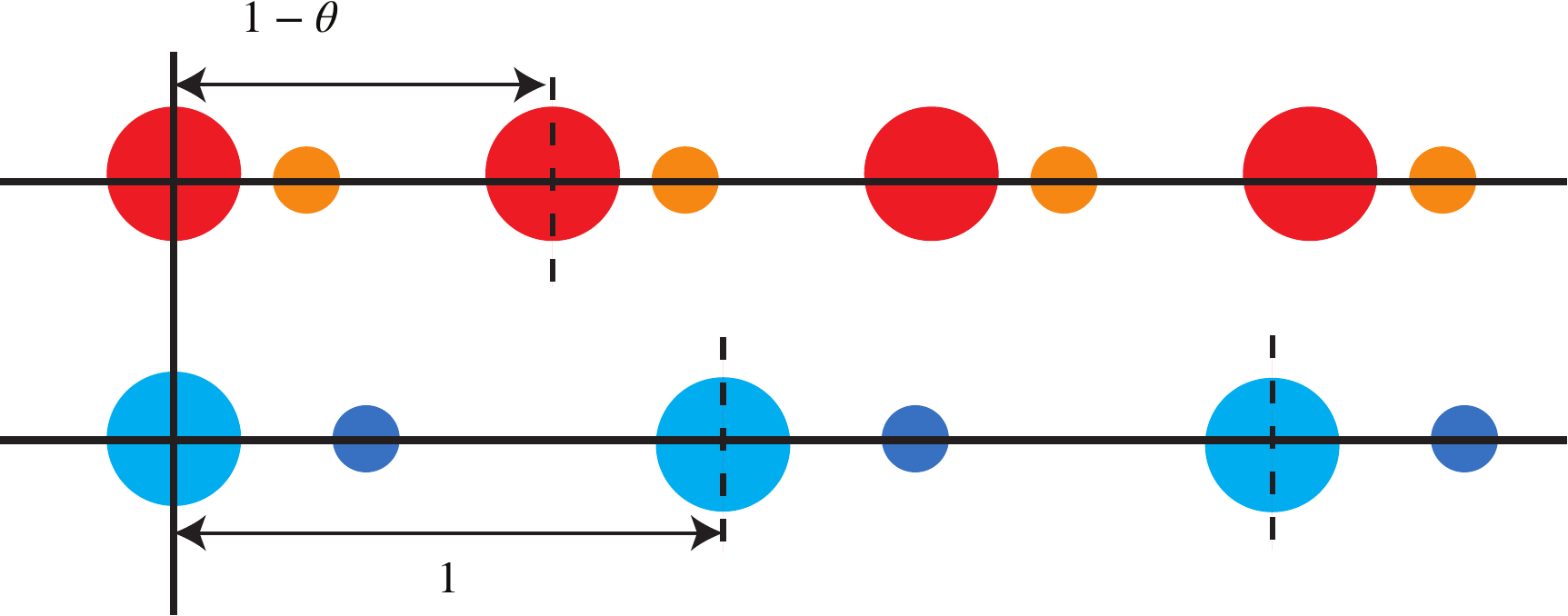}
    \caption{Schematic of the coupled chain system described in Section \ref{sec:models} with two orbitals per cell ($m_1 = m_2 = 2$), with a positive lattice mismatch $\theta > 0$ without interlayer shift ($d = 0$). When $\theta$ is irrational, the system has no exact periodic cell.}
    \label{fig:incom}
\end{figure}
The general coupled chain system is aperiodic when $\theta$ is irrational, but the emergence of a new scale called the \textit{moir\'e scale} can be observed, whose length is given by
$$a_{M}:=\left(\frac{1}{a_1}-\frac{1}{a_2}\right)^{-1}=\frac{1-\theta}{\theta}.$$ 
The disregistry function is defined by $$b_{1\to 2}(x):=(1-a_2a_1^{-1}) x      \mod \mathcal{R}_2,$$
where for any $x\in \mathbb{R}$, $x\mod\mathcal{R}_2 = \left \{ x-na_2 \ | \  n\in \mathbb{Z} \ \text{such that} \ 0\leq x-na_2<a_2 \right \}$. The disregistry function describes the local environment for an atomic orbital located at $x$. The moir\'e period 
is defined to be the period of the disregistry function, i.e., \  $b_{1\to 2}(a_M)=0$ \cite{massatt2017electronic, massatt2017incommensurate, relaxfosdick22, relaxphysics18, cazeaux2018energy, watson2022bistritzer}.

\subsection{The tight binding model}
In the tight-binding model \cite{massatt2017electronic, kaxiras_joannopoulos_2019,massatt2017incommensurate}, we model the wave function of an electron as an element $\psi$ of the Hilbert space
\begin{equation*}
     \ell^2(\Omega) = \left\{ (\psi(R_i\alpha_i))_{R_i\alpha \in \Omega} : \sum_{R_i\alpha_i \in \Omega} |\psi(R_i\alpha_i)|^2 < \infty \right\}.
\end{equation*}
The tight binding Hamiltonian $H$ is defined as the linear operator on $\ell^2(\Omega)$ acting as
\begin{equation} \label{eq:TB_H}
  \left(H\psi\right)(R\alpha)= \sum_{R'\,\alpha'\in \Omega} H(R\alpha,R'\alpha')  \psi(R'\alpha'),
\end{equation}
where $H(R\alpha,R'\alpha')$ is referred as the ``hopping energy,'' which models the interaction of the orbital associated with each atom. When $R\alpha, R'\alpha' \in \lat_i \times \mathcal{A}_i$, $H(R\alpha,R'\alpha')$ is referred as an intralayer hopping energy. Moreover, when $R\alpha \in \lat_i \times \mathcal{A}_i$ and $R'\alpha' \in \lat_j \times \mathcal{A}_j$ but $i\neq j$,  $H(R\alpha,R'\alpha')$ is referred as an interlayer hopping energy. 

The intralayer hopping energy for the coupled chain where $R \alpha, R'\alpha'\in \mathcal{R}_i\times \mathcal{A}_i$ can be modeled as
\begin{equation}\label{eq:hop0}
    H(R\alpha,R'\alpha') = \begin{cases}
       t_{\alpha,\alpha'},\, & R-R'=a_i, \\
       t_{\alpha',\alpha},\, & R-R'=-a_i ,\\
       \epsilon_{\alpha,\alpha'}, \, &R=R',
       \\ 0, \, &\text{else}.
    \end{cases}
\end{equation}
The constants $\epsilon_{\alpha,\alpha'}$ and $t_{\alpha,\alpha'}$ are given in $\mathbb{R}$ for a specific material system and atomical orbital index $\alpha,\alpha'$.  The interlayer hopping energy will be modeled by a smooth interlayer hopping function $h : \mathbb{R} \rightarrow \mathbb{R}$ such that 
\begin{equation}
\label{eq:hop1}
    H(R\alpha,R'\alpha')=h(|r|), \quad r=R+\tau_{\alpha}+d(-\delta_{i1}+\delta_{i2})-R'-\tau_{\alpha'},
\end{equation}
where $R \alpha \in \mathcal{R}_i\times \mathcal{A}_i$,  $R'\alpha'\in \mathcal{R}_j\times \mathcal{A}_j$ and $i\neq j$. The parameter $d$ represents a relative shift between the layers, illustrated in \cref{fig:incom2}. At incommensurate twist angles, averages of physical quantities over $R$, such as the density of states, can be efficiently represented as averages over local configurations $d$ using ergodicity \cite{massatt2017electronic,genkubo17}. We assume that $h$ decays exponentially in the sense that there exist positive constants $h_0, \gamma > 0$ such that 
\begin{equation}
\label{equ:h_cond}
    |h(x)| \leq h_0 e^{-\gamma |x|}.
\end{equation}

\begin{figure}[ht]
    \centering
    \includegraphics[width=0.8\linewidth]{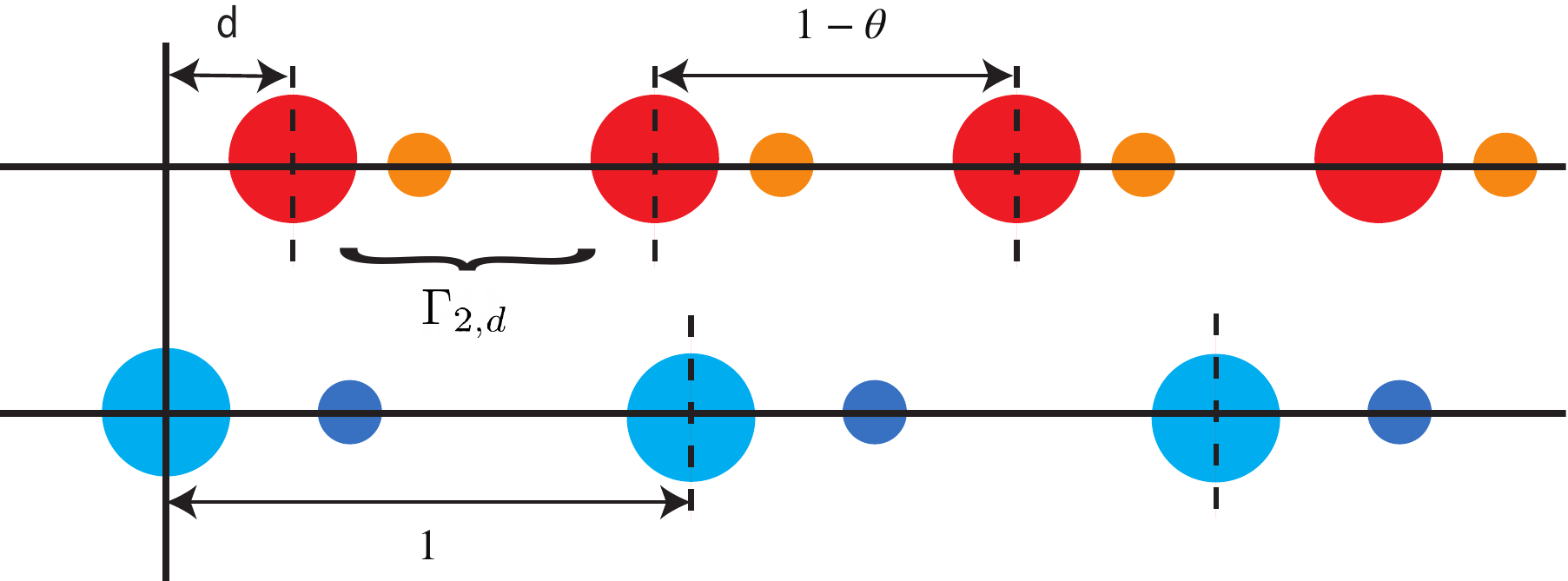}
    \caption{   Schematic of the coupled chain system described in Section \ref{sec:models} with two orbitals per cell ($m_1 = m_2 = 2$), with a positive lattice mismatch $\theta > 0$ with interlayer shift ($d \neq 0$).}
    \label{fig:incom2}
\end{figure}

Given the assumption on interlayer coupling function and intralayer coupling function, by Young's inequality \cite{folland1999real}, the tight binding Hamiltonian is a bounded operator on $\ell^{2}(\Omega)$; see, e.g., \cite{kong2023modeling}. To emphasize the dependence of $H$ on these parameters, we will also use $H_{\theta,d}$ to denote the tight binding Hamiltonian for the coupled chain with mismatch $\theta$ and shift $d$.

\section{Stacking-dependent local density of states: analysis and computation} 
\label{sec:LDOS}
The local density of states (LDOS) measures the density of eigenvalues and associated (generalized) eigenvectors of the Hamiltonian as a function of energy (eigenvalue) and position. It can be probed experimentally and is important for understanding the electronic properties of solids. The emergence of a singularity in the LDOS of twisted layered material, an indicator of a highly correlated physical system, give rise to the possibility of superconductivity and other interesting electronic properties \cite{Liu_2022}. The local density of states can also be used for computing the density of states, another important electronic property of materials \cite{massatt2017electronic,genkubo17}. Predicting the LDOS of a solid is a crucial task for condensed matter physics, quantum chemistry as well as quantum material science.

\subsection{Stacking-dependent local density of states}

We introduce notation for one distinguished, shifted unit cell in layer two
\begin{equation} \label{eq:home_cell}
    \Gamma_{2,d} :=\left \{ d+(1-\theta) \beta: \beta \in [0,1)  \right \},
\end{equation}
and for the orbitals within that cell
\begin{align*}
    \mathcal{I}_{2,d} := \{ R\alpha \in \lat_2\times \mathcal{A}_2 \ | \  R+\tau_{\alpha}+d \in \Gamma_{2,d}  \}.
\end{align*}

We can then introduce the smoothed stacking-dependent local density of states as
\begin{equation}
\label{equ:ldos}
        \rho_{\theta}(d,E):=\osum g( H_{\theta,d }-E)(R\alpha,R\alpha),
\end{equation}
where $g$ is a positive even polynomial approximating the Dirac delta function. We can choose $g$ to be an approximation of the normal distribution with variance inversely proportional to the square of the polynomial degree\cite{Weisse2006,massatt2017electronic}. The use of polynomial approximations to the delta function, constructed using Chebyshev polynomials, to compute the local density of states is known as the kernel polynomial method \cite{Weisse2006}. Note that we only consider the local density of states in a single cell. This is because the local density of states in other cells can be obtained by different choices of the shift $d$ \cite{massatt2017electronic}.


The degree of the polynomial $g$ will play a significant role in what follows and we denote this degree by $n_{\text{poly}}$. Since $g$ is a polynomial and $H_{\theta,d}$ depends smoothly on $\theta$ and $d$, the smoothed stacking-dependent local density of states is smooth as a function of $\theta, d$, and $E$.


\subsection{The local density of states image}

In practice, one generally only has access to the local density of states evaluated on a discrete grid of $d$ and $E$ values, sampled from the monolayer unit cell and an energy window of interest, respectively. This is because one can only perform finitely many experiments or computations. To model this, we introduce the concept of the local density of states image, defined as follows. We first introduce uniform grids of $d$ and $E$ values for $N_d, N_E \geq 2$
\begin{gather}
    \mathcal{X}:= \left \{ d_1, d_2, d_3 \dots d_{N_d} \right  \}, \quad d_i = \frac{(i-1) a_1}{N_d}, \quad 1 \leq i \leq N_d, \label{eq:d_vals} \\ 
    \mathcal{Y}:= \left \{ E_1, E_2, E_3 \dots E_{N_E} \right  \}, \quad E_j = E_1 + \frac{(j-1) (E_{N_E} - E_1)}{N_E}, \quad 1 \leq j \leq N_E, \label{eq:E_vals}
\end{gather}
where $E_1 < E_{N_E}$ are fixed. For example, we could choose $E_1$ and $E_{N_E}$ such that they bound the whole spectrum of $H_{\theta,d}$ for all values of $d$, or we could choose them to bound one important subset of the spectrum, such as energies near to the Fermi level \cite{ashcroft_mermin}. We refer to the set $\mathcal{X} \times \mathcal{Y}$ as the LDOS grid, and we define the local density of states image $\varrho_{\theta,\mathcal{X},\mathcal{Y}}$ as the smoothed stacking-dependent local density of states restricted to the LDOS grid $\mathcal{X} \times \mathcal{Y}$, i.e.,
\begin{equation} \label{eq:LDOSgrid}
    \varrho_{\theta,\mathcal{X},\mathcal{Y}} := \left( \varrho_{\theta,i,j} \right)_{1 \leq i \leq N_d, 1 \leq j \leq N_E} \in \mathbb{R}^{N_d \times N_E}, \quad \varrho_{\theta,i,j} := \rho_{\theta}(d_i,E_j), \quad (d_i,E_j) \in \mathcal{X} \times \mathcal{Y}.
\end{equation}
Note that the LDOS grid is fully specified by the choice of the LDOS grid parameters $N_d, E_1, E_{N_E},$ and $N_E$. Typical LDOS images can be seen in Figure \ref{fig:diagram}.

 
\section{Introduction to the twist operator}  \label{sec:twist_op}

The stacking-dependent local density of states can be computed efficiently when $\theta = 0$ for all values of $d$ and $E$, because, in this case, the system is periodic at the scale of the monolayer lattice constant. The computation is more difficult for $\theta \neq 0$ because periodicity at the lattice scale is generally broken. 

The central idea of the work \cite{Liu_2022} is that there may exist an operator, dubbed the \emph{twist operator}, mapping local density of states values calculated at $\theta = 0$ to their associated values at $\theta \neq 0$. In other words, the idea is that there may exist an operator
\begin{equation}
\begin{split} \mathcal{L}_{\theta } : \ C(\Gamma_2\times \mathbb{R}) &\rightarrow C(\Gamma_2\times \mathbb{R}), \\
 \rho_{0}(d,E)&\mapsto \rho_{\theta}(d,E),
\end{split}
\end{equation}
where $\rho_\theta$ is as in \eqref{equ:ldos}. More modestly, we may hope that, for a fixed LDOS grid $\mathcal{X} \times \mathcal{Y}$, there exists a discrete twist operator acting on local density of states images
\begin{equation}
\label{eq:numldos}
\begin{split} \Tilde{\mathcal{L}}_{\theta } : \  \mathbb{R}^{N_d\times N_E}&\rightarrow \mathbb{R}^{N_d\times N_E} ,\\
    \varrho_{0,\mathcal{X},\mathcal{Y}}&\mapsto \varrho_{\theta,\mathcal{X},\mathcal{Y}}.
\end{split}
\end{equation}
The work \cite{Liu_2022} provided evidence for the existence of such an operator \eqref{eq:numldos} by successfully training a neural network to approximate this map. The neural network was trained using a database of LDOS image pairs, each pair consisting of one LDOS image computed at $\theta = 0$, called the input image, and a second LDOS image computed with the same Hamiltonian with $\theta \neq 0$, called the output image. 

\subsection{Existence of the twist operator: an inverse problem perspective}

The work \cite{Liu_2022} also suggested a route to rigorously proving existence of such a map, i.e., to proving well-posedness of the learning problem addressed in \cite{Liu_2022}. The idea is that the twist operator could be defined as the composition of two maps. First, the inverse map from the LDOS function (image) at $\theta = 0$ to the set of parameters defining the tight-binding model. Then, the forward map from the set of parameters defining the tight-binding model to the LDOS function (image) at $\theta \neq 0$. Since the forward map is well-defined through \eqref{equ:ldos}, \emph{a sufficient condition for the twist operator to exist is existence of the inverse map from LDOS functions (images) to tight-binding parameters.} This idea is illustrated in Figure \ref{fig:diagram}. Another important consideration is the regularity of the twist operator map between LDOS functions (images). In particular, for the map to be rigorously learnable by a neural network, it must be continuous. Since the forward map \eqref{equ:ldos} is smooth, a sufficient condition for this to hold is continuity of the inverse map from LDOS functions (images) to model parameters.

In the following sections, we will make these ideas more precise (Section \ref{sec:twist_theory}), and then present two theorems establishing sufficient conditions for existence and continuity of the twist operator between LDOS images \eqref{eq:numldos} (Section \ref{sec:inverse}). We will then discuss theorems guaranteeing that the operator can be learned by a neural network (Section \ref{sec:approximation_by_NN}). 

\begin{figure}
    \centering
    \label{fig:theory}
\includegraphics[width=0.7\linewidth]{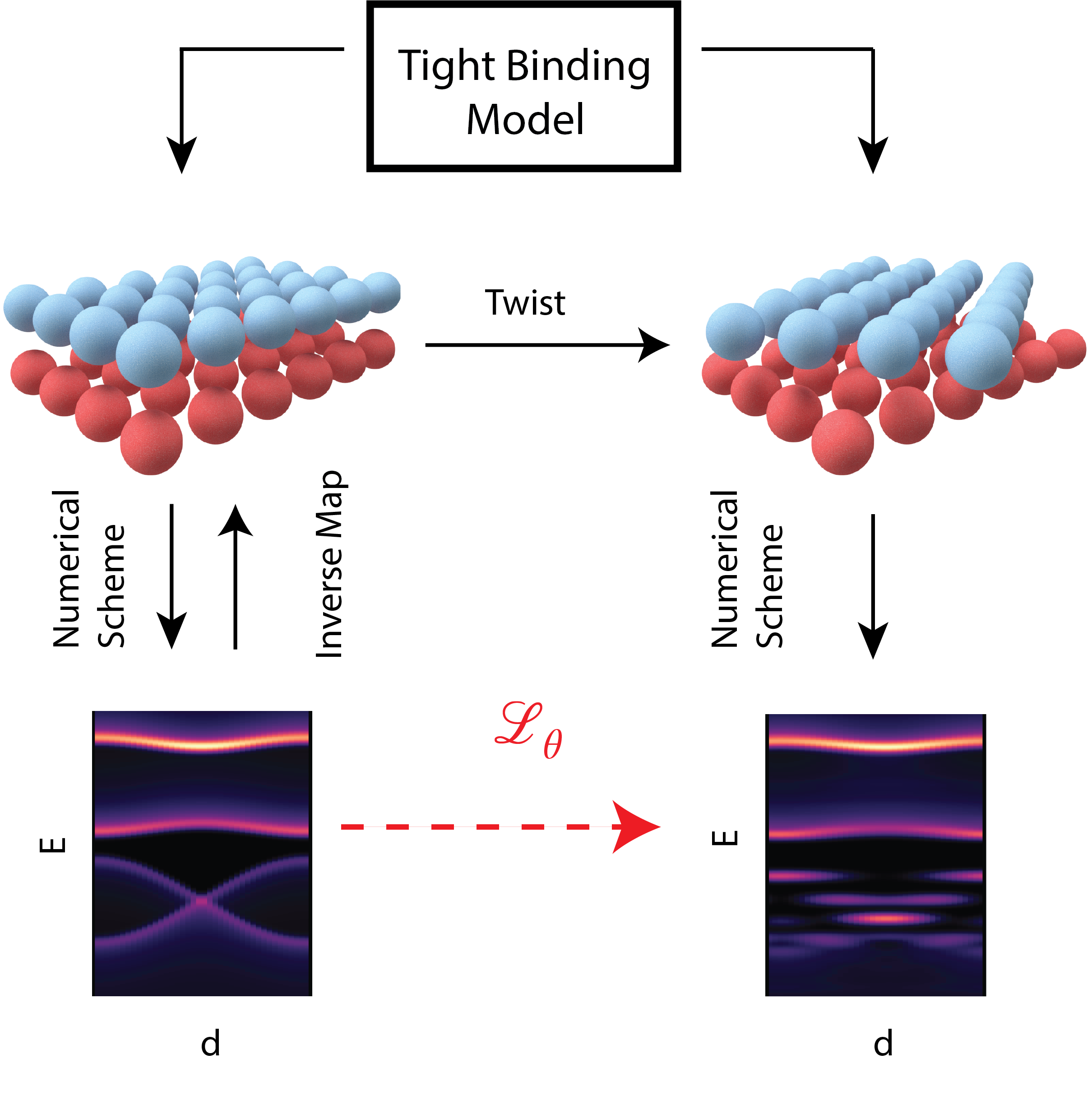}
    \caption{Schematic showing relationships between tight-binding models of untwisted and twisted structures and stacking-dependent LDOS images. The twist operator $\mathcal{L}_\theta$ learned in \cite{Liu_2022} maps from untwisted LDOS images to twisted LDOS images. It can be defined via the inverse map from untwisted LDOS images to tight-binding model parameters, composed with the forward map from these tight-binding model parameters to the LDOS image of the twisted bilayer tight-binding model. 
    }
    \label{fig:diagram}
\end{figure}

\subsection{Rigorous theory of the twist operator: precise inverse problem}
\label{sec:twist_theory}

We will establish existence and continuity of the twist operator for a more specific family of tight-binding models than given in Section \ref{sec:models}. We will discuss the generalization of the theory for other tight binding models in Remark~\ref{rmk:general}. We start by assuming that both chains have only one orbital per unit cell. The Hilbert space is then
\begin{equation} \label{eq:L2_again}
    \ell^2(\Omega) = \left\{ (\psi(R_i))_{R_i \in \Omega} : \sum_{R_i \in \Omega} |\psi(R_i)|^2 < \infty \right\}, \quad \Omega = \mathcal{R}_1 \cup \mathcal{R}_2.
\end{equation}
The tight-binding Hamiltonian $H$ is again given by \eqref{eq:TB_H}, where, for $R, R' \in \mathcal{R}_i$, i.e., for the intralayer part of the Hamiltonian, we have
\begin{equation}
    H(R,R') = \begin{cases} 
        t,\, & R-R'=a_i, \\
       t,\, & R-R'=-a_i, \\
       \epsilon, \, &R=R',
       \\ 0, \, &\text{else}.
    \end{cases}
\end{equation}
For $R \in \mathcal{R}_i, R' \in \mathcal{R}_j$ with $i \neq j$, i.e., for the interlayer part of the Hamiltonian, we have
\begin{equation} \label{eq:interlayer}
    H(R,R') = h(|r|), \quad r = R + d(-\delta_{i1}+\delta_{i2}) - R',
\end{equation}
where we assume the interlayer hopping function has the form
\begin{equation} \label{eq:inter}
    h(|r|) = \nu f\left(\frac{r}{l}\right)
\end{equation}
for some fixed decaying function $f : \mathbb{R} \rightarrow \mathbb{R}$. Once $f$ is specified, we see that the model depends on precisely five model parameters: $\epsilon, t, \nu, l,$ and $\theta$ (recall that the model depends on $\theta$ since the lattice constant of $\mathcal{R}_2$ is $1 - \theta$, which enters the interlayer hopping through \eqref{eq:interlayer}). At this point, we will assume that while $\epsilon$ is an arbitrary real number, the parameters $t, \nu$, and $l$ are all positive, i.e., we assume $(\epsilon, t, \nu, l) \in \mathbb{R} \times \mathbb{R}^3_+$, where $\mathbb{R}_+$ denotes the positive real half line. For any choice of model parameters and LDOS grid parameters $N_d, E_1, E_{N_E}$ and $N_E$, we can define LDOS images through \eqref{equ:ldos} and \eqref{eq:LDOSgrid}.

We can now pose the following precise inverse problem regarding the model when $\theta = 0$.
\begin{problem*}
    Suppose we are given an LDOS image on the LDOS grid with parameters $N_d, E_1, E_{N_E}$ and $N_E$ generated from the model \eqref{eq:L2_again}-\eqref{eq:inter} with some (unknown) choice of parameters $\epsilon, t, \nu$, and $l$, with $\theta = 0$, and $f$ fixed (and known). Does the LDOS image uniquely specify the model parameters $\epsilon, t, \nu$, and $l$? In other words, is the map from parameters $\epsilon, t, \nu$, and $l$ to LDOS images one-to-one? When the map is one-to-one so that the inverse map from LDOS images to model parameters exists, is the inverse map continuous?   
\end{problem*}

\begin{remark}
\label{rmk:general}
    In this work, we consider the inverse problem only for the specific tight-binding model introduced in this section. It is natural to ask whether our methods generalize to more general tight-binding models. For example, the intralayer coupling of two orbitals within two cells may not be vanishing if the distance between the two cells is larger than lattice parameter. There could be more than one orbital in each unit cell or for each atomic site. We could also consider systems in two dimensions and higher. For the first two cases, the analogous inverse problem would require solving a more complicated polynomial system and may require larger $N_E$ to establish a one-to-one mapping. For the last case, the dimensionality of the lattice will influence how many $d$-grid points are needed to make the inverse problem well-posed, but we expect that with just one orbital per unit cell this case is more tractable than the other two. Since the necessary calculations become much more complicated in these cases, we leave them to future works.
\end{remark}

In the following section, we will identify assumptions under which this inverse problem can be solved, and hence existence of the twist operator proved.

\subsection{Rigorous theory of the twist operator: solving the inverse problem}  \label{sec:inverse}

Our first theorem establishes that the inverse problem can be solved for a specific choice of $N_d$, sufficiently high resolution in energy $N_E$, when the interlayer hopping function $f(\cdot) = e^{- |\cdot|}$, assuming that the parameters are drawn from a compact subset of $\mathbb{R} \times \mathbb{R}_+^3$.
\begin{theorem}{\textbf{Existence and Stability of Inverse Map}}
\label{thm:in_specific}
    Assume $N_E > n_{\text{poly}}$, where $n_{\text{poly}}$ is the degree of the polynomial $g$ in \eqref{equ:ldos}. Let $E_{N_E} > E_{1}$ be otherwise arbitrary. Let $N_d \geq 3$, and assume that the points $0,\frac{1}{4},\frac{1}{2}$ are included in the $d$ grid $\mathcal{X}$. Thus, the LDOS grid is given by $\mathcal{X} \times \mathcal{Y}$, where
    \begin{equation}
        \left\{0,\frac{1}{4},\frac{1}{2}\right\} \subset \mathcal{X}\quad\text{and} \quad \mathcal{Y} = \left\{ E_1, ... , E_{N_E} \right\}.
    \end{equation}
    Assume further that the interlayer coupling function is given by \eqref{eq:inter}, with $f$ exponential, so that
\begin{equation*}
    h(|r|)=\nu e^{-\frac{|r|}{l}}.
\end{equation*}
    Finally, we assume that $(\epsilon, t, \nu,l)$ are constrained to a compact subset $\mathcal{P}_{\text{comp}} \subset \mathbb{R} \times \mathbb{R}_+^3$. In particular, we assume $t, \nu, l$ are strictly positive. Then, the mapping 
\begin{equation}
\begin{split}
        \mathcal{K}_1: 
        \mathcal{P}_{\text{comp}} &\rightarrow \mathbb{R}^{N_d\times N_E}, 
        \\ (\epsilon, t,\nu,l) & \mapsto \varrho_{0,\mathcal{X},\mathcal{Y}},
\end{split}
\end{equation}
    from model parameters to LDOS images, defined for each value of the model parameters by \eqref{equ:ldos}-\eqref{eq:LDOSgrid}, is one-to-one, and hence a bijection onto its range. We also have the following stability result. In the special case where $E_i + E_{N_E-i}=0$ for all $0\leq i \leq N_E$ and $N_E$ is odd, then
    \begin{equation} \label{eq:K_1_inv}
        \begin{split}|\mathcal{K}_1^{-1}|_{\infty} &\leq  \frac{C_{\mathcal{P}}}{2}\prod_{j=0}^{n_{poly}} \left( \sum_{i=0}^{n_{poly}-j} |a_{n_{poly}-i} C_{n_{poly}-i}^{n_{poly}-j-i}| \right) \\
        & \times \max_{\substack{i: E_i \geq 0}}\left \{ \left(1+\frac{1}{E_i} \right)\prod_{ \substack{j\neq i \\  E_j\geq 0}} \frac{1+E_j^2}{|E_i^2-E_j^2|}\right \}
        \end{split}
    \end{equation}
    where $C_{\mathcal{P}}$ is a nonzero constant depending on $\mathcal{P}_{comp}$ and $C_m^n$ is used to denote $ \frac{m!}{n! (m-n)!}$ for $ 0\leq n \leq m$. Note that since we can always shift the Hamiltonian by a constant so that its spectrum is centered at $0$, the condition $E_i + E_{N_E - i} = 0$ is not a significant restriction.
\end{theorem}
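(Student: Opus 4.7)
The plan is to decompose the forward map $\mathcal{K}_1$ as a composition of three invertible stages, prove injectivity of each, and then track Lipschitz constants to obtain the bound on $\mathcal{K}_1^{-1}$. The stages are: (i) Lagrange interpolation in $E$ at the $N_E$ grid nodes, which recovers the polynomial $E \mapsto \rho_0(d, E)$ for each fixed $d \in \mathcal{X}$; (ii) a triangular linear solve that reads off the spectral moments
\begin{equation*}
    M_m(d) := (H_{0,d}^m)(0,0) = \tfrac{1}{2}\int_0^{2\pi} [\lambda_+(\xi; d)^m + \lambda_-(\xi; d)^m]\,\tfrac{d\xi}{2\pi};
\end{equation*}
and (iii) a nonlinear solve that extracts $(\epsilon, t, \nu, l)$ from the values $M_1(d_i)$ and $M_2(d_i)$ at the three grid points $d_i \in \{0, 1/4, 1/2\}$.

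Stage (i) is justified because $g$ has degree $n_{\text{poly}}$ and $H_{0,d}$ is a bounded operator, so $\rho_0(d, E)$ is a polynomial in $E$ of degree $n_{\text{poly}}$ and is uniquely determined by its values at $N_E > n_{\text{poly}}$ distinct nodes. For stage (ii), expanding $g(H_{0,d} - E) = \sum_{k=0}^{n_{\text{poly}}} a_k (H_{0,d} - E)^k$ and matching coefficients of $E^j$ yields a triangular linear system from $(M_0(d), \ldots, M_{n_{\text{poly}}}(d))$ to the polynomial coefficients whose diagonal entries are nonzero multiples of $a_{n_{\text{poly}}}$; back-substitution inverts it.

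Stage (iii) is the algebraic heart. Since $\theta = 0$ both chains are $1$-periodic, so Bloch--Floquet decomposition block-diagonalizes $H_{0,d}$ into $2 \times 2$ blocks with eigenvalues $\lambda_\pm(\xi; d) = \epsilon + 2t\cos\xi \pm |F(\xi, d)|$, where $F(\xi, d) = \sum_{n \in \mathbb{Z}} h(|n - d|) e^{-i\xi n}$. Then $M_1(d) \equiv \epsilon$ by parity in $\lambda$, and Parseval gives
\begin{equation*}
    M_2(d) = \epsilon^2 + 2t^2 + \nu^2 \sum_{n \in \mathbb{Z}} e^{-2|n - d|/l}.
\end{equation*}
Evaluating the geometric sums at $d = 0, 1/4, 1/2$ produces the three closed-form expressions $\coth(1/l)$, $1/[2\sinh(1/(2l))]$ and $1/\sinh(1/l)$. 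Pairwise differences cancel $2t^2$, leaving two equations in $(\nu^2, l)$ whose ratio is a function of $l$ alone. I would prove this ratio is strictly monotone on $(0, \infty)$ by Taylor-expanding in the variable $q = e^{-1/l} \in (0,1)$ and comparing coefficient signs, giving a unique $l$; $\nu > 0$ and $t > 0$ then follow from the remaining equations combined with the positivity constraints built into $\mathcal{P}_{\text{comp}}$.

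For the quantitative bound, the symmetry $E_i + E_{N_E - i} = 0$ (with $N_E$ odd) lets $\rho_0(d, E)$ split into its even and odd parts in $E$: the even part is a polynomial in $E^2$ interpolated from the $\lceil N_E/2 \rceil$ nodes $\{E_i^2 : E_i \geq 0\}$, and the odd part is $E$ times such a polynomial. Standard Lagrange bounds on the resulting coefficient-recovery maps contribute the factor $\max_{i : E_i \geq 0}(1 + 1/E_i)\prod_{j \neq i,\, E_j \geq 0}(1 + E_j^2)/|E_i^2 - E_j^2|$, with $1/E_i$ absorbing the division by $E$ needed for the odd part. The row-sum norms of the triangular inverse in stage (ii) contribute $\prod_{j} \sum_{i} |a_{n_{\text{poly}}-i}\, C_{n_{\text{poly}}-i}^{n_{\text{poly}}-j-i}|$, and the Lipschitz constant of the moment-to-parameter inverse on $\mathcal{P}_{\text{comp}}$ is absorbed into $C_{\mathcal{P}}$; composition gives \eqref{eq:K_1_inv}. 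The main obstacle is the global monotonicity argument in stage (iii): local invertibility via the implicit function theorem is routine, but ruling out spurious roots $l' \neq l$ on all of $(0, \infty)$ and controlling the Lipschitz constant uniformly on the compact set $\mathcal{P}_{\text{comp}}$ constitute the technical core of the proof.
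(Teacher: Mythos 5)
Your plan is correct and follows essentially the same route as the paper: recover the Hamiltonian moments from the LDOS image by inverting a Vandermonde-type linear system in $E$, then solve for $(\epsilon,t,\nu,l)$ from $M_1$ and from $M_2(d)$ at $d=0,\tfrac14,\tfrac12$ using monotonicity in $l$ of the pairwise-difference ratio, with the triangular and Vandermonde factors composing to give \eqref{eq:K_1_inv}. The only cosmetic differences are that you compute $M_2(d)$ via Bloch--Floquet diagonalization and Parseval rather than the paper's direct real-space hop sum, and you propose to prove monotonicity of the ratio by a Taylor expansion in $q=e^{-1/l}$, whereas the paper simplifies it in closed form to $e^{1/(2l)}+e^{-1/(2l)}+1$, which is manifestly strictly monotone.
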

\begin{proof}
See proof in \cref{sec:in_specific}.
\end{proof}

\begin{remark}
We observe that the proof of the theorem is not restricted to the specific \( d \)-grid points \( 0, \frac{1}{4}, \frac{1}{2} \). In particular, the theorem holds for any set of \( d \)-points \( d_1, d_2, d_3 \), provided that the choice of these points ensures the values  
$$ S_{t,\nu,l}(d_1), S_{t,\nu,l}(d_2), S_{t,\nu,l}(d_3), $$ 
where $S_{t,\nu,l}(d)$ is defined in \eqref{equ:S},
uniquely determine the parameters \( t, \nu, l \). In particular, if the \( d \)-points satisfy the condition that  
\begin{equation}
\frac{e^{\frac{2d_2-2}{l}} + e^{\frac{-2d_2}{l}} - e^{\frac{2d_1-2}{l}} - e^{\frac{-2d_1}{l}}}{e^{\frac{2d_3-2}{l}} + e^{\frac{-2d_3}{l}} - e^{\frac{2d_2-2}{l}} - e^{\frac{-2d_2}{l}}}
\end{equation}
is a well-defined and strictly monotonic function with respect to \( l > 0 \), the proof remains valid for the new set of \( d \)-points. However, it is clear from how the parameters $t, \nu, l$ are recovered from the values of $S_{t,\nu,l}$ that the recovery of the parameters can be unstable if $d_1, d_2, d_3$ are chosen too close to each other. In particular, the constant $C_\mathcal{P}$ in \eqref{eq:K_1_inv} will blow up in this case. For more detail, see the discussion at the end of Appendix \ref{sec:in_specific}. 
\end{remark}

\begin{remark}
\label{rmk:invesesta}
    The bound on the norm $|\mathcal{K}_1^{-1}|_{\infty}$ \eqref{eq:K_1_inv} measures the stability of the map from LDOS images to the tight binding model. For the LDOS of a certain energy range, if the energy grid $\mathcal{Y}$ is fine enough, the problem becomes unstable. When $\Delta E$ is fixed, the upper bound can grow exponentially with respect to $n_{poly}$, i.e. for very sharply peaked approximations to the Dirac delta function. We note that the condition number of $\mathcal{K}_1^{-1}$ is not included in the result, which is necessary for the relative error analysis. The reason is that the relative error analysis is equivalent to the analysis of $
    |\mathcal{K}_1^{-1}|_{\infty}$ under the assumption that the tight binding model parameters are within a compact subset away from origin.
\end{remark}

The proof of Theorem \ref{thm:in_specific} relies on the fact that the local density of states \eqref{equ:ldos} has finite degree with respect to energy $E$. This means that the coefficient functions of the polynomial in $E$ can be recovered when we have LDOS values for greater than $n_{\text{poly}}$ different values of $E$. The tight binding parameters can then be recovered because of the special form of the interlayer coupling function. 

Our second theorem establishes that the inverse problem can be solved under different assumptions. We allow for a quite general class of interlayer coupling functions under the assumption that the parameter space is discretized and finite.
\begin{theorem}
\label{thm:in_general}
    Suppose that the parameter space is discrete and finite, i.e., we assume that the model parameters $\epsilon, t, \nu, l$ are drawn from a known finite subset $\mathcal{P}_{\text{fin}} \subset \mathbb{R} \times \mathbb{R}^3_+$. 
    Then, suppose that the interlayer hopping function $h(r):=\nu f(\frac{r}{l})\chi_{[-r_0,r_0+1)}$, where $\chi_I$ is the characteristic function for the interval $I$,
    for some fixed integer $r_0 > 1$, where $f$ is an 
    analytic function on a domain $\mathcal{D}$ with $\mathbb{R} \subset  \mathcal{D}\subseteq \mathbb{C}$, positive when restricted to the real line, and satisfying 
    \begin{equation} \label{eq:f_cond}
    \lim_{r\to \infty} \frac{f(\frac{r}{l})}{f(\frac{r}{l'}) } = 0, \quad \text{for any \ $0<l < l'$}.
\end{equation}
    Suppose that $N_E>n_{\text{poly}}$ and that $E_{N_E} > E_1$ but are otherwise arbitrary. Then, there exists a positive integer $N_0$ such that for any configuration space resolution $N_d>N_0$, the mapping 
\begin{equation}
\begin{split}
        \mathcal{K}_2: 
        \mathcal{P}_{\text{fin}} &\rightarrow \mathbb{R}^{N_d\times N_E}, 
        \\ (\epsilon, t,\nu,l) & \mapsto \varrho_{0,\mathcal{X},\mathcal{Y}},
\end{split}
\end{equation}
is one-to-one and hence a bijection onto its range.
\label{thm:inverseg2}
\end{theorem}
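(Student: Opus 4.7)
The plan is to reduce the injectivity of $\mathcal{K}_2$ to the injectivity of the induced map from parameter tuples to the \emph{moment functions} $d \mapsto M_k(d)$ on the full interval $(0,1)$, and then to lift from the finite grid $\mathcal{X}$ to these functions using the finiteness of $\mathcal{P}_{\text{fin}}$ together with the analyticity of each $M_k(\cdot)$. Because $g$ in \eqref{equ:ldos} is a polynomial of degree $n_{\text{poly}}$, for each fixed $d$ the map $E \mapsto \rho_0(d,E)$ is a polynomial in $E$ of degree at most $n_{\text{poly}}$, uniquely determined by its values on the $N_E > n_{\text{poly}}$ nodes in $\mathcal{Y}$. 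Hence coincidence of two LDOS images on $\mathcal{X} \times \mathcal{Y}$ is equivalent to coincidence of the moments $M_k(d) := \sum_{R\alpha \in \mathcal{I}_{2,d}} H^k(R\alpha,R\alpha)$ at every $d \in \mathcal{X}$ for all $k \leq n_{\text{poly}}$.

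Under the single-orbital assumption and $\theta = 0$, the set $\mathcal{I}_{2,d}$ contains only the orbital indexed by $R = 0$ in $\mathcal{R}_2$, so enumerating closed walks of length one and two starting from this orbital yields
\begin{equation*}
    M_1(d) = \epsilon, \qquad M_2(d) = \epsilon^2 + 2t^2 + \nu^2 S(d; l),
\end{equation*}
where $S(d; l) := \sum_{m \in \mathbb{Z},\, |m-d| < r_0+1} f(|m-d|/l)^2$. The first moment identifies $\epsilon$, so it suffices to consider two tuples $(\epsilon, t_A, \nu_A, l_A) \neq (\epsilon, t_B, \nu_B, l_B)$ in $\mathcal{P}_{\text{fin}}$ sharing $\epsilon$ and producing identical $M_2$ on $(0,1)$; equivalently, $\nu_A^2 S(\cdot; l_A) - \nu_B^2 S(\cdot; l_B) \equiv 2(t_B^2 - t_A^2)$. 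When $l_A = l_B$, the non-constancy of $S(\cdot; l)$ on $(0,1)$---which follows because \eqref{eq:f_cond} precludes $f$ from being constant---forces $\nu_A = \nu_B$ and hence $t_A = t_B$, contradicting inequality.

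The main obstacle is the case $l_A \neq l_B$, which I plan to resolve by Taylor expansion about $d = 1/2$. Two structural features of $S(\cdot; l)$ are essential: the symmetry $S(d; l) = S(1-d; l)$ (verified by the reindexing $m \mapsto 1-m$ in the summation), which forces all odd Taylor coefficients at $d = 1/2$ to vanish; and the chain rule applied to the linear maps $d \mapsto (m-d)/l$ and $d \mapsto (d-m)/l$, which gives the explicit formula
\begin{equation*}
    S^{(2k)}(1/2; l) = \frac{2}{l^{2k}} \sum_{j=0}^{r_0} (f^2)^{(2k)}\!\left(\frac{j + 1/2}{l}\right).
\end{equation*}
Matching $(d - 1/2)^{2k}$ coefficients in the assumed identity forces $\nu_A^2 S^{(2k)}(1/2; l_A) = \nu_B^2 S^{(2k)}(1/2; l_B)$ for every $k \geq 1$; forming the ratio for two distinct $k, k'$ eliminates $\nu$ and produces an identity $\Psi_{k,k'}(l_A) = \Psi_{k,k'}(l_B)$, where $\Psi_{k,k'}(l)$ behaves as $l^{2(k'-k)}$ modulated by a ratio of finite sums of $(f^2)^{(2k)}$ and $(f^2)^{(2k')}$ evaluated at the half-integer shifts $(j+1/2)/l$. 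The asymptotic condition \eqref{eq:f_cond}, which encodes the scale $l$ in the decay rate of $f(\cdot/l)$, together with the analyticity of $f$, guarantees that for some $(k,k')$ the map $\Psi_{k,k'}$ separates any two distinct $l$-values from the finite set arising in $\mathcal{P}_{\text{fin}}$. Should $M_2$ alone not suffice for some pathological $f$, the higher moments $M_k$ for $3 \leq k \leq n_{\text{poly}}$ supply additional $d$-dependent invariants built from mixed $\epsilon^a t^b \nu^c$ contributions, and again only finitely many comparisons need to succeed.

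Once function-level injectivity is established, the passage to the finite grid is routine. For each pair $p_A \neq p_B$ in $\mathcal{P}_{\text{fin}}$, the set $U_{AB} := \{d \in (0,1) : M_k^A(d) \neq M_k^B(d) \text{ for some } k \leq n_{\text{poly}}\}$ is open (by analyticity of each $M_k$ on $(0,1)$) and nonempty, so it contains an open interval of some length $\delta_{AB} > 0$. Setting $\delta := \min_{A \neq B} \delta_{AB}$, which is positive by finiteness of $\mathcal{P}_{\text{fin}}$, and $N_0 := \lceil 1/\delta \rceil$ ensures that for any $N_d > N_0$ the uniform grid $\mathcal{X}$ intersects every $U_{AB}$. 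At such an intersection point $d^\star$, the polynomials $E \mapsto \rho_0^A(d^\star, E)$ and $E \mapsto \rho_0^B(d^\star, E)$ are distinct polynomials of degree at most $n_{\text{poly}}$; since $N_E > n_{\text{poly}}$, they must disagree at some $E_j \in \mathcal{Y}$, giving $\mathcal{K}_2(p_A) \neq \mathcal{K}_2(p_B)$.
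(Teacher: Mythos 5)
Your reduction from LDOS images to the Hamiltonian moments, and the identification $M_1=\epsilon$, $M_2 = \epsilon^2 + 2t^2 + \nu^2 S(d;l)$, match the paper's opening. Your final grid-to-function passage (finding an open interval where the moment functions disagree, and choosing $N_d$ so the uniform grid hits it, with $\delta>0$ by finiteness of $\mathcal{P}_{\text{fin}}$) is a valid and in fact slightly more elementary alternative to the paper's device, which compares finite-difference quotients against derivatives and invokes a uniform bound $c_1$ on $\partial_d^2 S$ and a positive lower bound $c_0$ on $\max_d|\partial_d S_{t,\nu,l}-\partial_d S_{t',\nu',l'}|$. Both endgames use finiteness of the parameter set in the same essential way.

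However, there is a genuine gap in the core step: the function-level injectivity for the case $l_A \neq l_B$. Your proposed argument expands $S(\cdot;l)$ in a Taylor series at $d=1/2$, uses the (plausible but unchecked) reflection symmetry to kill odd coefficients, and then claims that comparing ratios $\Psi_{k,k'}(l)$ of even Taylor coefficients ``separates any two distinct $l$-values.'' This claim is not established; you describe the behavior of $\Psi_{k,k'}$ only heuristically (``behaves as $l^{2(k'-k)}$ modulated by\ldots''), and you explicitly hedge with ``should $M_2$ alone not suffice for some pathological $f$,\ldots'' — acknowledging the argument is incomplete. There is no reason offered why \eqref{eq:f_cond}, a one-sided asymptotic statement, should control the infinite family of pointwise derivative values $(f^2)^{(2k)}\bigl((j+\tfrac12)/l\bigr)$ entering those ratios. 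The paper avoids this difficulty entirely: it observes that $S_{t,\nu,l}$ and $S_{t',\nu',l'}$ are real-analytic in $d$, so equality on $[0,1]$ propagates to all of $[0,\infty)$; then $d\to\infty$ with the decay of $f$ forces $t=t'$, after which the ratio hypothesis \eqref{eq:f_cond} applied to $\sum_n f^2((d-n)/l)/\sum_n f^2((d-n)/l')$ as $d\to\infty$ forces $l=l'$, and positivity of $f$ forces $\nu=\nu'$. That continuation-to-infinity step is the key idea your attempt is missing; your local Taylor-at-$1/2$ approach never brings the large-$d$ hypothesis \eqref{eq:f_cond} into play in a usable form.

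A secondary, smaller point: you assert that non-constancy of $f$ (which does follow from \eqref{eq:f_cond}) implies non-constancy of the finite sum $S(\cdot;l)$ on $(0,1)$. This is not immediate — a finite sum of translates of a non-constant function can be constant — and is not needed under the paper's argument, which handles $l_A=l_B$ and $l_A\neq l_B$ uniformly via the limit at infinity. If you retain the case split you should justify this.
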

\begin{proof}
See proof in \cref{sec:in_general}.
\end{proof}

\begin{remark}
We assume the parameter space is discrete and finite in Theorem~\ref{thm:inverseg2}, which looks less general compared to the continuous parameter space assumption in Theorem~\ref{thm:in_specific}. However, we are compensated by the weaker assumption on the interlayer coupling function in Theorem~\ref{thm:inverseg2}. Moreover, this doesn't mean that the assumption on the discrete parameter set is not practical since the size of dataset for 2D material in certain environments is finite. The discrete condition is necessary because it guarantees that a positive and finite $c_0$ exist in \cref{equ:c0}. For any discrete parameter space or continuum parameter space, the theorem still holds if a positive and finite $c_0$ exists for the parameter space and the interlayer coupling function $f$. 
\end{remark}
\begin{remark}
Compared with Theorem~\ref{thm:in_specific}, the number of $d$-points needed is not explicitly given in statement of Theorem~\ref{thm:inverseg2}. However, the proof illustrates that for uniform $d$-grid points, the number of $d$ points $N_d>\max\{ 1,\frac{3c_1}{c_0}\}$ is enough for the existence of a one-to-one mapping. Here $c_0$ and $c_1$ are the two constants defined in \cref{equ:c0} and \cref{equ:c1}. Note that $c_0$ could become small in the continuum limit of the parameter space. The fact that $N_d$ blows up in this case reflects the necessity of the discreteness of the parameter space for our proof of Theorem \ref{thm:in_general} to go through.
\end{remark}

\begin{remark}\label{rem:stab_analytic}
Note that we do not carry out a detailed stability analysis for the map $\mathcal{K}_2$ as we did for $\mathcal{K}_1$ in Theorem \ref{thm:in_specific}. There are two potential sources of instability: in solving for moments of the Hamiltonian from LDOS images, and then in solving for the model parameters from moments of the Hamiltonian. In the first case, the analysis would be exactly the same as in Theorem \ref{thm:in_specific}, so we omit this here. In the second case, it is hard to carry out a stability analysis because we obtain existence of the mapping from moments of the Hamiltonian to model parameters by an argument by contradiction relying on analyticity of the interlayer hopping function $f$ (see Appendix \ref{sec:in_general}). It would be interesting to see whether alternative assumptions could be made here which would allow for such a stability analysis.
\end{remark}

The assumptions of the theorem are satisfied, for example, if $f(\cdot) = e^{- |\cdot|^2}$, and $\mathcal{P}_{\text{fin}}$ is a bounded uniform (equally spaced) grid, but uniformity is not necessary.

The proof of Theorem \ref{thm:in_general} is along the same lines as that of Theorem \ref{thm:in_specific}, except that one-to-oneness is proved by contradiction using the complex-analytic properties of $f$ and \eqref{eq:f_cond}.

With existence of the inverse map established either through Theorem \ref{thm:in_specific} or Theorem \ref{thm:in_general}, continuity of the inverse map is guaranteed by the following. Note that the result is trivial in the case of $\mathcal{K}_2^{-1}$ because the domain of $\mathcal{K}_2$ is discrete. 
\begin{theorem}
\label{thm:invcon}
    Let $\mathcal{K}_1$ and $\mathcal{K}_2$ be as in Theorem \ref{thm:in_specific} and Theorem \ref{thm:in_general}, respectively. Then, the inverse mappings $\mathcal{K}_1^{-1} : \range(\mathcal{K}_1) \rightarrow \mathcal{P}_{\text{comp}}$ and $\mathcal{K}_2^{-1} : \range(\mathcal{K}_2) \rightarrow \mathcal{P}_{\text{fin}}$ are continuous.
\end{theorem}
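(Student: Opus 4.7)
The plan is to reduce both continuity claims to standard topological facts, the main one being that a continuous bijection from a compact space to a Hausdorff space is automatically a homeomorphism. The substantive content is then the continuity of the forward maps $\mathcal{K}_1$ and $\mathcal{K}_2$, which follows readily from the formulas \eqref{equ:ldos}--\eqref{eq:LDOSgrid} defining the LDOS image. I would handle the two cases separately.

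For $\mathcal{K}_2^{-1}$, the argument is essentially trivial. Since $\mathcal{P}_{\text{fin}}$ is a finite set, so is $\range(\mathcal{K}_2) \subset \mathbb{R}^{N_d \times N_E}$. A finite subset of a Hausdorff space carries the discrete topology in the subspace sense, so every singleton in $\range(\mathcal{K}_2)$ is open. Hence the preimage under $\mathcal{K}_2^{-1}$ of any subset of $\mathcal{P}_{\text{fin}}$ is a union of open singletons in $\range(\mathcal{K}_2)$, which is open. Thus $\mathcal{K}_2^{-1}$ is continuous.

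For $\mathcal{K}_1^{-1}$, I would first argue that $\mathcal{K}_1$ itself is continuous on $\mathcal{P}_{\text{comp}}$. Fix $(d_i, E_j) \in \mathcal{X} \times \mathcal{Y}$ and consider $\varrho_{0,i,j} = \rho_0(d_i, E_j) = \sum_{R\alpha \in \mathcal{I}_{2,d_i}} g(H_{0,d_i} - E_j)(R\alpha, R\alpha)$. Because $g$ is a polynomial of degree $n_{\text{poly}}$, each diagonal matrix element on the right is a finite sum of products of matrix entries of $H_{0,d_i}$; each such entry is either an intralayer constant ($\epsilon$ or $t$) or an interlayer hopping $\nu f((R + d_i(-\delta_{i1}+\delta_{i2}) - R')/l)$, all of which are continuous (indeed smooth) functions of $(\epsilon, t, \nu, l) \in \mathbb{R} \times \mathbb{R}^3_+$. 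Summability of the interlayer sums over the infinite lattice is guaranteed uniformly on $\mathcal{P}_{\text{comp}}$ by the exponential decay of $f$ (and the compactness of the parameter range for $l$ away from $0$), which also legitimizes exchanging the limit with the sum and thus passing continuity from each term to the total. Hence each component of $\mathcal{K}_1$ is continuous, so $\mathcal{K}_1$ is continuous. Now $\mathcal{K}_1 : \mathcal{P}_{\text{comp}} \to \range(\mathcal{K}_1) \subset \mathbb{R}^{N_d \times N_E}$ is a continuous bijection (by Theorem \ref{thm:in_specific}) from a compact space into a Hausdorff space. By the closed map lemma, every closed $C \subset \mathcal{P}_{\text{comp}}$ is compact, so $\mathcal{K}_1(C)$ is compact and hence closed in $\range(\mathcal{K}_1)$; thus $\mathcal{K}_1$ is a closed map, and $\mathcal{K}_1^{-1}$ is continuous.

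There is essentially no hard step here; the only non-cosmetic work is justifying continuity of $\mathcal{K}_1$, and even this is straightforward given the exponential decay condition \eqref{equ:h_cond} and compactness of $\mathcal{P}_{\text{comp}}$ (in particular, boundedness of $l$ away from $0$), which together give a uniformly convergent expression for each matrix element of $H_{0,d_i}^k$ entering the polynomial $g$. I would note in passing that in the special case of Theorem \ref{thm:in_specific} with symmetric energy grid, the quantitative bound \eqref{eq:K_1_inv} already implies local Lipschitz continuity of $\mathcal{K}_1^{-1}$ on its range, so the conclusion strengthens the qualitative continuity obtained above.
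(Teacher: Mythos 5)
Your proof is correct and uses essentially the same approach as the paper: the $\mathcal{K}_2$ case is handled by noting the discreteness of the domain, and the $\mathcal{K}_1$ case invokes the closed map lemma for a continuous bijection from a compact set to a Hausdorff space (the paper phrases this via closed sets $M = \tilde{M} \cap \mathcal{P}_{\text{comp}}$ and images thereof, but the underlying fact is the same). Your additional justification that $\mathcal{K}_1$ is continuous --- via uniform convergence of the interlayer sums on $\mathcal{P}_{\text{comp}}$ --- is a detail the paper assumes without comment and is a worthwhile addition, as is the observation that the bound \eqref{eq:K_1_inv} already gives a quantitative (Lipschitz) strengthening in the symmetric-grid case.
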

\begin{proof}
See proof in \cref{sec:invcon}.
\end{proof}

As discussed above, existence and continuity of the twist operator on LDOS images \eqref{eq:numldos} now follows. We summarize this result in the following.
\begin{theorem}
\label{thm:twist_op_con}
    Assume the conditions either of Theorem \ref{thm:in_specific} or of Theorem \ref{thm:in_general}, and let $\mathcal{K}$ be defined by $\mathcal{K}_1$ or $\mathcal{K}_2$ depending on the case. Note that we always have $\range(\mathcal{K}) \subset \mathbb{R}^{N_d \times N_E}$. Then, for any fixed $0 \leq \theta < 1$, there exists an operator
\begin{equation}
\begin{split}
    \tilde{\mathcal{L}}_\theta: \range(\mathcal{K}) &\rightarrow \mathbb{R}^{N_d\times N_E}, 
        \\ \varrho_{0,\mathcal{X},\mathcal{Y}} &\mapsto \varrho_{\theta,\mathcal{X},\mathcal{Y}}.
\end{split}
\end{equation}
Furthermore, this operator is continuous.
\end{theorem}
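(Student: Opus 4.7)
The plan is to construct $\tilde{\mathcal{L}}_\theta$ explicitly as the composition $\tilde{\mathcal{L}}_\theta := \mathcal{F}_\theta \circ \mathcal{K}^{-1}$, where $\mathcal{F}_\theta$ denotes the forward map from model parameters to LDOS images at twist $\theta$ (so $\mathcal{K} = \mathcal{F}_0$). Existence is then almost immediate: by Theorem \ref{thm:in_specific} or Theorem \ref{thm:in_general}, the inverse $\mathcal{K}^{-1}$ is well defined on $\range(\mathcal{K})$ and takes values in the parameter set $\mathcal{P}_{\text{comp}}$ or $\mathcal{P}_{\text{fin}}$. Composing with $\mathcal{F}_\theta$ produces a map into $\mathbb{R}^{N_d \times N_E}$. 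I would then check the intertwining property: given $\varrho_{0,\mathcal{X},\mathcal{Y}}\in\range(\mathcal{K})$, the parameter tuple $(\epsilon,t,\nu,l) = \mathcal{K}^{-1}(\varrho_{0,\mathcal{X},\mathcal{Y}})$ is, by the one-to-one property, exactly the tuple that produced the untwisted image, so applying $\mathcal{F}_\theta$ to it yields $\varrho_{\theta,\mathcal{X},\mathcal{Y}}$ by the definition \eqref{equ:ldos}--\eqref{eq:LDOSgrid} of the LDOS image.

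For continuity, I would combine continuity of $\mathcal{K}^{-1}$ (supplied by Theorem \ref{thm:invcon}) with continuity of $\mathcal{F}_\theta$ and invoke the fact that the composition of continuous maps is continuous. Continuity of $\mathcal{F}_\theta$ should follow from the observation made right after \eqref{equ:ldos}: because $g$ is a fixed polynomial of degree $n_{\text{poly}}$, the quantity $g(H_{\theta,d}-E)(R\alpha,R\alpha)$ is a finite polynomial in the matrix entries of $H_{\theta,d}$, each of which depends smoothly on $(\epsilon,t,\nu,l)$ via the intralayer formula and the smooth interlayer hopping $h(|r|) = \nu f(r/l)$. The exponential decay assumption \eqref{equ:h_cond} (or the support cutoff in Theorem \ref{thm:in_general}) ensures that the sums defining the matrix entries of $H_{\theta,d}^{k}$ for $k\le n_{\text{poly}}$ converge uniformly on compact subsets of the parameter space, so smoothness is preserved. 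Evaluation at the finite grid $\mathcal{X}\times\mathcal{Y}$ then gives a continuous (in fact smooth) map $\mathcal{F}_\theta:\mathcal{P}\to\mathbb{R}^{N_d\times N_E}$.

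In the case where $\mathcal{K}=\mathcal{K}_2$ and the parameter set $\mathcal{P}_{\text{fin}}$ is discrete, continuity of $\mathcal{K}^{-1}$ is trivial and the composition argument goes through verbatim. In the case $\mathcal{K}=\mathcal{K}_1$, the parameters are drawn from a compact subset of $\mathbb{R}\times\mathbb{R}_+^3$, and one uses the explicit bound \eqref{eq:K_1_inv} together with the local Lipschitz (indeed smooth) behavior of $\mathcal{F}_\theta$ on this compact set.

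The only non-routine step is verifying continuity of $\mathcal{F}_\theta$ with care for $\theta\neq 0$: the set $\mathcal{I}_{2,d}$ in \eqref{equ:ldos} has constant cardinality (one, in the single-orbital setting of this section), so no jumps arise from the indexing set; and the map $(d,\theta)\mapsto H_{\theta,d}$ is entry-wise smooth because the interlayer separation $r$ in \eqref{eq:interlayer} depends smoothly on $d$ and on the layer-two lattice constant $1-\theta$. I expect this verification to be the main technical (but still routine) piece, while the overall structure of the argument is simply \emph{inverse map composed with forward map, then composition of continuous maps is continuous}.
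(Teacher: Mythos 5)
Your proposal is correct and follows essentially the same route as the paper: decompose $\tilde{\mathcal{L}}_\theta$ as the forward map at twist $\theta$ composed with $\mathcal{K}^{-1}$, then deduce continuity from Theorem \ref{thm:invcon} together with continuity of the forward map. The paper states this in two sentences, leaving continuity of the forward map as an unproved assertion; your elaboration of why that map is smooth is a reasonable filling-in of that detail, not a different argument.
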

\begin{proof}
Theorems \ref{thm:in_specific} and \ref{thm:in_general} guarantee that we can define $\tilde{\mathcal{L}}_\theta := \mathcal{T}_\theta \mathcal{K}^{-1}$, where $\mathcal{T}_\theta$ is the map from model parameters $\epsilon, t, \nu$, and $l$ to LDOS images through \eqref{equ:ldos}-\eqref{eq:LDOSgrid} at twist angle $\theta$. Continuity follows from Theorem \ref{thm:invcon} and continuity of $\mathcal{T}_\theta$. \end{proof}

\begin{remark}
\label{rmk:twist}
It is natural to ask about stability of the twist operator $\tilde{\mathcal{L}}_\theta$. Since $\tilde{\mathcal{L}}_\theta$ is the composition of $\mathcal{T}_\theta$ and $\mathcal{K}^{-1}$, and the map from model parameters to LDOS images $\mathcal{T}_\theta$ through \eqref{equ:ldos}-\eqref{eq:LDOSgrid} is smooth, this question reduces to stability of the inverse map $\mathcal{K}^{-1}$ already discussed in Theorem \ref{thm:in_specific} and Remarks \ref{rmk:invesesta} and \ref{rem:stab_analytic}. It is interesting to consider whether the twist operator could be constructed in a different way, in particular, without reference to the inverse map from LDOS images to model parameters. If this were possible, it could be that a much better stability result could be proved.
\end{remark}




\section{Universal approximation with neural network} \label{sec:approximation_by_NN}

With existence and continuity of the twist operator established in Theorem \ref{thm:twist_op_con}, we now turn to approximability of this operator by a neural network.
This will follow easily from the well-known universal approximation theorem \cite{chen1995universal,barron1993universal}, which establishes conditions under which neural networks can be used to approximate functions. For the reader's convenience, we state a simple special case of this result for continuous functions and two-layer neural networks \cite{hornik1989multilayer,Pinkus_1999}. This will suffice to guarantee that the twist operator between SD-LDOS images can be approximated by such a neural network under the assumptions of Theorem \ref{thm:twist_op_con}. In passing we remark that versions of this result have been proved for deep neural networks~\cite{Zhou2018CNNApprox}, smooth functions~\cite{lu2021deep}, and networks with different structures developed for approximating nonlinear continuous functionals and nonlinear operators~\cite{chen1995universal,lu2021learning}. 
\begin{theorem}{\textbf{Universal Approximation Theorem}}
\label{thm:universa2}
    Let $C(X,\mathbb{R}^n)$ denote the space of continuous functions from a subset $X \subset \mathbb{R}^n$ to $\mathbb{R}^n$. Let $\sigma \in C(\mathbb{R}, \mathbb{R})$ be a non-polynomial function, and let $\sigma \circ x$ denote $\sigma$ applied to each component of $x$, so that $(\sigma \circ x)_i = \sigma(x_i)$. Then, for any $f\in C(X,\mathbb{R}^n)$ and $\epsilon > 0$, there exist $k \in \mathbb{N}_{>0}$, $A \in \mathbb{R}^{k \times n}$, $b \in \mathbb{R}^k$, and $C \in \mathbb{R}^{n \times k}$ such that
\begin{equation}
    \label{eq:NN_approx}
    \sup_{x \in X} \| f(x) - g(x) \| < \epsilon,
\end{equation}
where
\begin{equation}
    \label{eq:g_approx}
    g(x) = C \cdot (\sigma \circ (A \cdot x + b)).
\end{equation}
\end{theorem}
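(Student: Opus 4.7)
The plan is to reduce the vector-valued case to the scalar case and then invoke a classical density argument based on the Hahn–Banach theorem. Assuming $X$ is compact (which the usual formulation requires for a uniform estimate on all of $X$, and which is the relevant setting for our LDOS application since LDOS images live in a compact set), I would first note that writing $f = (f_1, \ldots, f_n)$ reduces the problem to approximating each continuous scalar component $f_i : X \to \mathbb{R}$ by a two-layer scalar network $g_i(x) = c_i^\top (\sigma \circ (A_i x + b_i))$ within error $\varepsilon/\sqrt{n}$; stacking the $A_i$ and $b_i$ vertically into a single $A$ and $b$ and assembling the $c_i$ into the rows of $C$ then produces the required vector-valued network $g(x) = C (\sigma \circ (A x + b))$.

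The heart of the argument is therefore the scalar density statement: the linear span
\begin{equation*}
    \mathcal{S}_\sigma := \mathrm{span}\{\sigma(a^\top x + b) : a \in \mathbb{R}^n,\, b \in \mathbb{R}\}
\end{equation*}
is uniformly dense in $C(X, \mathbb{R})$ whenever $\sigma \in C(\mathbb{R}, \mathbb{R})$ is non-polynomial. I would prove this by contradiction. If $\mathcal{S}_\sigma$ were not dense, then by Hahn–Banach and the Riesz representation theorem there would exist a nonzero finite signed Borel measure $\mu$ on $X$ with $\int_X \sigma(a^\top x + b)\, d\mu(x) = 0$ for every $a \in \mathbb{R}^n$ and $b \in \mathbb{R}$. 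The strategy is to upgrade this to $\int_X p(x)\, d\mu(x) = 0$ for every polynomial $p$, whence $\mu = 0$ by Stone–Weierstrass, a contradiction.

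The upgrade proceeds in two steps. First, mollify: convolving the annihilation relation against a bump $\phi \in C_c^\infty(\mathbb{R})$ in the $b$ variable yields the same identity with $\sigma$ replaced by the smooth function $\sigma_\phi := \sigma * \phi$. Second, differentiate $k$ times in $b$ at a point $b_0$ where the smooth function $\sigma_\phi$ admits a nonzero $k$-th derivative; this gives
\begin{equation*}
    \sigma_\phi^{(k)}(b_0) \int_X (a^\top x)^k\, d\mu(x) = 0
\end{equation*}
so that all moments $\int_X (a^\top x)^k d\mu(x)$ vanish, and since the monomials $x_1^{\alpha_1}\cdots x_n^{\alpha_n}$ are obtained by polarizing $(a^\top x)^k$ in the direction $a$, every polynomial is annihilated.

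The main obstacle is the mollification step, specifically the claim that for a \emph{non-polynomial} continuous $\sigma$ one can choose $\phi$ so that $\sigma_\phi$ is smooth and admits a point where all polynomials of any given degree are locally violated, i.e.\ where some derivative $\sigma_\phi^{(k)}$ is nonzero for every $k$ one needs. If no such $\phi$ existed, one could argue that $\sigma$ itself agrees with a polynomial of uniformly bounded degree, contradicting the hypothesis; making this rigorous is the technical core of the Leshno–Lin–Pinkus–Schocken strengthening of Cybenko's sigmoidal result. Once this lemma is in place, the remaining ingredients (Hahn–Banach, Riesz, Stone–Weierstrass) are standard and the full result follows.
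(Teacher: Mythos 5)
The paper does not prove this statement itself; it cites Pinkus (1999), and your proposal is a correct reconstruction of that cited argument (which is due to Leshno, Lin, Pinkus, and Schocken). Your reduction to the scalar case, the Hahn--Banach/Riesz duality step, and the mollification followed by Stone--Weierstrass are exactly the ingredients Pinkus uses, and your observation that compactness of $X$ must be assumed (it is omitted from the paper's statement, but holds in the application since $\range(\mathcal{K})$ is the continuous image of the compact parameter set) is well-taken. One slip in the moment-extraction step: you should differentiate in the scaling of $a$, not in $b$. Concretely, fix $a$ and $b$ and set $F(t) := \int_X \sigma_\phi(t\,a^\top x + b)\,d\mu(x)$, which vanishes identically in $t$; then
\begin{equation*}
    F^{(k)}(0) = \sigma_\phi^{(k)}(b)\int_X (a^\top x)^k\,d\mu(x) = 0,
\end{equation*}
and choosing $b$ with $\sigma_\phi^{(k)}(b) \neq 0$ kills the $k$-th moment. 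Differentiating $k$ times in $b$ as you wrote would instead yield $\int_X \sigma_\phi^{(k)}(a^\top x + b)\,d\mu(x) = 0$, which does not directly produce the moments. Your displayed formula is nevertheless the correct one, and the remainder of the argument, including the Baire-category step needed to show that $\sigma_\phi$ cannot be polynomial of uniformly bounded degree for every mollifier $\phi$ unless $\sigma$ itself is polynomial, goes through as you describe.
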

For the proof of Theorem \ref{thm:universa2}, see \cite{Pinkus_1999}. 
When an estimate of the form \eqref{eq:NN_approx} holds, we say that the function $f$ can be approximated by a two-layer neural network. Neural networks with additional layers are defined recursively from \eqref{eq:g_approx} by applying additional affine transformations (with associated coefficients $A, b$) composed with $\sigma$ to the output of the previous layer, so that the number of layers corresponds to the number of times $\sigma$ appears in \eqref{eq:g_approx} plus one \cite{doi:10.1137/18M1165748}. It now follows immediately from Theorem \ref{thm:twist_op_con} that two-layer neural networks can approximate the twist operator if the existence condition for either \cref{thm:in_specific} or \cref{thm:in_general} is satisfied. We summarize this in the following theorem.
\begin{theorem}{\textbf{Universal Approximation Theorem for Twist Operators (fixed $\theta$)}} \label{thm:twistuni_fix} Fix $0 \leq \theta < 1$. Assume the conditions either of Theorem \ref{thm:in_specific} or of Theorem \ref{thm:in_general}, and let $\mathcal{K}$ be defined by $\mathcal{K}_1$ or $\mathcal{K}_2$ depending on the case. Let $\tilde{\mathcal{L}}_\theta : \range(\mathcal{K}) \subset \mathbb{R}^{N_d \times N_E} \rightarrow \mathbb{R}^{N_d \times N_E}$ denote the twist operator defined in Theorem \ref{thm:twist_op_con}. Then, for any $\epsilon > 0$, there exist $k \in \mathbb{N}_{>0}$, $A \in \mathbb{R}^{k \times N_d N_E}$, $b \in \mathbb{R}^k$, and $C \in \mathbb{R}^{N_d N_E \times k}$ such that
\begin{equation}
    \label{eq:NN_approx_2}
    \sup_{x \in \range{\mathcal{K}}} \| \tilde{\mathcal{L}}_\theta(x) - g(x) \| < \epsilon,
\end{equation}
where
\begin{equation}
    \label{eq:g_approx_2}
    g(x) = C \cdot (\sigma \circ (A \cdot x + b)).
\end{equation}
\end{theorem}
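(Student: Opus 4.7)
The plan is to obtain Theorem \ref{thm:twistuni_fix} as an immediate corollary of the Universal Approximation Theorem (Theorem \ref{thm:universa2}) applied to the twist operator $\tilde{\mathcal{L}}_\theta$ whose existence and continuity are guaranteed by Theorem \ref{thm:twist_op_con}. The only real work is to check that the hypotheses of Theorem \ref{thm:universa2} are met after an appropriate identification of the matrix-valued setting with the vector-valued one used in Theorem \ref{thm:universa2}.

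First, I would vectorize matrices in $\mathbb{R}^{N_d \times N_E}$ as vectors in $\mathbb{R}^{N_d N_E}$ via any fixed linear isomorphism (e.g.\ row-major ordering). Under this identification, the twist operator $\tilde{\mathcal{L}}_\theta$ becomes a map $X \to \mathbb{R}^{N_d N_E}$, where $X$ is the image under vectorization of $\range(\mathcal{K}) \subset \mathbb{R}^{N_d \times N_E}$. Setting $n := N_d N_E$, this matches the setup of Theorem \ref{thm:universa2}, with both domain and codomain sitting inside $\mathbb{R}^n$.

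Second, I would verify the continuity hypothesis. By Theorem \ref{thm:twist_op_con}, $\tilde{\mathcal{L}}_\theta = \mathcal{T}_\theta \circ \mathcal{K}^{-1}$, where $\mathcal{K}^{-1}$ is continuous on $\range(\mathcal{K})$ (by Theorem \ref{thm:invcon}, with the case $\mathcal{K}=\mathcal{K}_2$ being trivial because its domain is finite) and $\mathcal{T}_\theta$ is smooth because $H_{\theta,d}$ depends smoothly on $d$ and on the model parameters and $g$ is a polynomial. Thus $\tilde{\mathcal{L}}_\theta$ is continuous on $X$, i.e.\ $\tilde{\mathcal{L}}_\theta \in C(X, \mathbb{R}^{N_d N_E})$. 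Applying Theorem \ref{thm:universa2} with $f = \tilde{\mathcal{L}}_\theta$ and the chosen non-polynomial $\sigma$, for any $\epsilon > 0$ we obtain $k \in \mathbb{N}_{>0}$, $A \in \mathbb{R}^{k \times N_d N_E}$, $b \in \mathbb{R}^k$, and $C \in \mathbb{R}^{N_d N_E \times k}$ yielding \eqref{eq:NN_approx_2}–\eqref{eq:g_approx_2}, and the theorem follows after undoing the vectorization.

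There is essentially no substantive obstacle: the statement is a direct corollary, and the work in the paper lies in Theorem \ref{thm:twist_op_con}, not here. The only small point worth remarking on, in case one wants to use a version of the universal approximation theorem that requires a compact domain, is that $X$ is in fact compact: in the setting of Theorem \ref{thm:in_specific}, $X$ is the continuous image of the compact parameter set $\mathcal{P}_{\text{comp}}$ under $\mathcal{K}_1$, while in the setting of Theorem \ref{thm:in_general}, $\mathcal{P}_{\text{fin}}$ is finite and hence $X = \mathcal{K}_2(\mathcal{P}_{\text{fin}})$ is finite. So the supremum in \eqref{eq:NN_approx_2} is actually attained and the approximation is uniform on a compact set in either case.
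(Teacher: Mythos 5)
Your proposal is correct and takes essentially the same approach as the paper: the paper states this result as an immediate corollary of Theorem \ref{thm:twist_op_con} (continuity of the twist operator) and Theorem \ref{thm:universa2} (universal approximation), which is exactly the chain of reasoning you spell out. The vectorization step and the compactness remark you add are sound and make explicit what the paper leaves implicit.
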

Theorem \ref{thm:twistuni_fix} guarantees that it is possible to use a two-layer neural network to approximate the local density of states for a fixed lattice mismatch (twist angle) $\theta$. Training the neural network refers to choosing $C, A$, and $b$ appearing in \eqref{eq:g_approx_2} to minimize the cost function defined by
\begin{equation}
    \frac{1}{N} \sum_{i = 1}^N \| \tilde{\mathcal{L}}_\theta(x_i) - g(x_i) \|,
\end{equation}
where $(x_i,\tilde{\mathcal{L}}_\theta(x_i))$ for $i = 1,...,N$ denotes $N$ sample pairs consisting of untwisted and twisted SD-LDOS images computed with the same tight-binding parameters (see Section \ref{sec:commen} for details of how the twisted SD-LDOS images are computed).

In practice, the neural network trained in \cite{Liu_2022} had more layers, and was trained using a multi-layer convolutional neural network (CNN) with the alternating direction method of multipliers (ADMM). Moreover, the data used in \cite{Liu_2022} is slightly more general than what we discussed here. In \cite{Liu_2022}, there are ten types of materials and the data of both input and output data was allowed to correspond to different energy windows (i.e., different choices of $E_1$ and $E_{N_E}$ in \eqref{eq:E_vals}) for each different material. We ignore these subtleties in the present work in order to clearly present the most important ideas. However, we expect that the existence of a more general twist operator and its approximation theorem can be established. We  leave this to future work. 

An interesting generalization of the method described in the present paper would be to learning the twist operator over a range of lattice mismatches $\theta$. This could be done by training the neural network on tensor data consisting of multiple LDOS images corresponding to multiple lattice mismatches. We hope to address this problem in future work.

It remains to discuss how to accurately numerically compute LDOS images for use as training data. We discuss this in the following section.

\section{Numerical details of computing LDOS images} \label{sec:commen}



In this section, we discuss the methods used in \cite{Liu_2022} to numerically compute the LDOS defined by \eqref{equ:ldos} and prove error estimates guaranteeing convergence of the methods. The key idea is to restrict attention to rational lattice mismatches $\theta$ and use a Bloch decomposition. Note that error estimates have already been proved for irrational $\theta$'s; see \cite{massatt2017electronic,massatt2017incommensurate}. 


\subsection{The commensurate coupled chain}
\label{sec:commensurate}
We start by reviewing again the lattice structure and tight-binding model in the case of a rational lattice mismatch. The model we introduce here will be the same as the model introduced in Section \ref{sec:models} (after restricting to rational $\theta$ and relabeling sites), but the notation we introduce will be more convenient for the numerical convergence analysis. If lattice mismatch $\theta$ is a rational number, let $p,q$ be the integers satisfying $q(1-\theta)=p$ where $p,q \in \mathbb{N}$ and $\mathrm{gcd}(p,q)=1$. The moiré scale is  $\frac{p}{q-p}$. The coupled chain $\mathcal{R}_{1}\cup \mathcal{R}_{2}$ is periodic on a supercell lattice,
\[
\scl_1=\scl_2=\scl=\left \{ (1-\theta)qn : n \in \mathbb{Z}  \right\}=\left \{ pn : n \in \mathbb{Z}  \right \}.
\]Equivalently, atomic orbitals can also be labeled by 
\begin{align*}
    \Omega^s= (\scl_1\times \sci_1) \cup (\scl_2\times \sci_2)=\scl\times (\sci_1\cup \sci_2).
\end{align*}
The unit cell for the coupled chain can then be defined as     
\begin{equation*}
    \Gamma=\left \{ (1-\theta)q \beta: \beta \in [0,1)  \right \}=\left \{  p\beta: \beta \in [0,1) \right\},
\end{equation*}
where the reciprocal lattice unit cell is defined as
\begin{align*}
    \Gamma^{*}= \left \{   \frac{2\pi}{p}\beta :\beta \in [0,1)     \right \}.
\end{align*}
The length of each supercell is $p$ which is identical to the moiré scale when $q=p+1$. We can equivalently describe the multilattices $\mathcal{R}_{1}$ and $\mathcal{R}_{2}$
by lattices with shifts
\begin{gather*}
    \scl_1=\scl_2=\left \{ (1-\theta)qn : n \in \mathbb{Z}  \right \},\\  
    \sci_1:=\{0, 1,  \dots,p-1\}\times \mathcal{A}_1,\\
    \sci_2:=\{0, 1, \dots,q-1\} \times \mathcal{A}_2,
    \\ \stalo=n_1 + \ctalo \ \text{ for } \alpha^{s}_1=(n_1,\alpha_1)\in \sci_1,\\\stalt =n_2 + \ctalt \ \text{ for } \alpha^{s}_2=(n_2,\alpha_2)\in \sci_2.
\end{gather*}
$\sci_{i}$ denotes the set of indices of atomic orbitals within each supercell of chain $i$. The location of each orbital within each supercell is described by the shift $\tau_{\alpha}.$ We note that more complex 1-D coupled chain geometry for the basis atoms in each unit cell can also be described similarly. Detailed construction of the complex coupled chain and more examples are given in Appendix A of \cite{Liu_2022}.

    \begin{figure}[ht]
    \centering
    \includegraphics[width=0.8\linewidth]{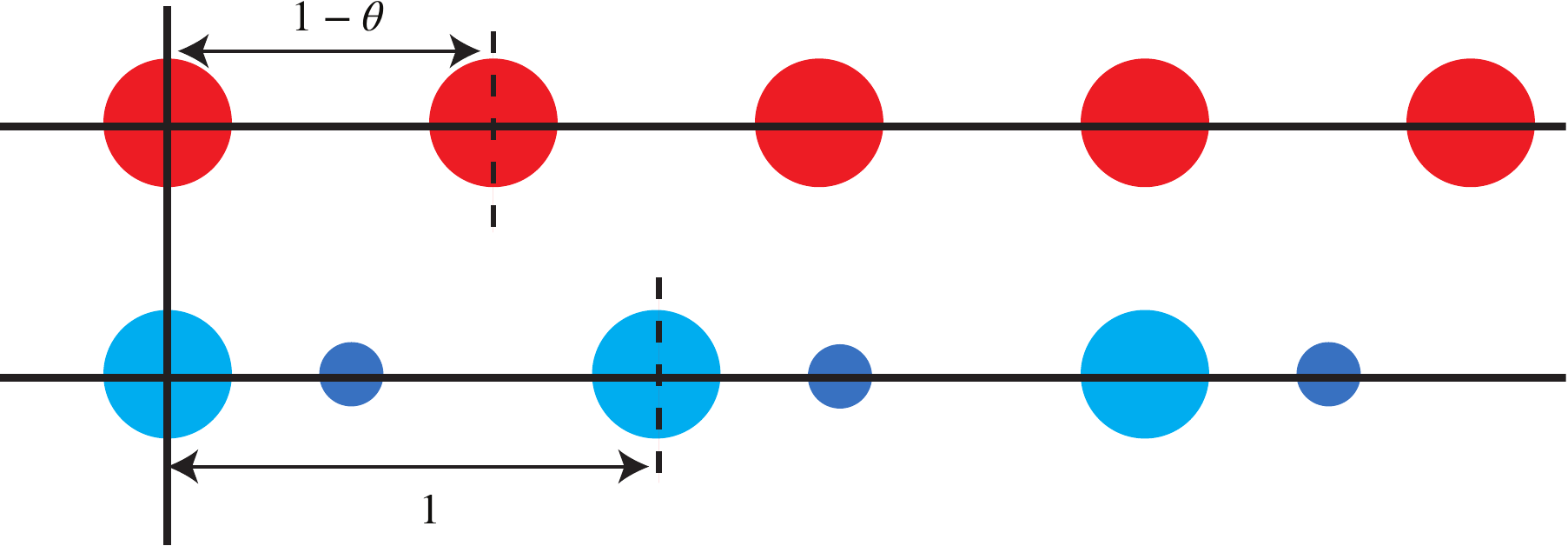}
        \caption{Schematic of the commensurate coupled chain system described in Section \ref{sec:commensurate} with two orbitals per cell ($m_1 = 1, m_2 = 2$), with a rational lattice mismatch $\theta = \frac{1}{3}$ without interlayer shift ($d = 0$). Despite the lattice mismatch, the system is periodic with supercell period $2$.}
    \label{fig:my_label}
\end{figure}

\subsection{Tight binding model for commensurate coupled chain}
When $\theta$ is a rational number, the coupled chain model is periodic and is referred to as a commensurate structure. It can be equivalently described by another set of indices $$\Omega^{s}=\scl_1\times \sci_1 \cup \scl_2\times \sci_2=\scl \times \sci $$where $\sci=\sci_1 \cup \sci_2$. The space for wave function of electrons can then be defined as
\begin{equation}\label{eq:hop}
     \ell^2(\Omega^s) = \left\{ (\psi(R_i\sigma))_{R_i\sigma \in \Omega^s } : \sum_{R_i\sigma \in \Omega^s}| \psi(R_i\sigma)|^2 < \infty \right\}.
    \end{equation}
The intralayer hopping energy for the coupled chain where $R \alpha, R'\alpha'\in \mathcal{R}^s_i\times \mathcal{A}^s_i$ can be defined as
\begin{equation}
    \label{equ:commensurateH}
    H(R\alpha,R'\alpha') = \begin{cases}
       T_{\alpha,\alpha'},\, & R-R'=p, \\
       T_{\alpha',\alpha},\, & R-R'=-p, \\
       \mathcal{E}_{\alpha,\alpha'}, \, &R=R',
       \\ 0, \, &\text{else},
    \end{cases}
\end{equation}
where $T_{\alpha,\alpha'}$ and $\mathcal{E}_{\alpha,\alpha'}$ are defined through $t_{\alpha,\alpha'}$ and $\epsilon_{\alpha,\alpha'}$ appearing in \eqref{eq:hop0} in such a way that the supercell Hamiltonian equals the tight-binding Hamiltonian introduced in Section \ref{sec:models} up to relabeling sites. 

The interlayer hopping energy is modeled by the interlayer hopping function $h$
\begin{equation*}
    \label{equ:Cinterlayer}
    H(R\alpha,R'\alpha')=h^s_{\alpha,\alpha'}(|r^s|),  \quad r=R+\tau_{\alpha}+d(-\delta_{i1}+\delta_{i2})-R'-\tau_{\alpha'}
\end{equation*}
where $R \alpha \in \mathcal{R}^s_i\times \mathcal{A}^s_i$,  $R'\alpha'\in \mathcal{R}^s_j\times \mathcal{A}^s_j$ and $i\neq j$. We again define each component of $h^{s}$ so that the supercell Hamiltonian equals the tight-binding Hamiltonian introduced in Section \ref{sec:models}. In particular, each component is a smooth function that decays exponentially.

\subsection{Bloch transform for commensurate coupled chain}
The model defined in this section is periodic. Many properties of such models, such as the stacking-dependent local density of states, are most efficiently calculated via the Bloch transform, defined as follows. The Bloch transform $\mathcal{G}:\ell^{2}(\scl\times \sci) \to C(\Gamma^{*}) \times \sci$ can be defined by
\begin{align} \label{eq:G}
    [\mathcal{G}\psi]_{\alpha}(k)=\widecheck{\psi}_{k}(\alpha):=\frac{1}{|\Gamma^{*}|^{\frac{1}{2}}}\sum_{R\in \scl} \psi(R\alpha) e^{-i k \cdot (R+\tau_{\alpha})}, \ \alpha \in \sci,
\end{align}
and its inverse by 
\begin{align}
    \label{equ:Ginverse}[\mathcal{G}^{-1}\widecheck{\psi}](R,\alpha)=\frac{1}{|\Gamma^{*}|^{\frac{1}{2}}} \int_{\Gamma^{*}} e^{ik \cdot (R+\tau_{\alpha})}\widecheck{\psi}_{k}(\alpha) \, \text{d}k, \ R\in \scl \  \text{and} \  \alpha \in \sci.
\end{align}
We can apply the Bloch transform to $H\psi$ for $\psi \in \ell^2(\scl \times \sci )$ to derive 
\begin{equation}
\label{equ:BtransH}
    [\mathcal{G}H\psi](k)=H(k)[\mathcal{G}\psi] (k)
\end{equation}
where $H(k)=[\mathcal{G}H\mathcal{G}^{-1}](k)$ is a $|\sci|\times|\sci|$ Hermitian matrix. Details of the derivation of $H(k)$, known as the Bloch Hamiltonian, can be found in \cref{sec:Hk} and \cite{massatt2017incommensurate}. 

\subsection{Stacking-dependent local density of states for the commensurate coupled chain}

The stacking-dependent local density of states can always be defined by \eqref{equ:ldos}. When the lattice mismatch $\theta$ is rational, this quantity can be efficiently numerically computed using the Bloch transform. Let $H_{\theta,d}$ denote the commensurate coupled chain Hamiltonian introduced in the previous section, where we again use subscripts to emphasize the dependence of the Hamiltonian on $\theta$ and $d$. Since $g$ is a polynomial, we can express the stacking-dependent LDOS \eqref{equ:ldos} of $H_{\theta,d}$ using the Bloch transform \eqref{eq:G} as
\begin{equation} \label{eq:LDOS_momspace}
    \rho_{\theta}(d,E) = \frac{1}{|\Gamma^*|} \sum_{\alpha : R\alpha \in \mathcal{I}_{2,d}} \int_{0}^{\frac{2 \pi}{p}} g( H_{\theta,d}(k)-E)_{\alpha,\alpha} \ \text{d}k,
\end{equation}
where we define $H_{\theta,d}(k) := [ \mathcal{G} H_{\theta,d} \mathcal{G}^{-1} ](k)$. With a slight abuse of notation, we denote the supercell orbital indices $\alpha \in \mathcal{A}^s$ which correspond to the indices $R\alpha \in \mathcal{I}_{2,d}$ in \eqref{equ:ldos} by $\alpha : R \alpha \in \mathcal{I}_{2,d}$. We provide the derivation of \eqref{eq:LDOS_momspace} from \eqref{equ:ldos} in \cref{sec:repre_sec}. It is then natural to approximate \eqref{eq:LDOS_momspace} by discretizing the integral over the Brillouin zone and truncating the interlayer hopping in real space (for more details, see \cref{sec:krepresent2}. We use $H_{k,\text{trunc}}$ to denote the Bloch transformed truncated Hamiltonian at wave-number $k$ (we suppress the subscripts $d, \theta$ since they do not play a role here). We thus arrive at the following numerical approximation of the stacking-dependent LDOS 
\begin{equation} \label{eq:LDOS_momspace_trunc}
    \varrho_\theta(d,E) := \sum_{\alpha: R\alpha \in \mathcal{I}_{2,d}} \sum_{i=1}^{M} g_n(H_{k_i,\text{trunc}}-E)_{\alpha,\alpha} \Delta k,
\end{equation}
where $\Delta k =\frac{2\pi}{p M}$, $k_i=\frac{2\pi i}{p M}$ for $i=1, 2,\dots, M$, and $M$ and $p$ are positive integers. The numerical scheme is equivalent to \eqref{equ:ldos} for a interlayer coupling truncated system with $M$ supercells.

\begin{theorem}
\label{thm:krepresent}
    Let 
    $\rho_\theta$ and $\varrho_\theta$ be as in \eqref{eq:LDOS_momspace} and \eqref{eq:LDOS_momspace_trunc}, respectively.
%
    Then, we have
    \begin{equation} \label{eq:err_est}
    |\rho_{\theta}(d,E) - \varrho_{\theta}(d,E)|\leq \left( \frac{C_1}{pe^{p\gamma M/2}-p} +\frac{C_{2}e^{-\gamma cp}}{\gamma p}\right) \cdot \frac{|\mathcal{I}_{2,d}|}{|\Gamma^{*}|} ,
\end{equation}
    where $\gamma$ is defined same as in \eqref{equ:h_cond}, $M$ is the length of the supercell and  $c>3$ is the number of supercells beyond which interlayer coupling will be truncated.
    The constant $C_1$ depends on $g, h, E, d$ and the constant $C_2$ depends on $h, N, E,d$.
\end{theorem}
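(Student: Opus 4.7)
The plan is to split the total error using the triangle inequality into two parts corresponding to the two approximations involved in passing from \eqref{eq:LDOS_momspace} to \eqref{eq:LDOS_momspace_trunc}: a quadrature error from replacing the Brillouin zone integral $\int_0^{2\pi/p}$ by the Riemann sum $\sum_{i=1}^M \cdot \Delta k$, and a truncation error from replacing the full Bloch Hamiltonian $H_{\theta,d}(k)$ by its truncated version $H_{k,\text{trunc}}$ in which interlayer hopping beyond $c$ supercells is dropped. Concretely, introduce the intermediate object $\widetilde{\rho}_\theta(d,E) := \frac{1}{|\Gamma^*|} \sum_{\alpha : R\alpha \in \mathcal{I}_{2,d}} \int_0^{2\pi/p} g(H_{k,\text{trunc}} - E)_{\alpha,\alpha} \, dk$, bound $|\rho_\theta - \widetilde{\rho}_\theta|$ by a truncation estimate and $|\widetilde{\rho}_\theta - \varrho_\theta|$ by a quadrature estimate, and factor out the prefactor $|\mathcal{I}_{2,d}|/|\Gamma^*|$ at the end.

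For the truncation error, the key step is to bound $\|H_{\theta,d}(k) - H_{k,\text{trunc}}\|$ uniformly in $k$. Since the interlayer hopping is a matrix of exponentially decaying entries (recall \eqref{equ:h_cond}), each removed entry is bounded by $h_0 e^{-\gamma |r|}$ with $|r| \geq cp$, and summing the remaining geometric tails in $|r|$ gives a bound of order $e^{-\gamma c p}/\gamma$; the additional factor of $1/p$ in the theorem arises from the $1/|\Gamma^*| = p/(2\pi)$ prefactor combined with the integration over the Brillouin zone of length $2\pi/p$. Then, because $g$ is a polynomial of fixed degree $n_{\text{poly}}$ and $H_{\theta,d}(k)$ and $H_{k,\text{trunc}}$ are both uniformly bounded in $k$ (over a compact parameter set), the map $A \mapsto g(A-E)_{\alpha,\alpha}$ is Lipschitz with a constant depending only on $g$, $E$, and the uniform bound on $\|H\|$; applying this Lipschitz bound gives the contribution $C_2 e^{-\gamma c p}/(\gamma p)$.

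For the quadrature error, the central fact is that the integrand $k \mapsto g(H_{\theta,d}(k) - E)_{\alpha,\alpha}$ is $2\pi/p$-periodic and admits an analytic continuation to a strip $|\mathrm{Im}\, k| < \eta$ in the complex plane. Analyticity follows because the entries of the Bloch Hamiltonian are convergent sums of the form $\sum_R H(R\alpha,R'\alpha') e^{-ik(R+\tau_\alpha)}$ with $|H(R\alpha,R'\alpha')| \lesssim e^{-\gamma |R|}$, so the series converges absolutely for $|\mathrm{Im}\, k| < \gamma$; then $g$ applied to this analytic matrix-valued function remains analytic (composition of a polynomial with analytic entries). For a periodic analytic function on an interval of length $L = 2\pi/p$, the $M$-point trapezoidal (equivalently, midpoint) rule converges at the geometric rate $e^{-\eta M L} = e^{-\eta M \cdot 2\pi/p}$; a careful tracking of the strip width yields an effective exponent of the form $p\gamma M/2$, producing the bound $C_1/(p e^{p\gamma M/2} - p)$ (the denominator $e^{p\gamma M/2}-1$ arising from summing the geometric series of Fourier mode truncation errors, with the extra $p$ from the $1/|\Gamma^*|$ prefactor).

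The main obstacle will be the quadrature step: pinning down the precise width of the analyticity strip and converting the classical trapezoidal-rule-on-analytic-functions estimate (typically stated via Paley–Wiener / Euler–Maclaurin arguments) into the specific form in \eqref{eq:err_est}, while correctly accounting for the factor of $p$ coming from the supercell length (the Bloch Hamiltonian has $p$-dependent Fourier content because the supercell contains $p$ unit cells of layer one). The truncation step is more routine; its main subtlety is merely tracking how the Lipschitz constant of $A \mapsto g(A)_{\alpha,\alpha}$ depends on $g$, $E$, and the uniform operator-norm bound on $H_{\theta,d}(k)$, which together absorb into the constant $C_2$. Adding the two estimates and pulling out $|\mathcal{I}_{2,d}|/|\Gamma^*|$ then yields \eqref{eq:err_est}.
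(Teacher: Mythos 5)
Your proposal is correct and follows the same basic strategy as the paper — split the total error into a quadrature piece and an interlayer-truncation piece, control the quadrature piece via periodicity and analyticity of $k \mapsto g(H(k)-E)_{\alpha\alpha}$ using the exponential-accuracy estimate for the trapezoidal rule, and control the truncation piece using the exponential decay \eqref{equ:h_cond} — but the two routes differ in two concrete respects. First, you reverse the order of the decomposition: you introduce the intermediate quantity $\int g(H_{k,\text{trunc}}-E)\,dk$, so the truncation estimate is applied to the continuous integral and the quadrature estimate to the truncated Hamiltonian, whereas the paper introduces the Riemann sum $\sum_i g(H_{\theta,d}(k_i)-E)\Delta k$ with the \emph{full} Hamiltonian, so the quadrature estimate uses the full $H_{\theta,d}(k)$ and the truncation estimate is applied only at the discrete $k_i$. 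Both orderings work: the truncated Bloch Hamiltonian is a trigonometric polynomial, hence entire in $k$, and its size in the strip $|\mathrm{Im}\,k|<\gamma/2$ is uniformly bounded, so Trefethen's estimate applies with the same strip width. Second, you bound the truncation error by the direct algebraic Lipschitz estimate $\|g(A)-g(B)\|\le L(g,\|A\|,\|B\|)\|A-B\|$ followed by a Schur/geometric-tail bound on $\|H_{\theta,d}(k)-H_{k,\text{trunc}}\|$; the paper instead uses the Cauchy integral representation $g_E(H)-g_E(H')=\frac{1}{2\pi i}\int_{\Gamma_0}g_E(z)[(z-H)^{-1}(H-H')(z-H')^{-1}]\,dz$ together with Combes--Thomas decay of the resolvent entries. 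Your argument is more elementary and entirely sufficient for the stated operator-norm bound; the paper's gives finer entry-wise control which can matter if one wants $N$-independent constants, but as written the paper's $C_2$ still depends on $N$.

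One small but real misstatement: you attribute the factor $1/p$ in the truncation term to the prefactor $1/|\Gamma^*|$, but $|\Gamma^*|=2\pi/p$ so $1/|\Gamma^*|$ contributes a factor of $p$, which merely cancels the length $2\pi/p$ of the Brillouin zone. The $1/p$ in $\frac{C_2 e^{-\gamma c p}}{\gamma p}$ actually comes from the lattice-sum-to-integral comparison when bounding the dropped interlayer entries: the sum runs over the supercell lattice $\mathcal{R}^s$ of spacing $p$, so $\sum_{|R|>cp,\,R\in\mathcal{R}^s} e^{-\gamma|R|}\le\frac{1}{p}\int_{cp}^\infty e^{-\gamma r}\,dr=\frac{e^{-\gamma cp}}{\gamma p}$. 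This is purely a bookkeeping issue and does not affect the validity of the argument.
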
 
\begin{proof}
See proof at \cref{sec:krepresent2}.
\end{proof}

The first term on the right-hand side of \eqref{eq:err_est} corresponds to numerical quadrature errors and the second term corresponds to the interlayer coupling approximation error between $H_{\theta,d}$ and $H_{k,p}$.

We discuss some applications of Theorem \ref{thm:krepresent}. First, it shows that formula \eqref{eq:LDOS_momspace_trunc} can be used to generate accurate input data for training the neural network to approximate the twist operator. Second, combining Theorem \ref{thm:krepresent} with continuity of the twist operator established in Theorem \ref{thm:twist_op_con}, we see that the difference between the outputs of the twist operator applied to numerically-computed (through \eqref{eq:LDOS_momspace_trunc}) LDOS images and to exact LDOS images must be small. Finally, combining this result with the fact that the twist operator can be accurately approximated by a neural network in Theorem \ref{thm:twistuni_fix}, we have that this property is also shared by the neural network approximation to the twist operator. 

To be more precise, if we use $\mathcal{L}_{NN}$ to denote the neural network approximating twist operator, then the error between neural network prediction and exact LDOS image is 
\begin{equation}
    \begin{split}
        |\rho_{\theta}-\mathcal{L}_{NN}(\varrho_{0})|&= |\rho_{\theta}-\mathcal{L}_{NN}(\rho_{0})+\mathcal{L}_{NN}(\rho_{0})-\mathcal{L}_{NN}(\varrho_{0})|\\
        &\leq |\rho_{\theta}-\mathcal{L}_{NN}(\rho_{0})| + |\mathcal{L}_{NN}(\rho_{0})-\mathcal{L}_{NN}(\varrho_{0})|\\
        &\leq I_1 + C_{NN} \cdot I_2,
    \end{split}
\end{equation}
where $C_{NN}$ is a constant depending on the final trained neural network and $I_1, I_2$ are errors due to neural network approximation in \cref{thm:twistuni_fix} and numerical error in data generation in \cref{thm:krepresent}.

We summarize the discussion over numerical analysis of the end-to-end operator learning based scheme for the LDOS in Theorem~\ref{thm:endtoendanalysis}. 

\begin{theorem}{\textbf{Error Analysis for Learning the Twist Operator.}}
 Fix $0 \leq \theta < 1$. Assume the conditions either of Theorem \ref{thm:in_specific} or of Theorem \ref{thm:in_general}, and let $\mathcal{K}$ be defined by $\mathcal{K}_1$ or $\mathcal{K}_2$ depending on the case. 

 Then, for any $\epsilon > 0$, there exist $k \in \mathbb{N}_{>0}$, $A \in \mathbb{R}^{k \times N_d N_E}$, $b \in \mathbb{R}^k$, and $C \in \mathbb{R}^{N_d N_E \times k}$ such that
\begin{equation}
    \label{eq:NN_approx_3}
     \| \rho_{\theta} - \mathcal{L}_{NN}(\varrho_0) \| < \epsilon + \left( \frac{C_{NN}C_1}{pe^{p\gamma M/2}-p} +\frac{C_{NN}C_{2}e^{-\gamma cp}}{\gamma p}\right) \cdot \frac{|\mathcal{I}_{2,d}|}{|\Gamma^{*}|} ,
\end{equation}
for any
    $\rho_\theta$ be as in \eqref{eq:LDOS_momspace} and  $\varrho_0 \in \range(\mathcal{K}) $ be as in \eqref{eq:LDOS_momspace_trunc} for $\theta=0$, where $\mathcal{L}_{NN}$ denotes the two-layer neural network
\begin{equation}
    \label{eq:g_approx_3}
    \mathcal{L}_{NN}(x) = C \cdot (\sigma \circ (A \cdot x + b)).
\end{equation}
The $C_{NN}$ is a constant depending on the final trained neural network. Other parameters $p,\gamma,M,|\mathcal{I}_{2,d}|,|\Gamma^{*}|, C_1, C_2$ are defined similarly as in Theorem~\ref{thm:krepresent}.

\label{thm:endtoendanalysis}
\end{theorem}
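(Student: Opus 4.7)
The plan is to carry out the triangle-inequality argument sketched in the paragraph immediately preceding the theorem statement, combining the universal approximation bound from Theorem~\ref{thm:twistuni_fix} with the numerical quadrature/truncation bound from Theorem~\ref{thm:krepresent}. Fix the LDOS grid $\mathcal{X} \times \mathcal{Y}$, a norm $\|\cdot\|$ on $\mathbb{R}^{N_d \times N_E}$, and let $\rho_0$ denote the exact (untwisted) LDOS image generated by the same model parameters as $\varrho_0$. By Theorem~\ref{thm:twist_op_con}, $\rho_0 \in \range(\mathcal{K})$ and $\rho_\theta = \tilde{\mathcal{L}}_\theta(\rho_0)$, so the triangle inequality yields
\begin{equation*}
\|\rho_\theta - \mathcal{L}_{NN}(\varrho_0)\| \leq \|\tilde{\mathcal{L}}_\theta(\rho_0) - \mathcal{L}_{NN}(\rho_0)\| + \|\mathcal{L}_{NN}(\rho_0) - \mathcal{L}_{NN}(\varrho_0)\|.
\end{equation*}

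The first term will be controlled by a direct application of Theorem~\ref{thm:twistuni_fix}: for the given $\epsilon > 0$, choose network parameters $(A, b, C)$ so that $\sup_{x \in \range(\mathcal{K})} \|\tilde{\mathcal{L}}_\theta(x) - \mathcal{L}_{NN}(x)\| < \epsilon$, which bounds the first term by $\epsilon$. For the second term, the plan is to invoke Lipschitz continuity of the two-layer network: assuming the activation $\sigma$ is $L$-Lipschitz (as is the case for standard choices such as $\tanh$, sigmoid, or ReLU), a direct computation from \eqref{eq:g_approx_3} gives $\|\mathcal{L}_{NN}(x) - \mathcal{L}_{NN}(y)\| \leq C_{NN}\|x-y\|$ with $C_{NN} := L\,\|C\|\,\|A\|$, a constant fixed once the trained network is specified. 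Applying this to $x = \rho_0$, $y = \varrho_0$ and then invoking Theorem~\ref{thm:krepresent} at $\theta = 0$ at each of the $N_d N_E$ grid points yields
\begin{equation*}
\|\mathcal{L}_{NN}(\rho_0) - \mathcal{L}_{NN}(\varrho_0)\| \leq C_{NN}\left( \frac{C_1}{pe^{p\gamma M/2}-p} + \frac{C_2 e^{-\gamma c p}}{\gamma p}\right) \cdot \frac{|\mathcal{I}_{2,d}|}{|\Gamma^*|},
\end{equation*}
after possibly enlarging $C_1, C_2$ to absorb norm-equivalence factors on $\mathbb{R}^{N_d \times N_E}$. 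Adding the two estimates gives \eqref{eq:NN_approx_3}.

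The main delicate point is ensuring that the Lipschitz constant $C_{NN}$ is finite and independent of the particular images $\rho_0, \varrho_0$ under consideration. This is automatic once $\sigma$ is Lipschitz, but Theorem~\ref{thm:universa2} as stated only requires $\sigma$ to be non-polynomial; hence the plan is either to add a mild Lipschitz hypothesis on $\sigma$, or to exploit the fact that $\range(\mathcal{K})$ is a bounded subset of $\mathbb{R}^{N_d \times N_E}$ (since $\mathcal{P}_{\text{comp}}$ or $\mathcal{P}_{\text{fin}}$ is compact or finite and $\mathcal{K}$ is continuous) and restrict to a compact neighborhood containing both $\rho_0$ and $\varrho_0$, on which continuity of $\sigma$ already yields local Lipschitzness of $\mathcal{L}_{NN}$. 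A secondary subtlety is reconciling the entrywise pointwise bound of Theorem~\ref{thm:krepresent} with the norm in which the neural network error is measured, but this only affects $C_1, C_2$ by finite-dimensional norm-equivalence factors, which can be absorbed into the constants appearing in the final estimate.
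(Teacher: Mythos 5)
Your decomposition via the triangle inequality, applying Theorem~\ref{thm:twistuni_fix} to the term $\|\tilde{\mathcal{L}}_\theta(\rho_0) - \mathcal{L}_{NN}(\rho_0)\|$ and the network's Lipschitz constant together with Theorem~\ref{thm:krepresent} to the term $\|\mathcal{L}_{NN}(\rho_0) - \mathcal{L}_{NN}(\varrho_0)\|$, is exactly the argument the paper sketches in the paragraph immediately preceding Theorem~\ref{thm:endtoendanalysis}, which the one-line proof then simply cites; so the approach matches. One caveat about the refinement you add: your second suggested route to the Lipschitz constant---that continuity of $\sigma$ together with boundedness of $\range(\mathcal{K})$ already yields local Lipschitzness of $\mathcal{L}_{NN}$ on a compact neighborhood---does not follow. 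Continuity on a compact set gives uniform continuity, not Lipschitzness (e.g.\ $\sigma(x)=\sqrt{|x|}$ near $0$ is continuous but has no finite Lipschitz constant), so this alternative would not produce the linear bound $C_{NN}\cdot I_2$. Your first suggestion, imposing a (mild and in practice always satisfied) Lipschitz hypothesis on $\sigma$, is the correct fix; this is a point the paper leaves implicit by simply positing the constant $C_{NN}$.
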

\begin{proof}
    The proof follows directly from Theorem~\ref{thm:twist_op_con}, Theorem~\ref{thm:twistuni_fix} and Theorem~\ref{thm:krepresent}.
\end{proof}

\begin{remark}
    While we have discussed the stability of the inverse map $\mathcal{K}_1^{-1}$ and the twist operator in Remark \ref{rmk:invesesta} and Remark \ref{rmk:twist}, it remains unclear whether the trained neural network will exhibit stability or how its stability relates to that of the inverse map and the twist operator. This is not a trivial question: it is known that trained neural networks may be unstable in the worst case for certain tasks \cite{colbrook2022difficulty}. Since this issue occurs much more generally than in the specific problem we consider here, we do not investigate this question further and refer the interested reader to \cite{colbrook2022difficulty,gottschling2025troublesome, tang2024chebnet,zhang2014comprehensive} and the references therein.
\end{remark}

\section{Conclusion}  \label{sec:conc}

In this work, we have studied the twist operator, a map which relates properties of a twisted bilayer material to those of the same bilayer material without twist, focusing in particular on the specific case of the stacking-dependent local density of states (SD-LDOS). We have discussed the relevance of an inverse problem relating SD-LDOS data to the coefficients of the tight-binding model which generated it. We proved that under two types of assumptions, the inverse problem is well-posed and existence of the twist operator can be proved. We then invoked the Universal Approximation Theorem to prove that the twist operator can be approximated by a two-layer neural network and proved smallness of numerical errors in computing SD-LDOS images.


We hope that the present work will stimulate further research into predicting properties of twisted bilayers from data about their untwisted counterparts. We expect this approach to be especially useful when direct numerical computation of properties of the twisted bilayers is intractable; for example, for properties depending delicately on the effects of electron-electron interactions. We expect that this approach could even be used directly on experimental data collected from untwisted bilayers to predict the results of experiments on their twisted counterparts. Although our rigorous results do not apply in these cases, we hope that the fact that similar ideas can be made rigorous in simpler cases provides further impetus to such studies.


We close by noting some interesting potential future directions for further mathematical research. First, it will be interesting to investigate learning the parameters of tight-binding models using spectral data more generally, not only in the context of moir\'e materials. Then, it would be interesting to further develop the rigorous theory of this learning problem, i.e., to investigate well-posedness of the associated inverse problem. It may be possible to connect to the extensive literature on the inverse spectral problem for 1D Schr\"odinger operators; see, e.g., \cite{8a839ae8-b634-3548-a204-4d3ee1bc7c36}. It would be especially exciting to try to extend the ideas of \cite{Liu_2022} and the present work to 2D materials, although we expect proving well-posedness of the analogous inverse problem to be very difficult in this case. It would also be interesting to develop the rigorous theory of aspects of \cite{Liu_2022} not considered here, such as the generalization to multiple types of materials.

\bibliographystyle{siamplain}

\bibliography{references}

\appendix

\section{Proof for Existence of Twist Operator}

\subsection{Proof of Theorem \ref{thm:in_specific}}
\label{sec:in_specific}
\begin{proof}
Our strategy will be to first prove that LDOS images uniquely specify moments of the Hamiltonian evaluated in the home cell $\Gamma_{2,d}$ \eqref{eq:home_cell}. We will then prove that these moments uniquely specify model parameters. 

The polynomial $g$ used to compute LDOS $\rho_{\theta}(d,E)$ \eqref{equ:ldos} can be represented as
\begin{equation}
    g(x)=a_{\n} x^{\n}+a_{\n-1}x^{\n-1}+\dots +a_{0}.
\end{equation}
Then $g(x-E)$ can be expanded as 
\begin{align*}
    g(x-E)&=a_{\n} (x-E)^{\n}+a_{\n-1}(x-E)^{\n-1}+\dots +a_{0}
    \\&=a_{n_l} (x-E)^{n_l}+a_{n_{l-1}}(x-E)^{n_{l-1}}+\dots +a_{n_{0}}(x-E)^{n_{0}}\\
    &=\sum_{\mu=0}^{\n} f_{\mu}(E) x^\mu,
\end{align*}
where $\n=n_l>n_{l-1}\dots >n_{0}\geq 0$ and for any other $n'\neq n_{0}\dots n_{l}$, $a_{n'}=0$. 
Collecting the coefficients for the term $x^{k}$, we derive
\begin{align}
    f_{\mu}(E)=\sum_{\mu\leq n_m \leq \n} a_{n_m}C_{n_m}^{\mu}(-E)^{n_m-\mu}.
\end{align}
The highest degree monomial with respect to $E$ of $f_{\mu}(E)$ is $E^{\n-\mu}$. Since the coefficient of the highest degree term is nonzero, the $f_{\mu}$ is an $\n-\mu$ degree polynomial with respect to $E$.
\begin{lemma}
\label{lemma:inverseg}
Assume $f_{\mu}(E)$ is an $\n-\mu$ degree polynomial with respect to $E$ and the matrix $S$ is defined by
$$    (S)_{ij}=f_{j}(E_i), $$
$i=0, 1\dots \n$ and $j=0, 1\dots \n$ where $\n$ is the degree of polynomial, $E_i\in \mathbb{R}$ and $E_i\neq E_j$ for $i\neq j$. Then, matrix $S$ is a non-singular matrix. 
\end{lemma}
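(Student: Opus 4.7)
The plan is to prove non-singularity of $S$ by showing its columns are linearly independent. Suppose there exist scalars $\alpha_0, \alpha_1, \dots, \alpha_{\n}$ such that $\sum_{j=0}^{\n} \alpha_j f_j(E_i) = 0$ for every $i = 0, 1, \dots, \n$. Defining the polynomial $p(E) := \sum_{j=0}^{\n} \alpha_j f_j(E)$, note that $\deg p \leq \n$ since each $f_j$ has degree $\n - j \leq \n$, and that $p$ vanishes at the $\n + 1$ distinct points $E_0, \dots, E_{\n}$. A polynomial of degree at most $\n$ with $\n + 1$ distinct roots must be identically zero, so $p \equiv 0$.

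It then remains to show $\alpha_j = 0$ for all $j$, i.e., that the polynomials $f_0, f_1, \dots, f_{\n}$ are linearly independent as elements of the real polynomial ring. This is immediate from the hypothesis that $f_j$ has degree exactly $\n - j$, so $\{f_0, \dots, f_{\n}\}$ is a family of $\n + 1$ polynomials of pairwise distinct degrees; the usual ``highest-degree term cannot cancel'' argument then rules out any nontrivial relation. Combining the two steps gives $\alpha_0 = \dots = \alpha_{\n} = 0$, hence the columns of $S$ are linearly independent and $S$ is non-singular.

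The step that needs the most care is translating the hypothesis ``$\sum_j \alpha_j f_j(E_i)=0$ for all $i$'' into the polynomial identity $p \equiv 0$; this relies both on distinctness of the $E_i$'s and on having exactly $\n+1$ sample points to saturate the degree bound. As an equivalent route worth noting, one can factor $S = V C$, where $V_{ik} = E_i^{\,k}$ is the Vandermonde matrix (nonsingular because the $E_i$ are distinct) and $C$ is the anti-triangular coefficient matrix defined by writing $f_j(E) = \sum_{k=0}^{\n-j} c_{j,k} E^k$ and setting $C_{k,j} = c_{j,k}$ for $k + j \leq \n$ and $0$ otherwise; its anti-diagonal entries are the nonzero leading coefficients of $f_0,\dots,f_{\n}$, so $\det C \neq 0$, and $\det S = \det V \cdot \det C \neq 0$ follows at once.
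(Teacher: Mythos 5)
Your proof is correct and follows essentially the same strategy as the paper: a nontrivial linear combination of columns gives a degree-$\leq \n$ polynomial vanishing at $\n+1$ distinct points (equivalently, annihilated by a nonsingular Vandermonde matrix), hence identically zero, and the strictly decreasing degrees of $f_0,\dots,f_{\n}$ then force all coefficients to vanish. The $S = VC$ factorization you note is just a cleaner packaging of the same two ingredients the paper invokes.
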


\begin{proof}
See Proof at \cref{sec:inversegproof}
\end{proof}

Therefore, local density of states for the specific 1-D system $\rho_{\theta}(d,E)$ can be represented as 
\begin{equation}
     \sum_{\mu=0}^{\n} f_{\mu}(E) \osum H_{\theta,d}^{\mu}(R\alpha,R\alpha),
\end{equation}
where $f_{\mu}(x)$ is a polynomial of degree $\n-\mu$ depending on polynomial $g$ in \eqref{equ:ldos}. We view $\osum H_{\theta,d}^{\mu}(R\alpha,R\alpha)$ as unknowns and $f_{\mu}(E)$ as coefficients for $\n+1$ different $E$ values. 
Given a local density of states image, \cref{lemma:inverseg} shows when $N_E > \n $, the linear system below has a unique solution
\begin{equation}
\begin{split}
     \sum_{\mu=0}^{n_{poly}} f_{\mu}(E_0) \osum H_{\theta,d}^{\mu}(R\alpha,R\alpha)&=\rho_{\theta}(d,E_{0}),\\
     \sum_{k=0}^{n_{poly}} f_{\mu}(E_1) \osum H_{\theta,d}^{\mu}(R\alpha,R\alpha)&=\rho_{\theta}(d,E_{1}),\\
     \dots \quad \dots\\
     \sum_{\mu=0}^{n_{poly}} f_{\mu}(E_{N_E}) \osum H_{\theta,d}^{\mu}(R\alpha,R\alpha)&=\rho_{\theta}(d,E_{N_{E}}).
\end{split}
\end{equation}
That is to say, given an LDOS image we can solve for moments of the Hamiltonian evaluated in the home cell, i.e.,
\begin{equation}
    \osum H_{\theta,d}^{\mu}(R\alpha,R\alpha),
\end{equation}
for any integer $0 \leq \mu \leq \n$.

It is natural to ask whether the linear system we have to solve, to recover the moments of the Hamiltonian, is stable with respect to changes in the LDOS image. This can be directly measured by estimating $|S^{-1}|_{\infty}$. We now provide a bound on this norm by re-writing $S$ in terms of a Vandermonde matrix and then using a known formula for the norm of the inverse of a Vandermonde matrix.

First of all, it can be noticed that for any $i$, $S_{i, n_{poly}}=f_{n_{poly}}(E_i)=a_{n_{poly}}$. The constant terms in other columns of $S$ matrices are $a_{k}$. Therefore, we can eliminate the constant terms in the $k$-th column by subtracting $\frac{a_k}{a_{n_{poly}}}$ times 
the $n_{poly}$-th column within the $S$ matrix. In another words, we can find a matrix $R_{n_{poly}}$ such that $S R_{n_{poly}}$ does not have constant term except the last column. The $S$ matrix can be represented as
\begin{equation}
    S=S \cdot R_{n_{poly}} \cdot R_{n_{poly}}^{-1}
\end{equation}
where $\cdot$ is used to denote the matrix-matrix multiplication. The matrix $R_{n_{poly}}$ equals

\begin{equation}
    \begin{pmatrix}

1 & 0 & \cdots & 0 \\

0 & 1 & \cdots & 0 \\

\vdots & \vdots & \ddots & \vdots \\

-\frac{a_1}{a_{n_{poly}}}& -\frac{a_2}{a_{n_{poly}}} & \cdots & 1

\end{pmatrix}
\end{equation}
and its inverse $R_{n_{poly}}^{-1}$ equals

\begin{equation}
    \begin{pmatrix}

1 & 0 & \cdots & 0 \\

0 & 1 & \cdots & 0 \\

\vdots & \vdots & \ddots & \vdots \\

\frac{a_1}{a_{n_{poly}}}& \frac{a_2}{a_{n_{poly}}} & \cdots & 1

\end{pmatrix}.
\end{equation}
Note that since
\begin{equation}
    |R_{n_{poly}}|_{\infty} \leq \sum_{i=0}^{n_{poly}} \left |\frac{a_i}{a_{n_{poly}}} \right|,
\end{equation}
 we can derive that
\begin{equation}
    |S^{-1}|_{\infty}= |R_{n_{poly}}(SR_{n_{poly}})^{-1}|_{\infty}\leq |R_{n_{poly}}|_{\infty} |(SR_{n_{poly}})^{-1}|_{\infty}.
\end{equation}
With same process, we could use a sequence of matrices to transform $S$ into  Vandermonde matrices $\tilde{S}:=S R_{n_{poly}} R_{n_{poly}-1} \dots R_{0}$, i.e.,

\begin{equation}
\begin{split}
    S &= S R_{n_{poly}} R_{n_{poly}-1} \dots R_{0} R_{0}^{-1} \dots R_{n_{poly}}^{-1} \\
    &= \tilde{S} R_{0}^{-1} \dots R_{n_{poly}}^{-1}.
\end{split}
\end{equation}

We can then estimate $|S^{-1}|_{\infty}$ by
\begin{equation}
\begin{split}
    |S^{-1}|_{\infty} &= |R_{n_{poly}} R_{n_{poly}-1} \dots R_{0} \tilde{S}^{-1}|_{\infty} \\ &\leq |R_{n_{poly}}|_{\infty} |R_{n_{poly}-1}|_{\infty} \dots |R_{0}|_{\infty} |\tilde{S}^{-1}|_{\infty} .
    \end{split}
\end{equation}
We note that for the matrix $R_j$ where $0\leq j \leq n_{poly}$, 

\begin{equation}
    |R_j|_{\infty} = \frac{\sum_{i=0}^{n-j} |a_{n-i} C_{n-i}^{n-j-i}|}{|a_n C_{n}^{n-j}|}
\end{equation}
by direct calculation. We observe that the $\tilde{S}$ equals 

\begin{equation}
    \begin{pmatrix}

a_{n_{poly}} C_{n_{poly}}^{0} (-E_0)^{n_{poly}} & a_{n_{poly}} C_{n_{poly}}^{1} (-E_0)^{n_{poly}-1} & \cdots &  a_{n_{poly}} \\

a_{n_{poly}} C_{n_{poly}}^{0} (-E_1)^{n_{poly}} & a_{n_{poly}} C_{n_{poly}}^{1} (-E_1)^{n_{poly}-1} & \cdots &  a_{n_{poly}} \\

\vdots & \vdots & \ddots & \vdots \\

a_{n_{poly}} C_{n_{poly}}^{0} (-E_{n_{poly}})^{n_{poly}}& a_{n_{poly}} C_{n_{poly}}^{1} (-E_{n_{poly}})^{n_{poly}-1} & \cdots & a_{n_{poly}}
\end{pmatrix}.
\end{equation}

Recall the estimate for the norm of inverse Vandermonde matrix~\cite{gautschi1974norm}, we conclude that

\begin{equation} \label{eq:S_inv_norm}
    |S^{-1}|_{\infty} \leq \frac{1}{2}\prod_{j=0}^{n_{poly}} \left( \sum_{i=0}^{n-j} |a_{n-i} C_{n-i}^{n-j-i}| \right) \max_{i}\left \{ \left(1+\frac{1}{E_i} \right)\prod_{i\neq j} \frac{1+E_j^2}{|E_i^2-E_j^2|}\right \},
\end{equation}
since $E_i + E_{N_E-i}=0$ for all $0\leq i \leq N_E$ and $N_E$ is odd. Equation \ref{eq:S_inv_norm} shows that the problem becomes unstable when the energy grid is too fine or when $n_{poly}$ is too large, which happens for very sharply peaked approximations of the delta function (the function $g$) in the LDOS formula \eqref{equ:ldos}.


We will next show that when $\theta=0$, we can recover the tight binding parameters for the 1-D coupled chain model defined in \cref{sec:twist_theory} from these moments. 

The $\mu=0$ moment yields $\osum H_{\theta,d}^{0}(R\alpha,R\alpha)=|\mathcal{A}| = 1$. 
Evaluating these moments for $\mu=1,2$, we have 
\begin{equation}
\begin{split}
    \osum H^{1}(R\alpha,R\alpha)&= \epsilon \\
     \osum H^{2}(R\alpha,R\alpha)&= \epsilon^2+2t^2+\nu^2 \sum_{n=-\infty}^{\infty}e^{-\frac{2|d-n|}{l}}\\
      &= \epsilon^2+2t^2+\nu^2(e^{-\frac{2d}{l}}+\sum_{n=1}^{\infty}e^{-\frac{2d+2n}{l}}+\sum_{n=1}^{\infty}e^{-\frac{2n-2d}{l}})\\
      &=\epsilon^2+2t^2+\frac{\nu^2(e^{\frac{2d-2}{l}}+e^{\frac{-2d}{l}})}{1-e^{-\frac{2}{l}}},
\end{split}
\end{equation}
where we used $l > 0$ (by assumption) to sum the geometric series in the last line. We see that the first moment uniquely specifies $\epsilon$. We will now prove that the second moment, evaluated at $d = 0, \frac{1}{4}$, and $\frac{1}{2}$, uniquely specifies $t, \nu$, and $l$. We define 
\begin{equation}
\label{equ:statnu}
    S_{t,\nu,l}(d):= 2t^2+\frac{\nu^2(e^{\frac{2d-2}{l}}+e^{\frac{-2d}{l}})}{1-e^{-\frac{2}{l}}},
\end{equation} 
and it suffices to prove there exists a one to one mapping between $(t,\nu,l)$ and value of $S$ on $d_1=0, d_2=\frac{1}{4}, d_3=\frac{1}{2}$. From direct calculation of the derivative, we have
\begin{equation}
    \begin{split}
        0 < d < \frac{1}{2} \implies \partial_{d} S <0, \\
        \frac{1}{2} < d < 1 \implies \partial_{d} S >0. \\
    \end{split}
\end{equation}
Moreover, since $l > 0$, $\frac{S_{t,\nu,l}(d_2)-S_{t,\nu,l}(d_1)}{S_{t,\nu,l}(d_3)-S_{t,\nu,l}(d_2)}=e^{-\frac{1}{2l}}+e^{\frac{1}{2l}}+1$ is a well-defined and strictly monotonic decreasing function with respect to $l$, i.e
\begin{equation}
     \frac{S_{t,\nu,l}(d_2)-S_{t,\nu,l}(d_1)}{S_{t,\nu,l}(d_3)-S_{t,\nu,l}(d_2)}= \frac{e^{-\frac{3}{2l}}+e^{-\frac{1}{l}}-e^{-\frac{2}{l}}-1}{2e^{-\frac{1}{l}}-e^{-\frac{3}{2l}}-e^{-\frac{1}{l}}}.
\end{equation}
We assume that  $(t,\nu,l)$, $(t',\nu',l')$ are such that
\begin{equation}
\label{equ:S3d}
\begin{split}
    S_{t,\nu,l}(d_1)&=S_{t',\nu',l'}(d_1), \\
    S_{t,\nu,l}(d_2)&=S_{t',\nu',l'}(d_2), \\
    S_{t,\nu,l}(d_3)&=S_{t',\nu',l'}(d_3),
\end{split}
\end{equation}
we will prove that this implies that $t = t', \nu = \nu', l = l'$.
Equation \eqref{equ:S3d} implies that 
\begin{equation}
    \frac{S_{t,\nu,l}(d_2)-S_{t,\nu,l}(d_1)}{S_{t,\nu,l}(d_3)-S_{t,\nu,l}(d_2)}=\frac{S_{t',\nu',l'}(d_2)-S_{t',\nu',l'}(d_1)}{S_{t',\nu',l'}(d_3)-S_{t',\nu',l'}(d_2)}.
\end{equation}
But now we must have that $l=l'$ since the quotient function is a single variable monotonic function. In addition to that, $S_{t,\nu,l}(d_1)=S_{t',\nu',l'}(d_1)$ and $S_{t,\nu,l}(d_2)=S_{t',\nu',l'}(d_2)$ gives $t=t'$ and $\nu=\nu'$ since the parameters are assumed to be positive. 
To be specific, the two conditions and $l=l'$
give 
\begin{equation}
\label{equ:tvsystem}
    \begin{bmatrix}
        2 & \frac{1+e^{-\frac{2}{l}}}{1-e^{-\frac{2}{l}}}\\
        2 & \frac{e^{-\frac{3}{2l}}+e^{-\frac{1}{l}}}{1-e^{-\frac{2}{l}}}
    \end{bmatrix}
    \begin{bmatrix}
        t^2-t'^2 \\ v^2-v'^2
    \end{bmatrix}
    = \begin{bmatrix}
        0 \\0
    \end{bmatrix}.
\end{equation}
Solving the linear system gives that $t^2=t'^2$ and $\nu^2=\nu'^2$. Given $t,t',\nu,\nu'$ are positive, we have that $t=t'$, $\nu=\nu'$.

We have thus proved that the choices of $\epsilon,t,\ell,\nu$ are uniquely determined by the LDOS images 
and, consequently, the one-to-one mapping $\mathcal{K}_1$
in the theorem is proved to be well-posed.

We can also consider the stability of the mapping from moments of Hamiltonian to tight binding parameters,
\begin{equation}
    M_{H}:=\left( \osum H^{1}(R\alpha,R\alpha), \osum H^{2}(R\alpha,R\alpha) \right)_{d_1,d_2,d_3} \to (\epsilon,t,\nu,l),
\end{equation}
where $(\cdot,\cdot)_{d_1,d_2,d_3}$ denotes a six dimensional vector consists of function values of different $d$ values. Assume $M_H$ is perturbed at most $\delta<1$ in $l_{\infty}$ norm and use $S_{t',\nu',l'}(d)$ to denote the perturbed $S_{t,\nu,l}$ function value at $d$, we have that for $d=d_0,d_1,d_2$,

\begin{equation}
    |S_{t,\nu,l}(d)-S_{t',\nu',l'}(d)|\leq \delta.
\end{equation}
Since the $\mathcal{P}_{comp}$ is a compact subset, direct calculation implies that

\begin{equation}
    C_{\mathcal{P},1} |l-l'| \leq \left |\frac{S_{t,\nu,l}(d_2)-S_{t,\nu,l}(d_1)}{S_{t,\nu,l}(d_3)-S_{t,\nu,l}(d_2)}-\frac{S_{t',\nu',l'}(d_2)-S_{t',\nu',l'}(d_1)}{S_{t',\nu',l'}(d_3)-S_{t',\nu',l'}(d_2)} \right| \leq C_{\mathcal{P},0} \delta. 
\end{equation}
where nonzero constant $C_{\mathcal{P},0}$ depends on the range of parameters of $t,\nu,l,\epsilon$ and nonzero constant $C_{\mathcal{P},1}$ depends on range of parameters of $l$. Note that the first order moment of Hamiltonian guarantee that $|\epsilon-\epsilon'|\leq \delta$. The stability of $t$ and $\nu$ can be derived from the stability of solving linear system for \cref{equ:tvsystem} with different $d$ values and the fact that $|t|,|\nu|$ has nonzero lower bound.

Indeed, we compute the eigenvalue of linear system \cref{equ:tvsystem} with general $d_1<d_2$ in order to estimate the stability of inverse map with respect to LDOS images. Define function $F(d)$ as
\begin{equation}
    F(d) = \frac{e^{\frac{2d-2}{l}}+e^{\frac{-2d}{l}}}{1-e^{-\frac{2}{l}}}.
\end{equation}
The two eigenvalues of the linear system~\cref{equ:tvsystem} equals to
\begin{equation}
    \lambda_{\pm} = \frac{2+F(d_2) \pm \sqrt{(F(d_2)-2)^2+8F(d_1)}}{2}.
\end{equation}
Note that $\lambda_{+}$ is positive and has nonzero lower bound. It suffices to consider the upper bound of $|\lambda_{-}^{-1}|$ as a measure for the stability. Through direct calculation, $|\lambda_{-}^{-1}| \leq \left| \frac{C}{d_2-d_1} \right |$ where the constant $C$ only depends on the range of $l$. The perturbed $l$ introduce $\mathcal{O}(\delta)$ error in the coefficient matrix and the moments of Hamiltonian introduce $\mathcal{O}(\delta)$ error in the right-hand side vector of the linear system. Recall a well-known stability result in numerical linear algebra, when solving $Ax=b$ the relative error can be bounded by

\begin{equation}
    \frac{|\Delta x|}{|x|} \leq \|A^{-1} \| \|\Delta A \| + \|A\| \|A^{-1}\|\frac{|\Delta b|}{|b|}.
\end{equation}
In our case $\|A^{-1}\|$ scales as $\frac{C}{d_2-d_1}$ and $\|A\|$ is uniformly bounded. Note that $|x|,|b|$ is bounded from below. In summary, we can derive that
\begin{equation}
    |\Delta x| \leq \frac{C' \delta}{d_2-d_1}
\end{equation}
for some constant $C'=\mathcal{O}(\frac{C_{\mathcal{P},0}}{C_{\mathcal{P},1}})$ which may blow up when $d_1, d_2, d_3$ are close to each other. This calculation shows clearly that the recovery of model parameters from moments of the Hamiltonian can be unstable if $d_1, d_2, d_3$ are chosen too close together. 

Note that instability can also occur regardless of the choice of $d_1, d_2, d_3$ if $l$ is too large. This can be seen from \eqref{equ:tvsystem}, where it is clear that the matrix rows become nearly parallel in this limit. We avoid this problem by assuming that $l$ is drawn from a compact, hence bounded, set.

In conclusion, when $d_1 = 0, d_2 = \frac{1}{4}, d_3 = \frac{1}{2}$, there exists a constant $C_{\mathcal{P}} \neq 0$ such that
\begin{equation} \label{eq:stab_2}
    \max \{ |\epsilon-\epsilon'|,|t-t'|,|\nu-\nu'|,|l-l'|  \} \leq C_{\mathcal{P}}\delta.
\end{equation}
However, one must be aware that the constant $C_{\mathcal{P}}$ may blow up if $d_1, d_2, d_3$ are chosen too close to each other. This concludes the discussion of stability for the inverse map from density of states images to model parameters: putting estimate \eqref{eq:S_inv_norm}, which quantifies the stability of the recovery process from LDOS images to moments of the Hamiltonian, together with estimate \eqref{eq:stab_2}, which quantifies the stability of recovering model parameters from Hamiltonian moments, we obtain the stability estimate \eqref{eq:K_1_inv}.

\end{proof}

\subsubsection{Proof of Lemma \ref{lemma:inverseg}}
\label{sec:inversegproof}
\begin{proof}
We prove the lemma by contradiction. If the matrix $S$ is singular, there exists a non trivial linear combination of column vectors of $S$, which equals to zero vector, i.e, 
\begin{align*}
    \sum_{i=0}^{n} c_{i}f_{i}(E)=0,
\end{align*}
where $c_i$ are not all zero and for any $E=E_j$ and $j\in \{0,1,2,3\dots,\n \}$. The linear system can be viewed as a linear system with Vandermonde matrix $V$ as coefficients and linear combination of $c$ as unknowns. To be specific, $(V)_{ij}=E_{i}^{j}$ and the matrix is non-singular. For the coefficients of highest degree of $E$, it only involve $c_0$. Therefore via solving the linear system, $c_0$ must be zero. Through the same argument, we can derive all $c$ are zero and therefore we find the contradiction.
\end{proof}

\subsection{Proof of Theorem \ref{thm:in_general}}
\label{sec:in_general}
\begin{proof}
With same argument in the proof of \cref{thm:in_specific}, we can show that given a local density of states image we can solve for
\begin{equation}
    \osum H_{\theta,d}^{\mu}(R\alpha,R\alpha),
\end{equation}
for any integer $0\leq \mu \leq \n$. Evaluating these moments we have 
\begin{equation}
\label{equ:Hfunc}
    \begin{split}
        \osum H_{\theta,d}(R\alpha,R\alpha)&= \epsilon \\
        \osum H_{\theta,d}^{2}(R\alpha,R\alpha)&= \epsilon^2+2t^{2}+ \sum_{n=-r_0}^{r_0} \nu^2 f^2\left(\frac{d-n}{l}\right).
    \end{split}
\end{equation}
We define 
\begin{equation}
\label{equ:S}
    S_{t,\nu,l}(d):= 2t^2+\sum_{n=-r_0}^{r_0} \nu^2 f^2\left(\frac{d-n}{l}\right).
\end{equation} 
    The parameter $\epsilon$ can be directly solved from first equation in \eqref{equ:Hfunc}. It suffices to prove there exists an integer $N_0$ such that when $N_{d}>N_0$, 
\begin{align}
    \Tilde{\mathcal{K}_2} :{(t,\nu,l)} \to ( S_{t,\nu,l}(d_1), S_{t,\nu,l}(d_2),\dots, S_{t,\nu,l}(d_{N_d})),
\end{align}
    is a one-to-one mapping from $\mathcal{P}_{fin}$ to its domain. Since $\mathcal{P}_{fin}$ is finite and $S$ is smooth, there exists a constant $c_1 \geq 0$ such that   
    $|\partial_{d}^2 S_{t,\nu,l}(d)|\leq c_1 $ for all $t,\nu,l \in \mathcal{P}_{fin}$ and all $0 \leq d \leq 1$. We first prove the following lemma.
\begin{lemma}
\label{lem:decayf}
If for any $d\in [0,1]$, \begin{equation}
    \partial_{d}S_{t,\nu,l}(d)=\partial_{d}S_{t',\nu',l'}(d),
\end{equation}  and there exists $d\in [0,1]$ such that 
\begin{equation}
    S_{t,\nu,l}(d)=S_{t',\nu',l'}(d),
\end{equation}
then $(t,\nu,l)=(t',\nu',l')$.
\label{lem:asyp}
\end{lemma}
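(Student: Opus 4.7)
The plan is to use analytic continuation together with the asymptotic decay condition on $f$ to compare $S_{t,\nu,l}$ and $S_{t',\nu',l'}$ in the far field. First, I would observe that equality of the $d$-derivatives throughout $[0,1]$ means $S_{t,\nu,l} - S_{t',\nu',l'}$ is constant on $[0,1]$, and combining this with the single-point equality hypothesis promotes it to $S_{t,\nu,l}(d) = S_{t',\nu',l'}(d)$ for every $d \in [0,1]$.

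Next, since $f$ is analytic on a domain $\mathcal{D}$ containing $\mathbb{R}$, each summand $\nu^2 f^2((d-n)/l)$ with $n \in \{-r_0,\dots,r_0\}$ extends to a real-analytic function of $d \in \mathbb{R}$, and hence so do $S_{t,\nu,l}$ and $S_{t',\nu',l'}$. Because these real-analytic functions agree on $[0,1]$, which has an accumulation point, by the identity theorem for real-analytic functions they must agree on all of $\mathbb{R}$, giving
\begin{equation*}
  2t^2 + \nu^2 \sum_{n=-r_0}^{r_0} f^2\!\left(\tfrac{d-n}{l}\right) = 2t'^2 + \nu'^2 \sum_{n=-r_0}^{r_0} f^2\!\left(\tfrac{d-n}{l'}\right) \quad \text{for all } d \in \mathbb{R}.
\end{equation*}

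The substantive step is then to extract each parameter by sending $d \to +\infty$. From \eqref{eq:f_cond} one first deduces $f(r) \to 0$ as $r \to \infty$: otherwise, $f$ could not satisfy the required asymptotic decay of the ratio $f(r/l)/f(r/l')$ for $0 < l < l'$. Taking the limit $d \to \infty$ in the displayed identity above makes all sums vanish, leaving $2t^2 = 2t'^2$, hence $t = t'$ by positivity. We are thus left with the equation $\nu^2 \sum_{n=-r_0}^{r_0} f^2((d-n)/l) = \nu'^2 \sum_{n=-r_0}^{r_0} f^2((d-n)/l')$ on $\mathbb{R}$, and may assume without loss of generality that $l \le l'$.

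The key (but not hard) obstacle is to pass from the pointwise decay of $f(r/l)/f(r/l')$ guaranteed by \eqref{eq:f_cond} to a decay statement for the ratio of the finite sums. This I would handle with the elementary bound
\begin{equation*}
  \frac{\sum_{n=-r_0}^{r_0} f^2((d-n)/l)}{\sum_{n=-r_0}^{r_0} f^2((d-n)/l')} \le \max_{-r_0 \le n \le r_0} \frac{f^2((d-n)/l)}{f^2((d-n)/l')},
\end{equation*}
which is valid since $f$ is positive on $\mathbb{R}$. Assuming for contradiction $l < l'$, each of the $2r_0+1$ ratios on the right tends to $0$ as $d \to \infty$ by \eqref{eq:f_cond}, and hence so does the maximum over this finite index set. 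This forces the left-hand side to $0$, contradicting its equality to the positive constant $\nu'^2 / \nu^2$. Thus $l = l'$, after which $\nu^2 = \nu'^2$ and positivity yield $\nu = \nu'$, completing the proof.
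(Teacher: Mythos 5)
Your proof is correct and follows essentially the same route as the paper's: first promote the hypotheses to the identity $S_{t,\nu,l}(d)=S_{t',\nu',l'}(d)$ for all real $d$ via analyticity of $f$, then extract $t=t'$, $l=l'$, and $\nu=\nu'$ by sending $d\to\infty$ and invoking the ratio-decay condition \eqref{eq:f_cond}. You actually tighten a few points the paper leaves implicit: the paper calls $\partial_d S_{t,\nu,l}-\partial_d S_{t',\nu',l'}$ ``entire,'' whereas you correctly appeal only to real-analyticity on $\mathbb{R}$ together with the identity theorem (all that the hypothesis $\mathbb{R}\subset\mathcal{D}$ supplies); you make explicit the deduction that $f(r)\to 0$ as $r\to\infty$, which \eqref{eq:f_cond} does imply (e.g.\ by a telescoping/iteration argument with $l'=2l$) but the paper takes for granted under the heading of ``exponential decay''; and you supply the mediant-type bound $\sum_n f^2((d-n)/l)/\sum_n f^2((d-n)/l')\le\max_n f^2((d-n)/l)/f^2((d-n)/l')$ to pass from termwise decay of the ratios to decay of the ratio of sums, a step the paper states without justification. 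One sentence of your write-up is under-argued — ``otherwise, $f$ could not satisfy the required asymptotic decay'' — and would benefit from the short iteration argument just mentioned, but this is a presentation issue rather than a gap in the idea.
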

\begin{proof}
Assume the result is not true, then there exists $(t,\nu,l)\neq(t',\nu',l')$ such that
\begin{equation}
    \displaystyle \max_{0\leq d\leq 1}|\partial_{d}S_{t,\nu,l}(d)-\partial_{d}S_{t',\nu',l'}(d)|=0.
\end{equation}
Since the function $\partial_{d}S_{t,\nu,l}(d)-\partial_{d}S_{t',\nu',l'}(d)$ is entire with respect to $d$, it follows that
    \begin{equation}
    \label{equ:S_derivative}
        \partial_d S_{t,\nu,l}(d) = \partial_{d}S_{t',\nu',l'}(d) \text{ for all } 0 \leq d < \infty.
    \end{equation}
    We start by showing how the assumption on $f$ in \eqref{eq:f_cond} implies $t=t'$. First, note that \cref{equ:S_derivative} together with the fact that the two functions are equal for some $d\in [0,1]$ implies that
    \begin{equation}
        S_{t,\nu,l}(d) = S_{t',\nu',l'}(d) \text{ for all } 0\leq d<\infty.
    \end{equation}
    That is to say,
    \begin{equation}
        2t^2+\sum_{n=-r_0}^{r_0} \nu^2 f^2\left(\frac{d-n}{l}\right)=2t'^2+\sum_{n=-r_0}^{r_0} \nu'^2 f^2\left(\frac{d-n}{l'}\right).
    \end{equation}
The exponential decay assumption of interlayer hopping function shows that when $d$ goes to infinity, the equation becomes
\begin{equation}
    t^2=t'^2 \implies t=t',
\end{equation}
 as $t$ and $t'$ are positive. We now show how the assumption also implies $l = l'$. If $l< l'$, the assumption \eqref{eq:f_cond} gives that for each $n$, 
 \begin{equation}
     \lim_{d\to \infty} \frac{f^2\left (\frac{d-n}{l}\right)}{f^2\left(\frac{d-n}{l'} \right)}=0.
 \end{equation}
This leads to the contradiction,
\begin{equation}
     \lim_{d\to \infty} \frac{\sum_{n=-r_0}^{r_0} f^2\left (\frac{d-n}{l}\right)}{\sum_{n=-r_0}^{r_0} f^2\left(\frac{d-n}{l'} \right)}=0 \neq \frac{\nu'^2}{\nu^2}
\end{equation}
as $\nu$ and $\nu'$ are nonzero. For the case $l>l'$, similar contradiction can be found. Therefore we prove that $l=l'$. Since $f$ is positive function, we can further derive $\nu=\nu'$, which contradicts with the statement $\nu \neq \nu'$.
\end{proof}


We now return to the proof that $\tilde{K}_2$ is one-to-one. Suppose that $(t',\nu',l ')\neq (t,\nu,l)$. By \cref{lem:decayf}, the following inequality holds 
\begin{equation} \label{equ:c0}
    c_0 := \displaystyle \min_{(\epsilon, t',\nu',l')\neq(\epsilon, t,\nu,l)\in \mathcal{P}_{\text{fin}}} \max_{0\leq d\leq 1}|\partial_{d}S_{t,\nu,l}(d)-\partial_{d}S_{t',\nu',l'}(d)| > 0.
\end{equation}
We also introduce the following
\begin{equation}
\label{equ:c1}
    c_1 := \displaystyle \max_{(\epsilon, t,\nu,l),\in \mathcal{P}_{\text{fin}} }\max_{0\leq d \leq 1} |\partial_d^2 S_{t,\nu,l}(d)| \geq 0.
\end{equation}
We now assume for a contradiction that 
\begin{align} \label{eq:tilde_K}
    \Tilde{\mathcal{K}_2}(t,\nu,l)=\Tilde{\mathcal{K}_2}(t',\nu',l ')
\end{align}
holds with $N_d > \max \{ 1, \frac{3 c_1}{c_0} \} $. First, note that \eqref{eq:tilde_K} gives 
\begin{align} \label{eq:thaat}
    \frac{S_{t,\nu,l}(d_{i+1})-S_{t,\nu,l}(d_{i})}{d_{i+1}-d_{i}}=\frac{S_{t',\nu',l'}(d_{i+1})-S_{t',\nu',l'}(d_{i})}{d_{i+1}-d_{i}}.
\end{align}
Let $\Delta d := \frac{1}{N_d}$ denote the spacing of $d$ grid (recall \eqref{eq:d_vals}). Then, by Taylor expansion, we have, for any adjacent grid points $d_i, d_{i+1}$,
\begin{align*}
    \frac{S_{t,\nu,l}(d_{i+1})-S_{t,\nu,l}(d_{i})}{\Delta d}-\partial_{d} S_{t,\nu,l}(d_i)\leq \frac{c_1}{2} \Delta d.
\end{align*}
Using uniform boundedness of $\partial_d^2 S$ again, we can deduce 
\[\left|\frac{S_{t,\nu,l}(d_{i+1})-S_{t,\nu,l}(d_{i})}{\Delta d}-\partial_{d} S_{t,\nu,l}(d)\right|\leq \frac{3c_1}{2} \Delta d\] for all $d\in [d_i,d_{i+1}].$
With the same argument, the same inequality can be established for $S_{t',\nu',l'}$. Using \eqref{eq:thaat} and the triangle inequality, these two inequalities give that for any $d\in [d_i,d_{i+1}]$,
\begin{align*}
    |\partial_{d} S_{t,\nu,l}(d)-\partial_{d} S_{t',\nu',l'}(d)|\leq \left |\frac{S_{t,\nu,l}(d_{i+1})-S_{t,\nu,l}(d_{i})}{\Delta d}-\partial_{d} S_{t,\nu,l}(d) \right |\\+ \left |\frac{S_{t',\nu',l'}(d_{i+1})-S_{t',\nu',l'}(d_{i})}{\Delta d}-\partial_{d} S_{t',\nu',l'}(d) \right|
    \leq 3c_1 \Delta d .
\end{align*}
By repeating the argument for $d$ in each interval $[d_i,d_{i+1}]$, the inequality holds for any $d\in [0,1]$. But now we have a contradiction of \eqref{equ:c0}, because, by our choice of $N_d$, we have, for any $d \in [0,1]$,
\begin{equation}
    c_0 \leq |\partial_{d} S_{t,\nu,l}(d)-\partial_{d} S_{t',\nu',l'}(d)| \leq 3c_1 \Delta d= 3c_1 \frac{1}{N_d} < c_0.
\end{equation}
\end{proof}

\subsection{Proof of Theorem \ref{thm:invcon}}
\label{sec:invcon}

\begin{proof}

The continuity for operator $\mathcal{K}_{2}^{-1}$ follows directly from the definition as this is a mapping from a discrete set to discrete set.

To prove the operator $\mathcal{K}_{1}^{-1}$ is continuous, first recall that the mapping $\mathcal{K}_{1}$ defined in Theorem \ref{thm:in_specific} maps a compact subset $\mathcal{P}_{\text{comp}}$ of $\mathbb{R}_{+}^{4}$ to the LDOS evaluated on a grid, 
\begin{equation}
    \mathcal{K}_1: 
        \mathcal{P}_{\text{comp}} \rightarrow \text{Range}(\mathcal{K}_1) \subset \mathbb{R}^{N_d\times N_E}.
\end{equation}
Since $\mathcal{K}_1$ is a bijection between $\mathcal{P}_{\text{comp}}$ and its range by \cref{thm:in_specific}, $\mathcal{K}_{1}^{-1}$ is well defined. We will show continuity by establishing that for any $M$ closed in $\mathcal{P}_{\text{comp}}$, we have that $(\mathcal{K}_1^{-1})^{-1}(M)$ is closed in $\text{Range}(\mathcal{K}_1)$.

Let $M \subset \mathcal{P}_{\text{comp}}$ be a closed set in the subspace topology of $\mathcal{P}_{\text{comp}}$. It follows that the set $M$ can be represented as $\Tilde{M} \cap  \mathcal{P}_{\text{comp}}$ where $\Tilde{M}$ is a closed and bounded set in $\mathbb{R}^{4}$. It follows that $\mathcal{K}_1(\Tilde{M})$ is also closed and bounded since $\mathcal{K}_1$ is continuous. We then have that $\mathcal{K}_1(M)=\mathcal{K}_1(\Tilde{M}\cap \mathcal{P}_{\text{comp}})=\mathcal{K}_1(\Tilde{M})\cap \mathcal{K}_1(\mathcal{P}_{\text{comp}})$, so that $\mathcal{K}_1(M)$ is closed in the subspace topology of $\mathcal{K}_1(\mathcal{P}_{\text{comp}}) = \text{Range}(\mathcal{K}_1)$.

\end{proof}

\section{Derivation of momentum space LDOS formula \cref{eq:LDOS_momspace} from \cref{equ:ldos}}
\label{sec:repre_sec}

We start by stating the derivation as a theorem.
\begin{theorem}
Suppose the Bloch transform $\mathcal{G}$ and its inverse $\mathcal{G}^{-1}$ is defined as in \cref{eq:G} and \cref{equ:Ginverse} respectively. Given a tight binding model Hamiltonian $H$ of a periodic system with period $p$, defined in \cref{equ:commensurateH}, the local density of states of the system defined in \cref{equ:ldos}
\begin{equation}
    \rho_{\theta}(d,E):=\osum g( H-E)(R\alpha,R\alpha),
\end{equation}
has a integral representation
\begin{equation}
    \rho_{\theta}(d,E) = \frac{1}{|\Gamma^*|} \sum_{\alpha: R\alpha \in \mathcal{I}_{2,d}} \int_{0}^{\frac{2 \pi}{p}} g( H(k)-E)_{\alpha,\alpha} \ \mathrm{d}k,
\end{equation}
    where $H(k)=[\mathcal{G}H\mathcal{G}^{-1}](k)$, $|\Gamma^{*}|=\frac{2\pi}{p}$ and $\alpha : R \alpha \in \mathcal{I}_{2,d}$ denotes the set of supercell orbital indices $\alpha \in \mathcal{A}^s$ corresponding to the $R \alpha \in \mathcal{I}_{2,d}$ in \eqref{equ:ldos}.
\end{theorem}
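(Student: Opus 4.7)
The plan is to use the fact that $g$ is a polynomial in combination with the intertwining relation~\eqref{equ:BtransH}, which gives the fiberwise action $\mathcal{G} H \mathcal{G}^{-1} = H(k)$ viewed as a multiplication operator (by a matrix) in $k$. First I would check that for any polynomial $g$ and any $E \in \mathbb{R}$ one has
\begin{equation*}
g(H - E) \;=\; \mathcal{G}^{-1}\, g(H(k) - E)\, \mathcal{G},
\end{equation*}
on $\ell^2(\Omega^s)$. This follows by iterating~\eqref{equ:BtransH} to get $\mathcal{G} H^m \mathcal{G}^{-1} = H(k)^m$ for every nonnegative integer $m$, and then extending by linearity to the polynomial $g(\,\cdot\,-E)$.

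Second, I would compute the action of $\mathcal{G}$ on the canonical basis vector $e_{R\alpha} \in \ell^2(\Omega^s)$ defined by $e_{R\alpha}(R'\alpha') = \delta_{RR'}\delta_{\alpha\alpha'}$. From~\eqref{eq:G} one obtains
\begin{equation*}
[\mathcal{G} e_{R\alpha}]_{\alpha'}(k) \;=\; \frac{\delta_{\alpha\alpha'}}{|\Gamma^{*}|^{1/2}}\, e^{-ik(R + \tau_\alpha)}.
\end{equation*}
Applying the fiberwise operator $g(H(k)-E)$ then multiplies this in $k$ by the matrix entries $g(H(k)-E)_{\alpha',\alpha}$, and applying $\mathcal{G}^{-1}$ via~\eqref{equ:Ginverse} and taking the inner product with $e_{R\alpha}$ yields the identity
\begin{equation*}
g(H-E)(R\alpha,R\alpha) \;=\; \frac{1}{|\Gamma^{*}|}\int_{\Gamma^{*}} g(H(k)-E)_{\alpha,\alpha}\, e^{ik(\tau_\alpha - \tau_\alpha)}\, \mathrm{d}k \;=\; \frac{1}{|\Gamma^{*}|}\int_{0}^{2\pi/p} g(H(k)-E)_{\alpha,\alpha}\, \mathrm{d}k.
\end{equation*}
Notice that the $R$-dependence has dropped out, consistent with supercell periodicity of $H$.

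Finally, I would sum this identity over $R\alpha \in \mathcal{I}_{2,d}$, using that each $R\alpha \in \mathcal{I}_{2,d}$ corresponds to exactly one supercell orbital index $\alpha$ that lands in the home cell $\Gamma_{2,d}$. Substituting into the definition~\eqref{equ:ldos} of $\rho_\theta(d,E)$ gives the desired identity~\eqref{eq:LDOS_momspace}, with the sum over $R\alpha \in \mathcal{I}_{2,d}$ reduced (in the notation of the theorem) to the sum over $\alpha : R\alpha \in \mathcal{I}_{2,d}$. The main obstacle is purely notational bookkeeping, particularly tracking the $\tau_\alpha$ phase factors in the Bloch transform and verifying that they cancel on the diagonal; no analytic difficulty arises since $g$ is a polynomial and all sums involved in $\mathcal{G} e_{R\alpha}$ are finite.
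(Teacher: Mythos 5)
Your proposal is correct and follows essentially the same route as the paper: reduce to monomials $H^m$ via linearity, use the intertwining relation $\mathcal{G}H^m\mathcal{G}^{-1}=H(k)^m$, and then compute the diagonal matrix element $(R\alpha,R\alpha)$ in momentum space. The only organizational difference is that the paper abstracts the Fourier-space inner-product step into a separate Parseval-type lemma (Lemma~\ref{lemma:inner}), whereas you carry out the equivalent computation directly on the basis vectors $e_{R\alpha}$; the phase cancellation $e^{ik(\tau_\alpha-\tau_\alpha)}=1$ you point out is exactly what makes the paper's lemma close the argument.
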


\begin{proof}
Since $g$ is a polynomial, it is suffice to prove that for the power function the identity holds. Suppose we use the $\psi^{\dagger}$ to denote the complex conjugate and transpose of the vector $\psi$. We use $\psi_{R\alpha}$ to denote a vector such that $\psi_{R\alpha}(R'\alpha')=\delta_{RR'}\delta_{\alpha,\alpha'}$. We then have, for any integer $m>0$,

\begin{equation}
\begin{split}
        \osum H^{m}(R\alpha,R\alpha) 
        &=\osum \psi^{\dagger}_{R\alpha} H^{m} \psi_{R\alpha} \\
        &= \osum \psi^{\dagger}_{R\alpha}  \mathcal{G}^{-1} (\mathcal{G}H\mathcal{G}^{-1})^{m} \mathcal{G}\psi_{R\alpha}.
\end{split}
\end{equation}
The second identity follows from  $\mathcal{G}^{-1}\mathcal{G}=I$, derived through definitions of the Bloch transform and its inverse for $\mathcal{R}^{s}$. 

\begin{lemma}
\label{lemma:inner}
For any $\psi, \phi \in \ell^{2}(\Omega)$, if we use $\psi^{\dagger}\phi$ to denote the inner product of the two vectors, then
\begin{equation}
    \psi^{\dagger}\phi =\sum_{\alpha \in \mathcal{A}^s}\int_{\Gamma^{*}} \widecheck{\psi}_{k}^{\dagger}(\alpha) \widecheck{\phi}_{k}(\alpha) \ \mathrm{d}k,
\end{equation}
    where $\widecheck{\psi}_{k}=[\mathcal{G}\psi](k)$ and $\widecheck{\phi}_{k}=[\mathcal{G}\phi](k)$ are defined as in \cref{eq:G}.
\end{lemma}
\begin{proof}
The inner product and inverse Bloch transform give that
\begin{equation}
    \begin{split}
        \psi^{\dagger}\phi &= \sum_{R\alpha} \psi^{\dagger}(R\alpha) \phi(R\alpha) \\
        &=\sum_{R\alpha} \frac{1}{|\Gamma^{*}|}\int_{\Gamma^{*}}e^{-ik'\cdot (R+\tau_{\alpha})}\widecheck{\psi}_{k'}^{\dagger}(\alpha)\ \mathrm{d}k' \int_{\Gamma^{*}} e^{ik\cdot (R+\tau_{\alpha})} \widecheck{\phi}_{k}(\alpha)\ \mathrm{d}k \\
        &=\frac{1}{|\Gamma^{*}|}\sum_{\alpha} \sum_{R} \int_{\Gamma^{*}} \int_{\Gamma^{*}} e^{i(k-k')\cdot R}e^{i(k-k')\cdot \tau_{\alpha}} \widecheck{\psi}_{k'}^{\dagger}(\alpha) \widecheck{\phi}_{k}(\alpha) \ \mathrm{d}k' \mathrm{d}k\\
        &=\sum_{\alpha} \int_{\Gamma^{*}} \int_{\Gamma^{*}} \delta(k-k') e^{i(k-k')\cdot \tau_{\alpha}}\widecheck{\psi}_{k'}^{\dagger}(\alpha) \widecheck{\phi}_{k}(\alpha) \ \mathrm{d}k' \mathrm{d}k\\
        &=\sum_{\alpha}\int_{\Gamma^{*}}   \widecheck{\psi}_{k}^{\dagger}(\alpha) \widecheck{\phi}_{k}(\alpha) \mathrm{d}k.
    \end{split}
\end{equation}

The fourth identity follows directly from Poisson summation formula and others follows from direct calculation.
\end{proof}

To continue derivation,
\begin{equation}
\label{equ:krepf}
    \begin{split}
        \osum \psi^{\dagger}_{R\alpha} \mathcal{G}^{-1} (\mathcal{G}H\mathcal{G}^{-1})^{m} \mathcal{G}\psi_{R\alpha}&= \osum \int_{\Gamma^{*}} \widecheck{\psi}^{\dagger}_{R\alpha,k} (\mathcal{G}H\mathcal{G}^{-1})^{m}\widecheck{\psi}_{R\alpha,k} \mathrm{d}k \\
        &=\frac{1}{|\Gamma^{*}|} \sum_{\alpha: R\alpha \in \mathcal{I}_{2,d}} \int_{\Gamma^{*}} H(k)_{\alpha,\alpha}^{m} \mathrm{d}k,
    \end{split}
\end{equation}
where $\widecheck{\psi}_{R\alpha,k}:=\mathcal{G}\psi_{R\alpha}(k)$. The first identity in \cref{equ:krepf} follows from \cref{lemma:inner} and the last identity follows from direct calculation.

\end{proof}

\section{Proof of \cref{thm:krepresent}}
\label{sec:krepresent2}
\begin{proof}

We define the truncated Hamiltonian $H_{\text{trunc}}$ with same intralayer coupling defined as in $\eqref{eq:hop0}$ and interlayer coupling as in $\eqref{eq:hop}$, except that there exists an positive integer $c>3$ such that 
\begin{equation}
    \label{equ:hc_cond2}
    H_{\text{trunc}}(R\alpha,R'\alpha')=0, \quad \text{if} \  |R-R'| > cp.
\end{equation}
Then the Bloch transformed truncated Hamiltonian $H_{k, \text{trunc}}$ is defined as $\mathcal{G}H_{\text{trunc}}\mathcal{G}^{-1}$. To estimate the numerical error between \eqref{eq:LDOS_momspace} and \eqref{eq:LDOS_momspace_trunc}, it suffices to estimate the following two terms,
\begin{equation}
\label{eq:B1plusB2}
\begin{split}
    & |\rho(d,E)-\varrho(d,E)| \\
    & 
    \leq \frac{1}{|\Gamma^*|}\left| \sum_{\alpha: R\alpha \in \mathcal{I}_{2,d}} \int_{0}^{\frac{2 \pi}{p}} g( H_{\theta,d}(k)-E)_{\alpha,\alpha} \text{d}k -\sum_{\alpha: R\alpha \in \mathcal{I}_{2,d}} \sum_{i=1}^{M} g(H_{\theta,d}(k_i)-E)_{\alpha,\alpha} \Delta k \right|\\
    &\quad \quad + \frac{1}{|\Gamma^*|}\left|\sum_{\alpha: R\alpha \in \mathcal{I}_{2,d}} \sum_{i=1}^{M} g(H_{\theta,d}(k_i)-E)_{\alpha,\alpha} \Delta k - \sum_{\mathcal{I}_{2,d}} \sum_{i=1}^{M} g(H_{k_i,\text{trunc}}-E)_{\alpha,\alpha} \Delta k\right|\\
    & =:B_1+B_2.
\end{split}
\end{equation}

\begin{lemma}
\label{lemma:H_k_periodic}
For any polynomial $g$ and any index $\alpha \in \mathcal{A}^{s}$, $g(H_{\theta,d}(k))_{\alpha,\alpha}$ is a periodic function in $k$ with period $|\Gamma^{*}|$. 
\end{lemma}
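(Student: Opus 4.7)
The plan is to exploit the fact that the Bloch Hamiltonian $H_{\theta,d}(k)$ is not itself $|\Gamma^*|$-periodic in $k$, but only periodic up to conjugation by a diagonal unitary coming from the intra-cell shifts $\tau_\alpha$. Since conjugation by a diagonal matrix leaves diagonal entries invariant, the function $g(H_{\theta,d}(k))_{\alpha,\alpha}$ will nevertheless be exactly periodic, which is what we need.

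Concretely, first I would compute the matrix elements $H_{\theta,d}(k)_{\alpha,\alpha'}$ directly from the definition of the Bloch transform \eqref{eq:G}--\eqref{equ:Ginverse}. Writing $G := |\Gamma^*| = 2\pi/p$ for the generator of the reciprocal lattice, a short calculation of the form
\begin{equation*}
    H_{\theta,d}(k)_{\alpha,\alpha'} = \sum_{R \in \mathcal{R}^s} H_{\theta,d}(0\alpha, R\alpha') \, e^{i k \cdot (R + \tau_{\alpha'} - \tau_{\alpha})}
\end{equation*}
together with the fact that $R \in p\mathbb{Z}$ implies $e^{i G \cdot R} = 1$, yields the quasi-periodicity relation
\begin{equation*}
    H_{\theta,d}(k+G)_{\alpha,\alpha'} = e^{i G \cdot (\tau_{\alpha'} - \tau_{\alpha})} \, H_{\theta,d}(k)_{\alpha,\alpha'}.
\end{equation*}
Equivalently, if $U(G)$ denotes the diagonal unitary matrix with entries $U(G)_{\alpha,\alpha} = e^{i G \cdot \tau_\alpha}$, then $H_{\theta,d}(k+G) = U(G)\, H_{\theta,d}(k) \, U(G)^{-1}$.

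Next, because $g$ is a polynomial and conjugation by $U(G)$ commutes with sums and products of matrices, we get $g(H_{\theta,d}(k+G)) = U(G)\, g(H_{\theta,d}(k))\, U(G)^{-1}$. Taking the $(\alpha,\alpha)$ diagonal entry and using the fact that $U(G)$ is diagonal gives
\begin{equation*}
    g(H_{\theta,d}(k+G))_{\alpha,\alpha} = U(G)_{\alpha,\alpha}\, g(H_{\theta,d}(k))_{\alpha,\alpha}\, U(G)^{-1}_{\alpha,\alpha} = g(H_{\theta,d}(k))_{\alpha,\alpha},
\end{equation*}
which is the claimed $|\Gamma^*|$-periodicity.

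I do not anticipate any serious obstacle: the only subtle point is keeping careful track of the phase factors $e^{i k \tau_\alpha}$ that appear in the definition \eqref{eq:G} of the Bloch transform, since these are what prevent $H_{\theta,d}(k)$ itself from being periodic. Once the quasi-periodicity formula $H_{\theta,d}(k+G) = U(G) H_{\theta,d}(k) U(G)^{-1}$ is written down, the rest is immediate from the fact that $U(G)$ is diagonal.
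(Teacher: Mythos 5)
Your argument is correct and rests on the same cancellation as the paper's proof, but packages it more structurally. The paper proceeds by a direct expansion: it writes $(H_{\theta,d}(k)^m)_{\alpha,\alpha}$ as a sum over chains of orbital indices $\alpha \to \alpha_1 \to \cdots \to \alpha_{m-1} \to \alpha$ and observes that along any such closed path the $\tau$-dependent phases $e^{-ik(\tau_{\alpha_j}-\tau_{\alpha_{j+1}})}$ telescope to $1$, so only the manifestly $|\Gamma^*|$-periodic factors $e^{-ik\tilde R}$ with $\tilde R \in p\mathbb{Z}$ remain. You instead first isolate the covariance identity $H_{\theta,d}(k+|\Gamma^*|) = U\,H_{\theta,d}(k)\,U^{-1}$ for a fixed diagonal unitary $U$, then push this through the polynomial functional calculus and use that conjugation by a diagonal matrix leaves diagonal entries unchanged. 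The two arguments encode the same information; yours is cleaner, makes the gauge structure explicit, and would carry over unchanged to any $g$ defined by holomorphic functional calculus rather than only to polynomials, while the paper's path expansion stays elementary and avoids introducing $U$ at all. One bookkeeping point: with the paper's sign convention $e^{-ik\cdot(R+\tau_\alpha)}$ in \eqref{eq:G}, the covariance relation comes out as $H(k+G)=U H(k)U^{-1}$ with $U_{\alpha,\alpha}=e^{-iG\tau_\alpha}$ rather than $e^{+iG\tau_\alpha}$; since $U$ and $U^{-1}$ play symmetric roles in the conclusion this is harmless, but it is worth fixing for consistency with the paper's definitions.
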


\begin{proof}
From the derivation of the Bloch transformed Hamiltonian in \cref{sec:Hk}, each component of the Hamiltonian has the format $\sum_{\Tilde{R}=R-R'} h(\Tilde{R}+\tau_{\alpha}-\tau_{\alpha'})e^{-ik[\Tilde{R}+\tau_{\alpha}-\tau_{\alpha'}]}$ where $h$ could be interlayer coupling or intralayer coupling. For any $m$-th power of Bloch transform of $H_{\theta,d}(k)$, the diagonal term $H_{\theta,d}^{m}(k)_{\alpha,\alpha}$ equals

\begin{equation}
    \sum_{\alpha_1, \alpha_2,\dots, \alpha_{m-1}} H_{\theta,d}(k)_{\alpha,\alpha_1}H_{\theta,d}(k)_{\alpha_1,\alpha_2} \dots H_{\theta,d}(k)_{\alpha_{m-1},\alpha},
\end{equation}
where $\alpha_i \in \mathcal{A}^s$ for $1\leq \alpha \leq m-1$. 
The phases depending on $k$ in addition to $e^{-ik\Tilde{R}}$ term are 
\begin{equation}
    e^{-ik[\tau_{\alpha}-\tau_{\alpha_1}]}e^{-ik[\tau_{\alpha_1}-\tau_{\alpha_2}]}\dots 
    e^{-ik[\tau_{\alpha_{m-1}}-\tau_{\alpha}]}=1.
\end{equation}
Therefore, the diagonal terms of Bloch transformed Hamiltonian are periodic with respect to the commensurate reciprocal lattice, i.e the period of diagonal term for Bloch transformed Hamiltonian is $|\Gamma^{*}|$.
\end{proof}

\begin{lemma}
\label{lemma:H_k_analytic}
    For any polynomial $g$ and any index $\alpha \in \mathcal{A}^{s}$,  $g(H_{\theta,d}(k))_{\alpha,\alpha}$ 
    extends to an analytic function in $k$ in a non-trivial strip containing the real axis.
\end{lemma}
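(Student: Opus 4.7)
The plan is to show that each entry of the matrix $H_{\theta,d}(k)$ extends from $k\in\mathbb{R}$ to a holomorphic function on a non-trivial strip $\{k\in\mathbb{C}:|\mathrm{Im}(k)|<\delta\}$ for some $\delta>0$, and then to deduce the stated result by observing that $g(H_{\theta,d}(k))_{\alpha,\alpha}$ is a polynomial expression in these entries.

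First, I would recall from Appendix \ref{sec:Hk} that each entry of $H_{\theta,d}(k)$ has the form
\begin{equation*}
H_{\theta,d}(k)_{\alpha,\alpha'}=\sum_{\tilde R\in\mathcal{R}^s} H(\tilde R\alpha, 0\alpha')\, e^{-ik[\tilde R+\tau_{\alpha}-\tau_{\alpha'}]},
\end{equation*}
where the sum contains only finitely many intralayer terms (by \eqref{equ:commensurateH}) and an infinite series of interlayer hopping terms controlled by $h$. The intralayer contributions are finite linear combinations of entire exponentials in $k$, hence entire. For the interlayer terms, the assumption \eqref{equ:h_cond} gives $|h(x)|\leq h_0 e^{-\gamma|x|}$, so for complex $k$ with $|\mathrm{Im}(k)|\leq \delta<\gamma$ each interlayer summand is bounded in modulus by $h_0 e^{-(\gamma-\delta)|\tilde R+\tau_\alpha-\tau_{\alpha'}|}$. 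This bound is summable over $\tilde R\in\mathcal{R}^s$ uniformly on compact subsets of the strip $|\mathrm{Im}(k)|<\gamma$, so by the Weierstrass M-test the series converges uniformly on compact subsets. Since each summand is entire in $k$, the uniform limit is holomorphic on the strip $|\mathrm{Im}(k)|<\gamma$ by Weierstrass's theorem on uniform convergence of analytic functions.

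Next, because $\mathcal{A}^s$ is a finite set, $H_{\theta,d}(k)$ is a matrix of fixed finite size with entries holomorphic in the strip. Matrix multiplication is a polynomial operation on entries, so each entry of $H_{\theta,d}(k)^m$ is holomorphic in the strip for every integer $m\geq 0$. Consequently, for any polynomial $g$, the diagonal entry $g(H_{\theta,d}(k))_{\alpha,\alpha}$ is a finite linear combination of entries of powers of $H_{\theta,d}(k)$, which is holomorphic in the strip $|\mathrm{Im}(k)|<\gamma$ and agrees with the continuous real-variable expression on $\mathbb{R}$.

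The main (and only real) obstacle is justifying the interchange of analyticity with the infinite lattice sum defining the interlayer part of $H_{\theta,d}(k)$; this is handled by the exponential decay bound \eqref{equ:h_cond} combined with the standard Weierstrass argument. The width of the strip may be taken as any $\delta<\gamma$, which is non-trivial since $\gamma>0$.
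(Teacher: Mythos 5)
Your proof is correct and follows essentially the same route as the paper: establish analyticity of each entry of $H_{\theta,d}(k)$ in a strip via uniform convergence of the interlayer sum (using the exponential decay \eqref{equ:h_cond}), then note the intralayer part contributes only finitely many entire terms, and conclude by observing that $g(H_{\theta,d}(k))_{\alpha,\alpha}$ is a polynomial expression in finitely many analytic entries. Your write-up is a bit more careful than the paper's on two points that are implicit there — restricting to $|\mathrm{Im}(k)|\leq\delta<\gamma$ so the bound $h_0 e^{-(\gamma-\delta)|\cdot|}$ is actually summable and invoking the Weierstrass M-test explicitly — but these are refinements of the same argument, not a different approach.
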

\begin{proof}
    Since addition, multiplication of complex analytic function is still complex analytic function, it suffice to prove that each component of Bloch transformed Hamiltonian is complex analytic. To be specific, $\sum_{\Tilde{R}=R-R'} h_{inter}(\Tilde{R}+\tau_{\alpha}-\tau_{\alpha'})e^{-ik[\Tilde{R}+\tau_{\alpha}-\tau_{\alpha'}]}$ as a function of $k$ converges uniformly in the strip $|\text{Im}(k)|<\gamma$ due to exponential decay of the coefficient function $h_{inter}$ \cref{equ:h_cond}. Since similar remarks hold also for $h_{intra}$, and the uniform limit of analytic functions is analytic, the lemma is proved.
\end{proof}

We state a theorem on numerical quadrature error estimate as below (See \cite{trefethen2014exponentially} for more details and proofs).

\begin{theorem}[Trefethen]
\label{thm:quadrature}
    Suppose v is $T$-periodic and analytic and satisfies $|v(k)|\leq C_q$ in the strip $-a < \text{Im}(k) < a $ for some $a>0$. Then for any $M\geq 1$,
    \begin{equation}
        |\mathcal{I}_M-\mathcal{I}|\leq \frac{2TC_q}{e^{2\pi a M/T}-1}
    \end{equation}
where $\mathcal{I}=\int_{0}^{T}v(k)dk$ and $\mathcal{I}_M=\frac{T}{M}\sum_{i=1}^{M}  v(\frac{Ti}{M})$.
\end{theorem}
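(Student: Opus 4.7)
The plan is to prove this as a classical Fourier-analytic estimate, following the standard route via aliasing and exponential decay of Fourier coefficients for analytic periodic functions. Since $v$ is $T$-periodic and continuous on $\mathbb{R}$, it admits a Fourier series
\begin{equation*}
    v(k) = \sum_{n\in\mathbb{Z}} c_n\, e^{2\pi i n k/T}, \qquad c_n = \frac{1}{T}\int_0^T v(k)\, e^{-2\pi i n k/T}\, \mathrm{d}k,
\end{equation*}
so that $\mathcal{I} = \int_0^T v(k)\,\mathrm{d}k = T c_0$. The first main step is to identify the quadrature sum in terms of the Fourier coefficients: applying $\mathcal{I}_M = \frac{T}{M}\sum_{i=1}^M v(Ti/M)$ to each Fourier mode and using that $\sum_{i=1}^M e^{2\pi i n i /M}$ equals $M$ if $M\mid n$ and $0$ otherwise, one obtains the aliasing identity
\begin{equation*}
    \mathcal{I}_M = T\sum_{m\in\mathbb{Z}} c_{mM}.
\end{equation*}
Subtracting $\mathcal{I} = Tc_0$ gives $\mathcal{I}_M - \mathcal{I} = T\sum_{m\neq 0} c_{mM}$.

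The second main step is to exploit analyticity of $v$ in the strip $|\mathrm{Im}(k)|<a$ to get exponentially decaying Fourier coefficients. For $n>0$, shift the contour of integration in the definition of $c_n$ from the real axis down to $\mathrm{Im}(k) = -a + \varepsilon$; periodicity of $v$ makes the vertical boundary contributions cancel, and the integrand picks up a factor of $e^{-2\pi n(a-\varepsilon)/T}$. Using the uniform bound $|v|\le C_q$ in the strip and sending $\varepsilon\downarrow 0$ yields $|c_n|\le C_q e^{-2\pi a n/T}$. For $n<0$ one shifts the contour in the opposite direction, giving $|c_n|\le C_q e^{-2\pi a |n|/T}$ for all $n\neq 0$.

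The final step is a geometric-series summation. Combining the previous two steps,
\begin{equation*}
    |\mathcal{I}_M - \mathcal{I}| \;\le\; T\sum_{m\neq 0} |c_{mM}| \;\le\; 2T C_q \sum_{m=1}^{\infty} e^{-2\pi a m M/T} \;=\; \frac{2TC_q\, e^{-2\pi a M/T}}{1 - e^{-2\pi a M/T}} \;=\; \frac{2TC_q}{e^{2\pi a M/T}-1},
\end{equation*}
which is the stated bound. The only subtle point is justifying the contour shift, which requires that $v$ is bounded on closed sub-strips; this is available because the hypothesis bounds $|v|$ on the open strip and periodicity lets us truncate horizontally to a compact rectangle on which $v$ is continuous, so the vertical edges cancel by periodicity. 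I do not expect any real obstacle, since everything reduces to Fourier-series manipulations once the coefficient decay is established; no macros beyond what is already defined are needed.
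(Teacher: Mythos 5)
Your proof is correct and is essentially the standard argument from the reference the paper cites for this result (Trefethen--Weideman); the paper itself gives no proof, only the citation. The one point worth tightening in a write-up is to establish the exponential decay of the Fourier coefficients \emph{before} the aliasing step, since the absolute convergence it provides is what licenses interchanging the sum over $n$ with the quadrature sum.
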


\cref{thm:quadrature} gives an upper bound for $B_1$ in \eqref{eq:B1plusB2} as  \cref{lemma:H_k_periodic} and \cref{lemma:H_k_analytic} guarantee that the condition of the theorem holds for $B_1$. The period $T=|\Gamma^{*}|=\frac{2\pi}{p}$ due to \cref{lemma:H_k_periodic} and the polynomials of Bloch transformed Hamiltonian is analytic due to \cref{lemma:H_k_analytic}. We let constant $a=\gamma/2$ in the case and the constants $C_q$ depend on polynomial $g$ and $a$. Therefore, we derive that

\begin{equation}
    B_1\leq \frac{4\pi C_q}{pe^{p\gamma M/2}-p} \cdot \frac{|\mathcal{I}_{2,d}|}{|\Gamma^{*}|}.
\end{equation}

 To estimate $B_2$, we bound the perturbation of Bloch transformed Hamiltonian $|H_{\theta,d}(k_i)-H_{k_i,\text{trunc}}|$. 
 The tight binding Hamiltonian with assumption \cref{equ:h_cond} can be proved to be bounded operator through standard results (see details in \cite{kong2023modeling} for example) and its Bloch transformed Hamiltonian is bounded as well. With Cauchy integral formulas, we have
\begin{equation}
\begin{split}
        g(H_{\theta,d}(k)-E)=\frac{1}{2\pi i} \int_{\Gamma_0} g(z-E)(z-H_{\theta,d}(k))^{-1} \ \mathrm{d}z\\
        g(H_{k,\text{trunc}}-E)=\frac{1}{2\pi i} \int_{\Gamma_0} g(z-E)(z-H_{k,\text{trunc}})^{-1} \ \mathrm{d}z
\end{split}
\end{equation}
where $\Gamma_0=\partial \mathcal{D} $ is a smooth contour and the enclosed domain $\mathcal{D}$ contains the spectrum of Bloch transformed Hamiltonian. The $\Gamma_0$ satisfies that for any $z\in \Gamma_0$, $|z-H|\geq d$ for some $d>0$. Due to the perturbation theory, the spectrum radius of $H_{\theta,d}(k)$ and $H_{k,\text{trunc}}$ are of similar size when $c$ is large enough. We are able to find one contour $\Gamma_0=\partial \mathcal{D}$ such that $\mathcal{D}$ contains spectrum of $H_{\theta,k}$ and $H_{k,\text{trunc}}$. To simplify notation, we use $g_E(x)$ to denote the function $g(x-E)$, and then 
\begin{equation}
    \begin{split}
    & \left |[g_E(H_{\theta,d}(k))-g_E(H_{k,\text{trunc}})]_{\alpha,\alpha} \right| \\
    &\leq 
        \left | \frac{1}{2\pi i}  \int_{\Gamma_0} g_E(z)(z-H_{\theta,d}(k))^{-1}(H_{\theta,d}(k)-H_{k,\text{trunc}})[(z-H_{k,\text{trunc}})^{-1}]_{\alpha,\alpha} \right |\\
        &\leq \frac{C_{\Gamma_0,g}}{2\pi} \sup_{z\in \Gamma_0}|[(z-H_{\theta,d}(k))^{-1}(H_{\theta,d}(k)-H_{k,\text{trunc}})(z-H_{k,\text{trunc}})^{-1}]_{\alpha,\alpha}|\\
        &\leq \frac{C_{\Gamma_0, g}C_{\Gamma_0}}{2d^2\pi} \sup_{\beta,\beta'\in \mathcal{A}^{s}} |[H_{\theta,d}(k)-H_{k,\text{trunc}}]_{\beta,\beta'}|,
    \end{split}
\end{equation}
where the last inequality uses the distance between $z$ and spectrum of Hamiltonian is at least $d$ and  $C_{\Gamma_0}$ depends on $\Gamma_0$ in \cref{equ:h_cond} and intralayer coupling. 
Note that we use $|\Gamma_0|$ to denote the length of the smooth contour and the constant $C_{\Gamma_0,g}:= |\Gamma_0| \|g\|_{L^{\infty}(\Gamma_0)}$. The last inequality is derived from the Combes-Thomas estimate~\cite{kong2023modeling,fischbacher2023lower} and Schur's test. The derivation of $H_{\theta,k}$ illustrates that
\begin{equation}
    \left |[H_{\theta,d}(k)]_{\alpha,\beta} \right| \leq C e^{-\gamma \min\{ |\tau_{\alpha}-\tau_{\beta}|, |\Gamma|-|\tau_{\alpha}-\tau_{\beta}| \} }.
\end{equation}
Through Combes-Thomas estimate~\cite{kong2023modeling,fischbacher2023lower}, we have
\begin{equation}
    \left |[(z-H_{\theta,d}(k))^{-1}]_{\alpha,\beta} \right| \leq \frac{C}{d} e^{-\gamma' \min\{ |\tau_{\alpha}-\tau_{\beta}|, |\Gamma|-|\tau_{\alpha}-\tau_{\beta}| \} },
\end{equation}
and
\begin{equation}
    \left |[(z-H_{k,trunc})^{-1}]_{\alpha,\beta} \right| \leq \frac{C}{d} e^{-\gamma' \min\{ |\tau_{\alpha}-\tau_{\beta}|, |\Gamma|-|\tau_{\alpha}-\tau_{\beta}| \} }
\end{equation}
where $|\Gamma|$ is the length of supercell. Indeed, we have that

\begin{equation}
    \begin{split}
        &\sup_{z\in \Gamma_0}|[(z-H_{\theta,d}(k))^{-1}(H_{\theta,d}(k)-H_{k,\text{trunc}})(z-H_{k,\text{trunc}})^{-1}]_{\alpha,\alpha}|\\
        =& \sup_{z\in \Gamma_0}|\sum_{\beta,\beta'}(z-H_{\theta,d}(k))_{\alpha,\beta}^{-1}(H_{\theta,d}(k)-H_{k,\text{trunc}})_{\beta,\beta'}(z-H_{k,\text{trunc}})^{-1}_{\beta',\alpha}| \\
         \leq& \sup_{z\in \Gamma_0}\sum_{\beta,\beta'}|(z-H_{\theta,d}(k))_{\alpha,\beta}^{-1}||(z-H_{k,\text{trunc}})^{-1}_{\beta',\alpha}| \sup_{\beta,\beta'} |(H_{\theta,d}(k)-H_{k,\text{trunc}})_{\beta,\beta'}|.
    \end{split}
\end{equation}

Note that
\begin{equation}
\begin{split}
    &\sum_{\beta,\beta'}|(z-H_{\theta,d}(k))_{\alpha,\beta}^{-1}||(z-H_{k,\text{trunc}})^{-1}_{\beta',\alpha}|\\
    =&\left( \sum_{\beta}|(z-H_{\theta,d}(k))_{\alpha,\beta}^{-1}| \right) \left( \sum_{\beta'} |(z-H_{k,\text{trunc}})^{-1}_{\beta',\alpha}|  \right).
\end{split}
\end{equation}
By the Combes-Thomas estimate, the sum can be upper bounded by a constant $C_{\Gamma_0}$.


Recall the definition of Bloch transformed Hamiltonian and assumption of interlayer coupling \cref{equ:h_cond}, we can derive that

\begin{equation}
\begin{split}
    |[H_{\theta,d}(k)-H_{k,\text{trunc}}]_{\beta,\beta'}| &= |\sum_{|R| > cp:R\in \mathcal{R}^{s}} h(R+\tau_{\beta}-\tau_{\beta'})e^{-ik[\Tilde{R}+\tau_{\beta}-\tau_{\beta'}]})|\\
    &\leq \frac{h_0e^{\gamma p}}{p} \int_{cp}^{\infty} e^{-\gamma r} \ \mathrm{d}r= \frac{h_0}{\gamma p} e^{\gamma p-\gamma cp}.
\end{split}
\end{equation}
Therefore, we derive
\begin{equation}
    B_2 \leq \frac{C_{\Gamma_0,g}h_0 N^2e^{-\gamma (c-1)p}}{2\gamma d^2 p \pi} \cdot \frac{|\mathcal{I}_{2,d}|}{|\Gamma^{*}|}.
\end{equation}
\end{proof}

\section{Derivation of Bloch Transformed Hamiltonian in coupled chain model}
\label{sec:Hk}
In this section, we derive the Bloch transformed Hamiltonian of commensurate coupled chain. We summarize this result in the following  \cref{thm:BlochH}. 

\begin{theorem}
\label{thm:BlochH}
    Suppose $\mathcal{G}$ and $\mathcal{G}^{-1}$ are Bloch transform and its inverse defined in \cref{eq:G} and \cref{equ:Ginverse} for supercell lattice $\mathcal{R}^{s}$ and indices set $\mathcal{A}^s$, $H$ is a tight binding Hamiltonian for a commensurate coupled chain defined as in \cref{equ:commensurateH}. Then for any $\psi \in \ell^{2}(\mathcal{R}^{s}\times \mathcal{A}^{s})$ and $k\in \Gamma^{*}$, there exists a $|\sci|\times|\sci|$ Hermitian matrix $H(k)$ denoted as $[\mathcal{G}H\mathcal{G}^{-1}](k)$ such that
    \begin{equation}
\label{equ:BtransH2}
    [\mathcal{G}H\psi](k)=H(k)[\mathcal{G}\psi] (k).
\end{equation}
Moreover, for $\alpha, \alpha' \in \mathcal{A}^{s}$ representing indices in the same layer
\begin{equation}
    H(k)_{\alpha,\alpha'}=\widecheck{h}_{\alpha,\alpha'}(k):=\sum_{\Tilde{R}\in \mathcal{A}^s} h_{intra}(\Tilde{R}+\tau_{\alpha}-\tau_{\alpha'})e^{-ik[\Tilde{R}+\tau_{\alpha}-\tau_{\alpha'}]},
\end{equation}
and for $\alpha,\beta \in \mathcal{A}^{s}$ representing indices in different layers,
\begin{equation}
    H(k)_{\alpha,\beta}=\widecheck{h}_{\alpha,\beta}(k):=\sum_{\Tilde{R}\in \mathcal{R}^s} h_{inter}(\Tilde{R}+\tau_{\alpha}-\tau_{\beta})e^{-ik[\Tilde{R}+\tau_{\alpha}-\tau_{\beta}]},
\end{equation}
where $h_{intra}$ and $h_{inter}$ are intralayer and interlayer coupling functions.
\end{theorem}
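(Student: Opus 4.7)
The plan is a direct computation: apply the Bloch transform $\mathcal{G}$ to $H\psi$, use translation invariance of the hopping to reduce one lattice sum to a Fourier-series-type sum over relative displacements, and then recognize the leftover factor as $[\mathcal{G}\psi](k)$. The matrix entries of $H(k)$ will fall out by identification.

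First, I would establish the key structural fact underlying the whole argument: $H(R\alpha, R'\alpha')$ depends on $R$ and $R'$ only through the difference $R - R' \in \mathcal{R}^s$. This is immediate for the intralayer hopping from \eqref{equ:commensurateH}, and for the interlayer hopping it follows from the explicit form $H(R\alpha,R'\alpha') = h_{\alpha,\alpha'}^s(|R + \tau_\alpha + d(-\delta_{i1}+\delta_{i2}) - R' - \tau_{\alpha'}|)$, since the shift $d$ and the $\tau$'s do not depend on $R$ or $R'$ individually. Thus one can write $H(R\alpha,R'\alpha') = \tilde H_{\alpha,\alpha'}(R-R')$ for some function $\tilde H_{\alpha,\alpha'}$ on $\mathcal{R}^s$, where $\tilde H$ has exponential decay in the interlayer case and compact support in the intralayer case; in particular the sums in what follows converge absolutely for any $\psi \in \ell^2(\mathcal{R}^s \times \mathcal{A}^s)$ via Young's inequality.

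Next, starting from the definition \eqref{eq:G} of the Bloch transform and the action \eqref{eq:TB_H} of $H$, I would compute
\begin{equation*}
[\mathcal{G}(H\psi)](k)(\alpha) = \frac{1}{|\Gamma^*|^{1/2}} \sum_{R \in \mathcal{R}^s} \sum_{R'\alpha'} \tilde H_{\alpha,\alpha'}(R-R')\,\psi(R'\alpha')\,e^{-ik(R+\tau_\alpha)}.
\end{equation*}
Changing variables $\tilde R = R - R'$ and using absolute convergence to interchange the order of summation, the right-hand side factors as
\begin{equation*}
\sum_{\alpha' \in \mathcal{A}^s} \left( \sum_{\tilde R \in \mathcal{R}^s} \tilde H_{\alpha,\alpha'}(\tilde R)\,e^{-ik(\tilde R + \tau_\alpha - \tau_{\alpha'})} \right) \cdot \frac{1}{|\Gamma^*|^{1/2}} \sum_{R' \in \mathcal{R}^s} \psi(R'\alpha')\,e^{-ik(R' + \tau_{\alpha'})},
\end{equation*}
where the second factor is precisely $[\mathcal{G}\psi](k)(\alpha')$. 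Defining $H(k)_{\alpha,\alpha'}$ to be the parenthesized sum then yields \eqref{equ:BtransH2} and reduces the proof to identifying this sum with the explicit formulas claimed. For intralayer indices, $\tilde H_{\alpha,\alpha'}(\tilde R)$ is nonzero only for $\tilde R \in \{-p,0,p\}$ (up to a relabeling consistent with the intralayer definition in Section \ref{sec:commensurate}) and equals $h_{\text{intra}}(\tilde R + \tau_\alpha - \tau_{\alpha'})$ in the notation of the theorem statement; for interlayer indices $\tilde H_{\alpha,\beta}(\tilde R) = h_{\text{inter}}(\tilde R + \tau_\alpha - \tau_\beta)$ (absorbing the shift $d$ into the $\tau$'s as in Section \ref{sec:commensurate}). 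Substitution gives the two stated formulas.

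Finally, Hermiticity of $H(k)$ will follow from Hermiticity of $H$ on $\ell^2(\Omega^s)$: from $H(R\alpha,R'\alpha') = \overline{H(R'\alpha',R\alpha)}$ one deduces $\tilde H_{\alpha,\alpha'}(\tilde R) = \overline{\tilde H_{\alpha',\alpha}(-\tilde R)}$, and substituting this into the definition of $H(k)_{\alpha,\alpha'}$ and replacing $\tilde R \mapsto -\tilde R$ in the sum yields $H(k)_{\alpha,\alpha'} = \overline{H(k)_{\alpha',\alpha}}$. I expect the only real subtlety to be careful bookkeeping of the phase factors $e^{-ik\tau_\alpha}$ and the shift $d$ so that the identification with the claimed formulas is exact; everything else is routine once the translation-invariance observation and absolute convergence are in hand.
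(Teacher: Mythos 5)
Your proof is correct and follows essentially the same route as the paper's: apply $\mathcal{G}$ to $H\psi$, exploit translation invariance $H(R\alpha,R'\alpha') = \tilde H_{\alpha,\alpha'}(R-R')$, change variables $\tilde R = R - R'$, and factor out $[\mathcal{G}\psi]_{\alpha'}(k)$. The two presentations differ only cosmetically: the paper writes out the diagonal block $\mathcal{G}H_{11}\psi_1$ and the off-diagonal block $\mathcal{G}H_{12}\psi_2$ separately and states the remaining blocks are analogous, whereas you handle all $\alpha,\alpha'$ in one pass by abstracting the translation invariance up front, which is a bit cleaner. You also give an explicit argument for Hermiticity of $H(k)$ from Hermiticity of $H$ via $\tilde H_{\alpha,\alpha'}(\tilde R) = \overline{\tilde H_{\alpha',\alpha}(-\tilde R)}$ and the substitution $\tilde R \mapsto -\tilde R$; the paper asserts Hermiticity in the theorem statement but does not spell this step out in the proof, so your treatment is actually slightly more complete on that point.
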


\begin{proof}

Recall the Bloch transform $\mathcal{G}:l^{2}(\mathcal{R}^{s}\times \mathcal{A}^{s}) \to C(\Gamma^{*}) \times \mathcal{A}^{s}$ is defined by
\begin{equation}
    [\mathcal{G}\psi]_{\alpha}(k)=\frac{1}{|\Gamma^{*}|^{\frac{1}{2}}}\sum_{R\in \scl} \psi(R\alpha) e^{-ik \cdot (R+\tau_{\alpha})},
\end{equation}
where $\widecheck{\psi}_{\alpha}(k)$ is used to denote $[\mathcal{G}\psi]_{\alpha}(k)$. To derive the Bloch transformed Hamiltonian, we are going to calculate
\begin{equation}
\label{equ:EBlochTrans}
    \mathcal{G}H\begin{bmatrix}
\psi_{1} \\
 \psi_{2}
\end{bmatrix}=\mathcal{G}\begin{bmatrix}
H_{11} & H_{12} \\
H_{21} & H_{22} 
\end{bmatrix}
\begin{bmatrix}
\psi_{1} \\
 \psi_{2}
\end{bmatrix}=\begin{bmatrix}
\mathcal{G}H_{11}\psi_1 + \mathcal{G} H_{12} \psi_2 \\
\mathcal{G}H_{21}\psi_1 + \mathcal{G} H_{22} \psi_2
\end{bmatrix}
\end{equation}
where $\psi_{i}$ is used to denote wavefunction in $i$-th layer and $H_{ij}$ denotes one block of Hamiltonian describing coupling between layer $i$ and layer $j$. We can derive the exact formula for $\mathcal{G}H_{11}\psi_1 + \mathcal{G} H_{12} \psi_2$ in \cref{equ:EBlochTrans}. In the derivation, we use $h_{intra}$ to denote intralayer coupling function and it is another representation of the model described in \cref{eq:hop0}. Indeed, $h_{intra}$ can be defined as,

\begin{equation}
    \label{equ:hintra}
    h_{intra}(R+\tau_{\alpha}-\tau_{\alpha'}) = \begin{cases}
       t_{\alpha,\alpha'},\, & R=a, \\
       t_{\alpha',\alpha},\, & R=-a, \\
       \epsilon_{\alpha,\alpha'}, \, &R=0,
       \\ 0, \, &\text{else},
    \end{cases}
\end{equation}
where $a$ is the length of one supercell. We start by calculating $\mathcal{G} H_{11} \psi_1(k)$ as
\begin{align*}    &[\mathcal{G}H_{11}\psi_1]_\alpha(k)\\
&= \frac{1}{|\Gamma^{*}|^{\frac{1}{2}}}\sum_{R\in \mathcal{R}^{s}} [ H_{11}\psi_{1}]({R\alpha})e^{-ik \cdot (R+\tau_{\alpha})}\\
    &=\frac{1}{|\Gamma^{*}|^{\frac{1}{2}}}\sum_{\substack{R \in \mathcal{R}^{s} \\ R'\in \mathcal{R}^{s}, \alpha'}} h_{intra}(R+\tau_{\alpha}-(R'+\tau_{\alpha'}))e^{-ik(R'+\tau_{\alpha'})} \psi_{1}(R'\alpha')e^{-ik[R+\tau_{\alpha}-R'-\tau_{\alpha'}]}\\
    &=\sum_{\Tilde{R}=R-R',\alpha'} h_{intra}(\Tilde{R}+\tau_{\alpha}-\tau_{\alpha'})e^{-ik[\Tilde{R}+\tau_{\alpha}-\tau_{\alpha'}]}\widecheck{\psi_{1,\alpha'}}(k)\\
    &=\sum_{\alpha'}\left(\sum_{\Tilde{R}=R-R'} h_{intra}(\Tilde{R}+\tau_{\alpha}-\tau_{\alpha'})e^{-ik[\Tilde{R}+\tau_{\alpha}-\tau_{\alpha'}]}\right)\widecheck{\psi_{1,\alpha'}}(k)\\
    &=:\sum_{\alpha'} \widecheck{h}_{\alpha,\alpha'}\widecheck{\psi_{1,\alpha'}}(k)
\end{align*}
where $\widecheck{\psi_{1,\alpha}}$ is the $\alpha$ component of the Bloch transform of $\psi$. Here $\widecheck{h}_{\alpha,\alpha'}$ is used to denote $\sum_{\Tilde{R}=R-R',\alpha'} h_{intra}(\Tilde{R}+\tau_{\alpha}-\tau_{\alpha'})$. The third identity holds due to the fact that the $\sum_{R \in \mathcal{R}^{s}, \alpha'} h_{intra}(R+\tau_{\alpha}-(R'+\tau_{\alpha'}))e^{-ik[R+\tau_{\alpha}-R'-\tau_{\alpha'}]}$ does not depend on $R'$ 
. 

For the off-diagonal block, we use $h_{inter}$ to denote the interlayer coupling function and can similarly derive that,
\begin{align*}
    &\mathcal{G}H_{12}\psi_{2}= \frac{1}{|\Gamma^{*}|^{\frac{1}{2}}}\sum_{R\in \mathcal{R}^{s}} [H_{12}\psi_{2}]({R\alpha})e^{-ik \cdot (R+\tau_{\alpha})}\\
    &=\frac{1}{|\Gamma^{*}|^{\frac{1}{2}}}\sum_{\substack{R \in \mathcal{R}^{s} \\ R'\in \mathcal{R}^{s}, \beta}}
h_{inter}(R+\tau_{\alpha}-(R'+\tau_{\beta}))e^{-ik(R'+\tau_{\beta})} \psi_{2}(R'\beta)e^{-ik[R+\tau_{\alpha}-R'-\tau_{\beta}]}\\
    &=\sum_{\Tilde{R}=R-R' \in \mathcal{R}^{s}, \beta} h_{inter}(\Tilde{R}+\tau_{\alpha}-\tau_{\beta})e^{-ik[\Tilde{R}+\tau_{\alpha}-\tau_{\beta}]}\widecheck{\psi_{2,\beta}}(k)\\
    &=\sum_{\beta}\left(\sum_{\Tilde{R}=R-R' \in \mathcal{R}^{s}} h_{inter}(\Tilde{R}+\tau_{\alpha}-\tau_{\beta})e^{-ik[\Tilde{R}+\tau_{\alpha}-\tau_{\beta}]}\right)\widecheck{\psi_{2,\beta}}(k)\\
    &=:\sum_{\beta} \widecheck{h}_{\alpha,\beta}\widecheck{\psi_{2,\beta}}(k).
\end{align*}

\end{proof}

\end{document}


\maketitle

\appendix
\subsection{Universal Approximation for the Operator with Encoder-Decoder neural network}

We now state the approximation theorem for CNN-based autoencoder-decoder structures. 

We use CNN's as the encoder and decoder structures. So far, no approximation theory agrees with the generic settings of CNN's. We first state various existing theorems on approximation theory of CNN's, and discuss the pros and cons of these theorems. 

\cite{Yarotsky2018CNNApprox} gives the analog of universal approximation theorem for CNN's with arbitrary input feature dimension and downsampling. 
\begin{definition}
\label{def:hbsps}
Define lattice $Z_L = \{k \in \mathbb{Z}^{\nu} | \|k\|_{\infty} \le L\}$. 
Define $V = L^2(\mathbb{R}^{\nu},\mathbb{R}^{d_V})$ as the Hilbert space of maps $\Phi : \mathbb{R}^{\nu} \rightarrow \mathbb{R}^{d_v}$ such that $\int |\Phi(\gamma)|^2 d\gamma < \infty$, and $V_{\lambda, \Lambda}$ as the Hilbert space of functions defined on a lattice $\lambda \mathbb{Z}^{\nu}$ with spatial cutoff $\Lambda$, i.e. 
$$
    V_{\lambda,\Lambda} = \{ \Phi \in L^2(\lambda  \mathbb{Z}^{\nu} , \mathbb{R}^{d_V}) | \Phi(\lambda k) = 0 \text{ if } k \not \in Z_{\lfloor \frac{\Lambda}{\lambda} \rfloor}\}.
$$

\end{definition}

\begin{definition}
\label{def:convnet1}
Let integer $L_{\text{rf}}$ be the CNN receptive field parameter, integer $s$ be the stride, and integers $d_1=d_v,d_2,\cdots,d_T,d_{T+1} = 1$ be the feature dimensions for each layer. Define a convnet with downsampling as follows: 
$$\hat f: V \xrightarrow{P_{\lambda, \lambda L_{1,T}}} V_{\lambda, \lambda L_{1,T}} (=W_1) \xrightarrow{\hat f_1} W_2 \xrightarrow{\hat f_2} \cdots \xrightarrow{\hat f_T} W_{T+1} (\cong \mathbb{R}) $$
with range parameters $L_{t,T} = 
\begin{cases}
L_{\text{rf}}(1+s+s^2+\cdots+s^{T-t-1}) ,&t < T \\
0, &t=T,T+1
\end{cases}
$ and layer operation formulas
$$\hat f_t(\Phi)_{\gamma n} = \sigma(\sum_{\theta \in Z_{L_{\text{rf}}}} \sum_{k=1}^{d_t} w_{n\theta k}^{(t)} \Phi_{s\gamma +\theta,k} + h_n^{(t)}), \gamma\in Z_{L_{t+1}}, n = 1,2,\cdots,d_{t+1}$$
$$\hat f_{T+1}(\Phi) = \sum_{k=1}^{d_{T}} w_{n k}^{(T)} \Phi_{k} + h_n^{(T)}$$
where the intermediate subspaces are defined as $W_t = L^2(s^{t-1}\lambda \mathbb{Z}_{L_t,T}, \mathbb{R}^{d_t})$. 
\end{definition}

More specifically, such a convnet is characterized by parameters $s, \lambda, L_{\text{rf}}, T, d_1,\cdots,d_T $ and coefficients $h_n^{(t)}$ and $w_{n\theta k}^{(t)}$. We have the following theorem: 

\begin{theorem}
\label{thm:CNNapprox}
If $f:V \rightarrow \mathbb{R}$ is continuous in the norm topology, then it is a limit point of convnets with downsampling, i.e. for any $s \le 2L_{\text{rf}}+1$, any compact set $K \in V$, any $\epsilon >0$, $\lambda_0>0$ and $\Lambda_0>0$ there exists a convnet with downsampling $\hat f$ with stride $s$, receptive field parameter $L_{\text{rf}}$, depth $T$ and spacing $\lambda \le \lambda_0$ such that $\lambda L_{1,T} \ge \Lambda_0$ and $\sup_{\Phi \in K} \|f(\Phi) - \hat f(\Phi) \| < \epsilon$.
\end{theorem}

We remark that $s \le 2L_{\text{rf}}+1$ is required so that the stride size is no larger than the size of filter windows, i.e. all information from previous layer is passed to the next layer. We further remark that \ref{thm:CNNapprox} easily extends to the case of multidimensional output, since it can be written as the stack of multiple convnets with one dimensional output. However, there is no estimate on the approximation accuracy with respect to depth, width and/or number of coefficients. 

Another theorem from \cite{Zhou2018CNNApprox} gives an estimate on the approximation accuracy for one-dimensional CNN's. 

\begin{theorem}
\label{thm:CNNapprox2}
Let $s$ be the filter length and $d$ be the input dimension. Assume $2 \le s \le d$. Let $f = F|_{\Omega}$ where $\Omega = [-1,1]^d$ and $F \in H^r(\mathbb{R}^d)$ with $r > 2+d/2$. Then there exists a CNN $\hat f$ with depth $J$ such that
$$\sup_{x \in \Omega} |f(x) - \hat f(x)| \le c\|F\|\sqrt{\log J}(1+J)^{\frac{1}{2}+\frac{1}{d}}$$
where $c$ is an absolute constant and $\|\cdot\|$ denotes the Sobolev norm on the Sobolev space $H^r(\mathbb{R}^d)$.
\end{theorem}

We remark that this theorem is only for one-dimensional CNN's, i.e. CNN's with one dimensional filters, with only one dimensional feature each layer. It is much further away from the real CNN people are using, with multiple channels each layer and often 2-dimensional filters. Moreover, the proof of \ref{thm:CNNapprox2} relies on results on ridge approximation that doesn't extend to the multidimensional feature case. Therefore in this work we only use \ref{thm:CNNapprox} as the CNN approximation theory, which is our basis for the autoencoder-decoder approximation theory. 

We now state the approximation theorem for CNN-based autoencoder-decoder structures. 

\begin{theorem}
Suppose there exist a homeomorphism f which is a mapping from $R^{n}$ to $R^{m}$ where $m<n$ and g is a continuous mapping from $R^{m+1}$ to $R^n$ such that g(f(x),0)=x. Then there exist an Encoder-decoder structure s.t
\begin{equation}
    |ED(x,t)-g(f(x),t)|\leq \epsilon
\end{equation}
\end{theorem}

To construct the proof, we need to use the neural network to approximate the three part of auto-encoder structure.
\begin{equation}
    f: x \to f(x)
\end{equation}
$f(x)$ is continuous mapping, thus {\color{red} by \ref{thm:CNNapprox}} there exists a {\color{red} C}NN, denoted as $E$ such that
\begin{equation}
    |f(x)-E(x)|\leq \epsilon
\end{equation}

A fully connected neural network is used between encoder and decoder. To reconcile theory and practice learning process, I is used to denote the fully connected neural network which is used to approximate the identity operator, i.e

\begin{equation}
    |I(u,t)-(u,t)|\leq \epsilon
\end{equation}

Similarly, there exist a {\color{red} CNN} decoder $D$ such that

\begin{equation}
    |g(u,t)-D(u,t)|\leq \epsilon
\end{equation}

In the first encoder step, $(x,t)$ is mapped into $(f(x),t)$. Therefore, we have
\begin{equation}
    |(f(x),t)-(E(x),t)|\leq \epsilon
\end{equation}

Estimate the error,

\begin{align*}
    &|D\circ I(E(x),t)-g(f(x),t)|\\&=|D\circ I (E(x),t)-g\circ I(E(x),t)+g\circ I(E(x),t)-g(E(x),t)+g(E(x),t)-g(f(x),t)|\\
    &\leq |D\circ I (E(x),t)-g\circ I (E(x),t)|+|g\circ I(E(x),t)-g(E(x),t)|+|g(E(x),t)-g(f(x),t)|\\
    &\leq \epsilon + C\epsilon+ C \epsilon
\end{align*}
Here C comes from the continuity of function g.

\section{Analysis and Numerical Scheme for SD-LDOS}
\label{sec:rdef}
We would like to give a derivation of numerical scheme for the stacking LDOS we defined in the paper in this section. Recall the definition of SD-LDOS in \ref{eq:LDOS},

\begin{equation}
\rho_{\theta,d}(E) = \sum_{o \in R_0} \sum_n |\psi^n_o|^2 \delta(E - E_n)
\label{eq:LDOS}
\end{equation}
where $o$ are the orbitals in the ``central'' unit-cell at position $R_0 = 0$, $n$ indexes the eigenstates of the system (summing over band indices and the BZ momentum), and $\{\psi^n_o$,$E_n\}$ are obtained from a Hamiltonian $H(\theta,d)$. The data generated in the paper for 1-D moir\'e structure are all rational numbers, indicating the coupled chains are commensurate structure. The unit cell length is set to be 1 and then there exist $p,q$ relative prime to each other satisfying,
\begin{equation}
    p=q(1-\theta)
\end{equation}
The supercell length is p and the moir\'e reciprocal space is [0,$\frac{2\pi}{p}$]. We would like to use supercell scheme to calculate the SD-LDOS which is corresponding to uniform sampling in the reciprocal space. Suppose we use M supercell, then there are $n=M(p+q)$ eigenstates in the system. Each state can also be indexed by a quasi-momentum vector $k_i$ and a band index $l$

Start with the definition of SD-LDOS,
\begin{align*}
            \rho_{\theta,d}(E) &= \sum_{o \in R_0} \sum_n |\psi^n_o|^2 \delta(E - E_n)\\
            &=\sum_{o \in R_0} \sum_{i,l} |\psi^{k_i,l}_o|^2 \delta(E - E_{k_i,l})
\end{align*}
We define the $\widecheck{\psi} ^l_{k_i}$ to be the eigenvector of momentum space Hamiltonian of the commensurate system and $\widecheck{\psi} ^l_{k_i,o}$ is the o th component in the vector.
Since numerically the eigenvector is always normalized to have $L_2$ norm to be 1, 

\begin{equation}
    M|\psi^{k_i,l}_o|^2=|\widecheck{\psi} ^l_{k_i,o}|^2
\end{equation}

Therefore the SD-LDOS is expressed as

\begin{equation}
    p \sum_{o \in R_0} \sum_{i,l} \frac{1}{2\pi}|\widecheck{\psi} ^l_{k_i,o}|^2 \delta(E - E_{k_i,l}) \Delta k
\end{equation}

where $\Delta=\frac{2\pi}{pM}$ is the sampling gap between two neighboring $k_i$. Replace the $\delta$ function with a normalized Gaussian g,
\begin{equation}
    g(x)=\frac{1}{\sqrt{2\pi}\sigma} e^{-\frac{x^2}{2\sigma^2}}.
\end{equation}

Then the numerical scheme is derived as

\begin{equation}
    p \sum_{o \in R_0} \sum_{i,l} \frac{1}{2\pi}|\widecheck{\psi} ^l_{k_i,o}|^2 g(E - E_{k_i,l}) \Delta k .
\end{equation}

Although a rigorous justification of convergence analysis is beyond the scope of the paper, we would also like to point out that when the uniform sampling is done infinitly dense,  

\begin{align}
    \lim_{\Delta k \to 0 } & p \sum_{o \in R_0} \sum_{i,l} \frac{1}{2\pi}|\widecheck{\psi} ^l_{k_i,o}|^2 \delta(E - E_{k_i,l}) \Delta k\\
    &=p \sum_{o \in R_0} [\sum_
    {l} \frac{1}{2\pi} \int_{0}^{\frac{2\pi}{p}}|\widecheck{\psi} ^l_{k,o}|^2 \delta(E-E_{k,l})dk]
\end{align}

\section{Inverse Problem from Piecewise constant SD-LDOS Scheme}
In this way, numerical SD-LDOS of aligned structure is computed by an explicit operation on real space Hamiltonian without doing integration. The numerical SD-LDOS can be defined as summation of delta function in real space Hamiltonian language, explained in \cref{sec:rdef}. But since $g$ here is polynomial or Gaussian, it is smooth with respect to E. Either we pick $ \widetilde \rho_{\theta,d}(E_{i})$ for each bin or we use the formula $ \frac{1}{\Delta E} \int_{E_{min}}^{E_{max}}\widetilde \rho_{\theta,d}(E)$, we will get an explicit formula for numerical local density of state.
\begin{equation}
    \widetilde \rho_{\theta,d,c}(E)=  \osum   \frac{1}{\Delta E}\sum_{l=1}^{2Mm} G_{E_1,E_2,    \epsilon}^{n}(\lambda_{k_i}^{(l)})   \Tilde{\psi}_{k_i}^{(l)} \otimes \tpsi_{k_i}^{(l),*} (Mm+o,Mm+o) h
\end{equation}
To be specific, if it is to be written in matrix operation form, we have
\begin{equation}
    \widetilde \rho_{\theta,d,c}(E)=    \frac{1}{\Delta E} \osum   G_{E_1,E_2, \epsilon}^{n}(\widetilde H_{d})(Mm+o,Mm+o) h
\end{equation}
where $\widetilde H_d$ is real space Hamiltonian with supercell scheme. If we choose to use average over bin of energy to be our final numerical local density of state, then 
\begin{align}
    \widetilde \rho_{\theta,d,c}(E)&=\frac{h}{\Delta E} \osum \int_{E_{min}}^{E_{max}}\widetilde \rho_{\theta,d}(E)=  \frac{h}{\Delta E} \osum \int_{E_{min}}^{E_{max}}  \sum_{l=1}^{2Mm} g_{E, \epsilon}^{n}(\lambda_{k_i}^{(l)})   \tpsi_{k_i}^{(l)} \otimes \tpsi_{k_i}^{(l),*} (Mm+o,Mm+o)  dE\\
    &=\frac{h}{\Delta E} \osum \sum_{l=1}^{2Mm}\int_{E_{min}}^{E_{max}}   g_{E, \epsilon}^{n}(\lambda_{k_i}^{(l)})   \tpsi_{k_i}^{(l)} \otimes \tpsi_{k_i}^{(l),*} (Mm+o,Mm+o)  dE\\
    &=\frac{h}{\Delta E}  \osum \sum_{l=1}^{2Mm} \int_{E_{min}}^{E_{max}}g_{E, \epsilon}^{n}(\lambda_{k_i}^{(l)})   \tpsi_{k_i}^{(l)} \otimes \tpsi_{k_i}^{(l),*} (Mm+o,Mm+o) dE\\
    &=\frac{h}{\Delta E} \osum G_{E_1,E_2, \epsilon}^{n}(\widetilde H_{d})(Mm+o,Mm+o) 
\end{align}

In particular, if we are using Gaussian to approximate delta function, the G function can be calculated explicitly. Since $g=\frac{1}{\sqrt{2\pi}\epsilon} e^{-\frac{(x-E)^2}{2\epsilon^2}}$, then $G_{E_1,E_2,\epsilon}=\int_{E_{1}}^{E_{2}}g_{E, \epsilon}^{n}(x)dE$. To be specific, we can compute this integral in a clear manner where $x$ is viewed as a parameter.
\begin{align}
    G_{E_1,E_2,\epsilon}&=\int_{E_{1}}^{E_{2}}g_{E, \epsilon}^{n}(x)dE
    =\int_{E_{1-x}}^{E_{2}-x}\sqrt{\frac{C}{\pi}} e^{-c^2 u^2}du=\frac{1}{2}[erf(\frac{E_2-x}{\sqrt{2} \epsilon})-erf(\frac{E_1-x}{\sqrt{2} \epsilon})]\\
    &=\frac{1}{2}[erf((\frac{\Bar{E}-x)+0.5 \Delta E}{\sqrt{2} \epsilon})-erf(\frac{(\Bar{E}-x)-0.5\Delta E}{\sqrt{2} \epsilon})]
\end{align}

Moreover, since the primative function of gaussian when $\epsilon$ goes to zero is approximating the Heaviside function and spectrum is uniformly bounded if range of material parameters are fixed.  For fixed $\epsilon$ and discretization step, the local density of state is explicitly given and if we use polynomial to approxiamte G function, the polynomial is invariant subject to shift. To be specific, 
\begin{equation}
    G_{E_1,E_2, \epsilon}^{n}(\widetilde H_{d})=\widetilde G^{n}(\widetilde H_{d}-\Bar{E})
\end{equation}
Here  $\widetilde G^{n}$ is a sequence of polynomials constructed in advance, only relevant to the way we approximate delta function and energy bin length.

\begin{theorem}
When $N_E>n$ there exist a one to one mapping between Hamiltonian on momentum space $H_k$ and numerical SD-LDOS, $\widetilde \rho_{d,c}(E)$ of aligned structure, computed from tight binding model with single parameter $t$. In particular, there exists a mapping $P_{h,N_D,N_E}$ such that
\begin{equation}
    P_{h,N_D,N_E}: \widetilde \rho_{d,c}(E) \to t
\end{equation}
 $h=\frac{1}{M}$ and M is the size of supercell and $N_D$ is resolution of local configuration and $N_E$ is the resolution of Energy.
\label{thm:inverse}
\end{theorem}
\begin{proof}
For a fixed numerical scheme which is using polynomials to approximate Gaussian which is approximating delta function in computation of local density of state, the number of highest degree of the polynomial is fixed, which is relevant to the order of numerical approximation. Either from momentum space Hamiltonian or real space Hamiltonian viewpoint, we can see that the local density of state is one diagonal term in matrix function. When local density of state is given with $N_E N_D$ resolutions, 
\begin{align*}
    \osum G^{n}(\widetilde H_{d}-\Bar{E})(Mm+o,Mm+o)=\osum \sum_{k=0}^{n} f_{k}(\Bar{E})H_d^{k}(Mm+o,Mm+o)
\end{align*}

is given at $E_j$ and $d_i$, $i=1\dots N_D, j=1\dots N_E$. For each $j$, and $i$ from $1$ to $n$.
\begin{equation}
    \sum_{k=0}^{n} f_{k}(E_i)H_{d_j}^{k}(Mm+1,Mm+1)
\end{equation}
the numerical SD-LDOS can be viewed as a linear system with unknowns \osum $H_{d_j}^{k}(Mm+o,Mm+o), k=1\dots n.$

\begin{lemma}
\label{lemma:inverse}
$G^{n}(\widetilde H_{d}-\Bar{E})=\sum_{k=0}^{n} f_{k}(E)H_d^{k}$ where $f_{k}(E)$ are linear independent functions. Moreover the matrix generated as following is non-singular, i.e, the matrix given by $(S)_{ij}=f_{j}(E_i)$ is non-singular matrix, i=1\dots n and j=1\dots n where n is the order of polynomial used to approximate erf function.  
\end{lemma}
\begin{proof}
Recall that the expansion of erf function is 
\begin{align}
    erf(x)&=\frac{2}{\sqrt{\pi}}\sum_{k=0}^{\infty} \frac{(-1)^{k} x^{2k+1}}{n!(2k+1)} , erf_n(x)=\frac{2}{\sqrt{\pi}}\sum_{k=0}^{n} \frac{(-1)^{k} x^{2k+1}}{n!(2k+1)}
\end{align}
where $erf_n$ represents n th order polynomial approximation. When n th order approximation is used, the $G_{E_1,E_2,\epsilon}$ can be expanded using the polynomial approximation. 
\begin{align}
    G_{E_1,E_2,\epsilon}(x)&=\frac{1}{2}[erf_n(\frac{(\Bar{E}-x)+0.5 \Delta E}{\sqrt{2} \epsilon})-erf_n(\frac{(\Bar{E}-x)-0.5\Delta E}{\sqrt{2} \epsilon})]\\
    &=\frac{1}{2}\frac{2}{\sqrt{\pi}}[\sum_{k=0}^{n} \frac{(-1)^{k} (\frac{(\Bar{E}-x)+0.5 \Delta E}{\sqrt{2} \epsilon})^{2k+1}}{n!(2k+1)}-\sum_{k=0}^{n} \frac{(-1)^{k} (\frac{(\Bar{E}-x)-0.5 \Delta E}{\sqrt{2} \epsilon})^{2k+1}}{n!(2k+1)} ]
\end{align}
Recall that $(a+b)^{2k+1}=\sum_{i=0}^{2k+1} C_{2k+1}^{i} a^{2k+1-i}b^{i}$ and $(a-b)^{2k+1}=\sum_{i=0}^{2k+1} C_{2k+1}^{i} (-1)^{i}a^{2k+1-i}b^{i}$ and $(a+b)^{2k+1}-(a-b)^{2k+1}$ can be computed with the expansion.
\begin{align}
    (a+b)^{2k+1}-(a-b)^{2k+1}=2\sum_{j=0}^{k} C_{2k+1}^{2j+1} a^{2k-2j}b^{2j+1}
\end{align}
Now apply the formula with $a=\frac{(\Bar{E}-x)}{\sqrt{2} \epsilon}, b=\frac{0.5\Delta E}{\sqrt{2} \epsilon}$ to $G_{E_1,E_2,\epsilon}(x)$ and replace $\Bar{E}$ with y to simplify the notation, 
\begin{align}
    G_{E_1,E_2,\epsilon}(x)=\sum_{j=0}^{n} c_{2j} (x-y)^{2j}
\end{align}
where $c_{2j}$ is constant relevant to $n, j, \epsilon$. Expand the G as a polynomial of x,
\begin{align}
        \sum_{j=0}^{n} c_{2j} (x-y)^{2j}&=\sum_{j=0}^{n} \sum_{i=0}^{2j} (-1)^{i }c_{2j}C_{2j}^{i}y^{2j-i}x^{i}\\&=\sum_{i=0}^{2n}\sum_{\frac{i}{2}\leq j \leq n} c_{2j}C_{2j}^{i}(-1)^{i} y^{2j-i}x^{i}
\end{align}

It can be seen that the coefficients of the polynomial for $x^{i}$ have the form, 
\begin{align}
    f_{i}(y)=\sum_{\frac{i}{2}\leq j \leq n} c_{2j}C_{2j}^{i}(-1)^{i} y^{2j-i}
\end{align}
The coefficients have special properties that the highest order term of $f_{2n-l}(y)$ is $y^{l}$ and there exists coefficients by mathematical induction such that
\begin{align}
f_{2n-l}(y)=a_{2n-l}y^{l}+\sum_{v=0}^{l-1}b_{2n-l,v}f_{2n-v}
\end{align}
where $a_{2n-l}$ is none zero. Therefore the linear independence of $f_0, f_{1},f_2\dots f_{2n}$ is proved due to linear independence of $y^{0}, y^{1}\dots y^{2n}$. Moreover, if we consider a matrix $(S)_{ij}=f_{j}(E_i)$ where $j=1\dots 2n$, $i=1\dots 2n$, it is a nonsingular matrix because after linear combination of columns of $S$ and rescaling, $S$ is turned into Vandermonde matrix $V$,  $(V)_{ij}=E_{i}^{j}$, which is nonsingular. Hence, $S$ is also nonsingular as well.
\end{proof}

Since the coefficient matrix is non-singular matrix, \osum $H_{d_j}^{k}(Mm+o,Mm+o)$ can be solved for each k. Therefore, for any $j\leq N_D$ and $k\leq n$, $\osum H_{d_j}^{k}(Mm+o,Mm+o)$ can be solved from linear system.

Then we are going to show how the $\osum H_{d_j}^{k}(Mm+1,Mm+1)$ can be used to solve for the parameter in tight binding model. Recall real space Hamiltonian is constructed with interlayer components and interlayer components. For the graphene-like bilayer system, two-band semiconductor and two-band metal, when intralayer coupling are parameterized by single variable.  Since $\osum H_{d_j}^{k}(Mm+o,Mm+o)$ is solved by solving the linear system, and in particular, $\osum H_{d_0}^{2}(Mm+o,Mm+o)$ is given. To be specific,
\begin{align}
       \osum H_{d_0}^{2}(Mm+o,Mm+o)&=\osum \sum_{j} H_{d_0}(Mm+o,j)^2  
      = at^2+b(d)
\end{align}

The coefficient $a$ in the expression is non zero since intralyer coupling is not all zero. Moreover, the parameters are assigned with sign when constructing real space hamiltonian. Thus the parameter $t$ can be solved directly. 

\end{proof}

\begin{remark}
 When the number of parameters increase, the proof could be tedious with current method. The problem is relevant to solving system of polynomials, which is a hard problem. Giving values of $\osum H_{d_j}^{k}(Mm+o,Mm+o)$ is equivalently to giving a sequence of polynomials equation from order $0$ to $k$ and the coefficients are functions of configuration space $d$. If we view the interlayer coupling term as one elements, then the polynomials are of integer coefficients.

In modern algebra, solving polynomial systems 
in real number is a hard problem. However, in our case, it could be possible. For example, when there are two parameters, x and y, 
\begin{align}
    H_{d_0}^{2}(Mm+1,Mm+1)&=c_1 x^2+c_2 y^2+a(d)\\
    H_{d}^{3}(Mm+1,Mm+1)&=a_1(d) x^3+a_2(d)x^2y+a_3(d)xy^2\\&+a_4(d) y^3+\dots + a_p(d)x+a_{p+1}(d)y+a_{p+2}(d) 
\end{align}
If needed, we can consider higher order result as well. Since the data is generated from two given parameters in advanced, we know that the system is solvable. Uniqueness is what left to be proved. However the coefficients functions of $d$ are highly relevant to the position of orbitals in each unit cell, which mean it is hard to give a proof for arbitrary bilayer structure.
\end{remark}

\begin{remark}
The interlayer coupling can be covered by taking advantage of varying local configuration of $d$. For fixed $E$, the numerical stacking dependent local density of state or numerical density of state gives an analytical function of d, which is relevant to the modelling of interlayer coupling. Assume we know the local geometry of untwisted bilayer material which is the case in the real DFT computation, the the local density of state is only a analytical function of d, $\nu$ and $R_{0}$ where  $\nu$ and $R_{0}$ are parameters for interlayer coupling. Moreover taking advantage of our \cref{lemma:inverse}, it is only left to show that given $H^{k}(i,i)$ for all $k$ and $i$ or  $\sum_{i} H^{k}(i,i)$ with different $d$, then $R_{0}$ and  $\nu$ can be shown to be uniquely decided. Since $H^{2}(i,i)$ is monotonic function of $\nu$ and $R_{0}$ respectively, if there exist that  $(R_0^{1},\nu_1)$ and $(R_0^{2},\nu_2)$, such that 
\begin{align}
       \osum H_{d}^{2}(Mm+o,Mm+o)&=\osum \sum_{j} H_{d}(Mm+o,j)^2  
      = at^2+b(d)
\end{align}
as a function of $d$, have same value for the two sets of parameters at $N_D$ $d$ values. However, two analytical functions can only have finite many intersections. Therefore, if when $N_D$ is large enough, the target functions still share the same value, then they must be the same function. However, the monotonic property imposes that $R_0^{1}>R_0^{2}$ and $\nu_1 < \nu_2$ and as a result the exponential functions in two functions are linear independent. This contradicts with the argument that the functions are same.

\end{remark}

\section{Remark on error}

There are three types of error to be measured when using NetA+NetB type method.
\begin{itemize}
    \item The neural net is approximation the operator $T_{\theta}P$ where P is inverse operator mapping local density of state to parameters in tight binding model. and $T_{\theta}$ is the mapping to twisted local density of state. The first type of error is error in the step of inverse operator. This is controlled by the continuity.
    \item The error from inverse operator is propagated by neural operator.
    \item The last part is the error due to discretization of twisted local density of state.
\end{itemize}

For one Network type method from end to end, then it is hard to use numerical analysis to give a rigorous error analysis because we have not explicit error bounds for the output of inverse problem. But this error can be bounded by the net A plus net B type error analysis.

\appendix
\section{old remarks}

\begin{remark}
Note that $\rho_{\theta,d}$ as a function of E has singularity, since $\frac{1}{\nabla_{k}\lambda_{k}^{(l)}}$ is infinity when E is local extreme. $\nabla_k \lambda_{k}^{(l)}$ could be zero on local minimum of the bands. The indicator of the emergence of flat bands with respect to LDOS is the intense singularity emerges as a period or a light spot instead of a straight line in the image of LDOS. 
\end{remark}

\begin{remark}
Although what have been defined is local density of state for bilayer material. The definition of local density of state for  multi-layer material can also be established. However, multi-layer material will need different topology of reciprocal space to describe LDOS. In the one dimensional case, the domain of the operator is in locally integrable function on $S^{1}$. If there are p layers, the local density of state is a locally integrable function on $T^{p-1}$ where $T$ is torus. The torus is mainly for showing the local geometry of multi-layer material.
\end{remark}

\begin{remark}
Note here that in real numerical computation, the eigenvector and eigenvalue are both real. However because of the commutation of real space Hamiltonian with permutation matrix as linear transform on complex space, they share the same complex eigenvectors. This is because of the permutation matrix P is normal and Real Hamiltonian is normal as well with $HP=PH$. But no matter use spectral decomposition on complex space or real space, it is the same symmetric matrix is being decomposed. For the analysis purpose, we use the eigenvectors of P as $\psi$.
\end{remark}

\begin{remark}
Another way to see this is from the commutation of two normal matrices. The momentum space Hamiltonian is hermitian and thus normal. The permutation matrix is circulant matrix and thus normal. To be specific, here we used the permutation matrix with resepct to symmetry with in the unit cell, i.e, they symmetry between top layer and bottom layer. It can be shown that the corresponding $\psi(j)=-\psi(j+m)$ for the eigenvector of permutation matrix. All process from complex eigenvectors to real eigenvectors are unitary transformation and does not affect the local density of state.
\end{remark}

\begin{remark}[repeated theorem]
\begin{theorem}
\label{thm:inverse2}
(change this part's notation)There exist a one to one mapping between Hamiltonian on momentum space $H_k$ and numerical local density of state $\widetilde \rho_{d,c}(E,A)$ where A is a finite set and each element in A represents one orbital in the unit cell, computed from material data for Tight binding model. In particular, there exists a mapping $P_{h,N_d,N_E}$ maps numerical local density of state to the parameters of Bilayer materials, i.e,
\begin{equation}
    P_{h,N_D,N_E}: \widetilde \rho_{\theta,d,c}(E) \to t
\end{equation}
Here $h=\frac{1}{M}$ and M is the size of supercell and $N_D$ is resolution of local configuration and $N_E$ is the resolution of Energy.
\end{theorem}
\begin{proof}
The proof is similar to the \cref{thm:inverse}. The only difference is we are solving equations
\begin{align}
    \sum_{j=Mm+1}^{Mm+m} H_{d_j}^{k}(Mm+j,Mm+j)
\end{align}
\end{proof}
\end{remark}

\begin{remark}[Inverse Theorem]
Using momentum space Hamiltonian is also tried in this case. Instead of using lemma for real space hamiltonian, \cref{lemma:inverse} is applied to momentum space hamiltonian. Then we derive $\sum_{i=0}^{M-1} H_{k_i,d}^{v}(m+1,m+1)$ for $v=0\dots n$ and fixed d. In particular, we have  $\frac{2\pi}{M}\sum_{i=0}^{M-1} H_{k_i,d}^{1}(m+1,m+1).$ However, from the construction of momentum space Hamiltonian, 
\begin{align}
    H_{k,d}(i,j)&=\sum_{R=-L,0,L} h(\Delta r_{ij}+R) e^{k \cdot (\Delta r_{ij}+R)}\\
    &=H_0(i,j)e^{k \cdot (\Delta r_{ij})}+H_x(i,j)e^{k \cdot (\Delta r_{ij}+L)}+H_{x}(j,i)e^{k \cdot (\Delta r_{ij}-L)}\\
    &=e^{k \cdot (\Delta r_{ij})}(H_0(i,j)+H_x(i,j)e^{k \cdot L}+H_{x}(j,i)e^{-k \cdot L})
\end{align}
Here L is the length of unit cell equals 1 in our computation. In the end, we derive the same equations like we did in real space Hamiltonian. From the viewpoints of solving system of polynomials, using real space hamiltonian or using momentum space hamiltonian do not matter.
\end{remark}

\begin{corollary}
\label{cor:noinverse}
Given a real space Hamiltonian H, and numerical density of state computed from polynomials of H, if H commute with a translation matrix, all the eigenvectors of H can not be recovered. 
\end{corollary}
\begin{proof}
Because H commute with translation matrix and thus they can be diagonalized at the same time which means that they have save complex eigenvectors. If we don't know the exact dependence of real space Hamiltonian on the parameters $H_0$ and $H_x$, it is impossible to recover the eigenvectors and as a result the real space hamiltonian. The reason is following. We actually know that the eigenvectors computed from numerical algorithm is real. However, we know the exact form of eigenvectors of P, which is translation matrix as P is a special form of ciculant matrix. The eigenvector is of the form $v_j=(1,w^{j},w^{2j},\dots,w^{(2mM-1)j})$ where j goes from 0 to $2mM-1$. The computed numerical eigenvectors are unitary transform of the complex eigenvectors.$(v'_j)=U(v_j)$. And the equation given by local density of state is invariant with respect to the any unitary matrix U.
\end{proof}

\begin{remark}
This corollary shows that to show the general case recovery theorem, we need to consider the delicate structure(block circulant property) of the real space Hamiltonian.
\end{remark}

\bibliographystyle{siamplain}
\bibliography{references}